\newcommand{\Omit}[1]{}
\newcommand{\dm}{\omega} 
\newcommand{\is}{\delta(k)} 
\newcommand{\xhdr}[1]{\subsubsection*{#1}}
\newcommand{\app}[1]{#1}
\newcommand{\noapp}[1]{}
\newcommand{\prevproof}[3]{
{\vskip 0.1in \noindent {\bf Proof of {#1}~\ref{#2}.} {#3} \rule{2mm}{2mm}
\vskip \belowdisplayskip}
}
\newcommand{\eps}{\varepsilon}
\newcommand{\med}{M}
\newtheorem*{theorem*}{Theorem}
\newtheorem{theorem}{Theorem}[section]
\newtheorem{lemma}[theorem]{Lemma}
\newtheorem{proposition}[theorem]{Proposition}
\newtheorem{claim}[theorem]{Claim}
\newtheorem{observation}[theorem]{Observation}
\newtheorem{definition}{Definition}[section]
\newtheorem{remark}[theorem]{Remark}
\begin{document}

\title{Dynamics of Evolving Social Groups}

\author{
Noga Alon
\thanks{
Tel-Aviv University and Microsoft Research.
}
\and{ Michal Feldman \footnotemark[1] }
\and{ Yishay Mansour \footnotemark[1] }
\and{
Sigal Oren
\thanks{
Ben-Gurion University of the Negev.
}}
\and{
Moshe Tennenholtz
\thanks{
Technion -– Israel Institute of Technology}}
}

\begin{titlepage}
\maketitle

\begin{abstract}
Exclusive social groups are ones in which the group members decide whether or not to admit a candidate to the group. Examples of exclusive social groups include academic departments and fraternal organizations. In the present paper we introduce an analytic framework for studying the dynamics of exclusive social groups. In our model, every group member is characterized by his opinion, which is represented as a point on the real line. The group evolves in discrete time steps through a voting process carried out by the group's members. Due to homophily, each member votes for the candidate who is more similar to him (i.e., closer to him on the line). An admission rule is then applied to determine which candidate, if any, is admitted. We consider several natural admission rules including majority and consensus.

We ask: how do different admission rules affect the composition of the group in the long term?
We study both growing groups (where new members join old ones) and fixed-size groups (where new members replace those who quit). Our analysis reveals intriguing phenomena and phase transitions, some of which are quite counterintuitive. 
\end{abstract}

\thispagestyle{empty}
\end{titlepage}

%
%

\section{Introduction}
\label{sec:introduction}

Exclusive social groups (a.k.a. clubs) are those in which group members decide which new members to admit. Many of the social groups we are part of and are aware of are in fact exclusive groups. Examples that readily come to mind include academic departments and the National Academy of Sciences, where current members decide which members to accept. Additional examples from different areas of life are abundant, ranging from becoming a Freemason to getting the privilege to live in a condominium or a Kibbutz.

Some exclusive social groups, like academic departments, grow; others, like condominiums, have a fixed size. But many share a similar admission process: each member votes for a candidate he wants to admit to the group, and if the candidate receives sufficient votes he can join. Different groups require candidates to obtain different fractions of the votes to be admitted, from a simple majority to a consensus. Because members who join now affect those who will join in the future, different admission rules can lead to substantially different compositions as the group evolves. In particular, common wisdom suggests that requiring a greater fraction of group members to agree on a candidate increases the homogeneity of the group. But is this really so?

The broad question we address here is how different admission rules affect the composition of the types of members in the group, and how this composition evolves over time. The question comes in two flavors: (i) the growing group model, where the size of the group increases as new members join, and (ii) the fixed-size group model, where newly admitted members replace those who quit.

To answer this question we need to formalize several aspects of the group members and the way they vote. As common in the political and sociological literature (e.g., \cite{roemer-political-competition, degroot-opinions, friedkin-initial-opinions}), we assume that every group member and candidate has an ``opinion,'' which is a real number in the interval $[0,1]$. For example, the opinion can be a political inclination on the spectrum between left and right or for the academic world how theoretical or applied one's research is.

The opinions of members and candidates form the basis for modeling the voting process. We assume that when making a choice between several candidates, each member chooses the one who is the most similar to himself. As opinions are real numbers, similarity can be easily measured by the distance between opinions. This modeling choice is heavily grounded in the literature on homophily (e.g., \cite{mcpherson-homophily} and the references therein), stating that people prefer the company and tend to interact more with others who are more similar to them. In the words of Aristotle, people \emph{``love those who are like themselves.''}

The use of homophily as the driving force behind the members’ votes ties the present paper to two large bodies of literature, one on opinion formation (\cite{ben-naim-opinions,deffuant-opinions,hegselmann-opinions}), the other on cultural dynamics (\cite{axelrod-culture,culture-ec-13,flache2011local}). Both bodies of work aim to understand the mechanisms by which individuals form their opinions (in the cultural dynamics literature it can be opinions on several issues) and how different mechanisms affect the distribution of opinions in society. In both bodies of work it is common to assume that similar individuals have a greater chance of influencing one another. Many of these models, however, are quite difficult to analyze, and therefore most of the literature has restricted its attention to models that operate on a fixed network that does not evolve over time. In a sense, we bypass this difficulty by de-emphasizing the network structure, and instead focus on the aggregate effect of group members choices. This modeling decision enables us to study the evolution of the social group over time as a function of the admission rule applied.

The present paper explores a variety of admission rules and their effects on the opinion distribution as the group evolves. We begin by presenting our results for growing groups, then proceed to fixed-size groups.

\subsection{Growing Social Groups}
Our models fit the following framework: each group member is characterized by his opinion $x_i \in [0,1]$. A group of size $k$ is denoted by $S(k) \in [0,1]^k $. The admission process operates in discrete time steps. At each time step two candidates are considered for admission; their opinions, $y_1,y_2$, are drawn from the uniform distribution $\mathcal{U}[0,1]$.
Each member, $i$, votes for the candidate that is more similar to him, that is, a candidate $j$, minimizing $|x_i-y_j|$. Finally, based on the members' votes, an admission rule is used to determine which candidate is accepted to the group.

\xhdr{Consensus and majority}
Two natural admission rules that come to mind are consensus (used, for example, by the Freemasons) and majority. In the first case, a candidate is accepted only if he receives the votes of all the members, so that in some steps no candidate is admitted. In the second case, the candidate preferred by the majority of members is admitted. As noted, it is natural to expect that requiring a greater fraction of the members to prefer one candidate over the other can only increase the homogeneity of the group. Our work suggests that this is not always the case. Indeed, if the consensus admission rule is applied, as the group grows only candidates who are more and more extreme can join. The reason for this is that all group members prefer the same candidate only when both candidates are close to one of the extremes. This is illustrated in Figure \ref{fig:consensus}, where each member is positioned on the $[0,1]$ interval according to his opinion. In the group depicted in the figure, if a candidate is admitted then it has to be the case that both candidates lie in one of the gray colored intervals.

\begin{figure}[t]
\begin{center}
\subfigure[\emph{Consensus.}]{
\includegraphics[width=.35\textwidth]{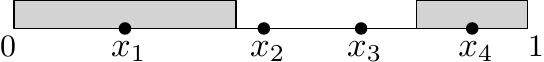}
 \label{fig:consensus}
}
\subfigure[\emph{Majority.}]{
\includegraphics[width=.35\textwidth]{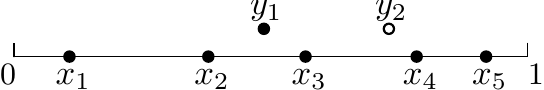}
 \label{fig:majority}
}
\vspace*{-0.1in}
\caption{ Different admission rules: (a) for consensus, the intervals in which a new candidate can be accepted (marked in gray) are determined by the location of $x_1$ and $x_4$. (b) for the majority rule, the candidate closer to the median, $x_3$, will be accepted.
}
\end{center}
\vspace*{-0.2in}
\end{figure}

Under the majority admission rule, as the group grows, the distribution of opinions of its members converges to a triangle distribution, with the median located at 1/2. Convergence to this distribution happens regardless of the starting conditions, that is, the distribution of opinions in the initial group. Unlike in the case of consensus, when the majority rule is applied a new candidate joins the group at every step. To understand who this candidate is, we inspect the mechanics of the majority admission rule in greater detail. Consider two candidates, $y_1$ and $y_2$, and assume that $y_1<y_2$. Observe that because each member votes for the candidate who is more similar (closer) to himself, all the members to the left of $(y_1+y_2)/2$ vote for $y_1$, and  members to the right of $(y_1+y_2)/2$ vote for $y_2$. It is easy to see that if $(y_1+y_2)/2$ is located to the right of the median, the candidate who receives the majority of the votes is $y_1$ (i.e., the candidate on the left). In this case the candidate closer to the median is $y_1$. Thus, the majority rule essentially prescribes that the candidate closer to the median is accepted into the group. This is illustrated in Figure \ref{fig:majority}, where $(y_1+y_2)/2$ is located to the right of the median ($x_3$) and therefore $y_1$, voted for by $x_1$, $x_2$ and $x_3$, is admitted into the group.

Note that the identity of the admitted candidate in each step is determined entirely by the location of the median. This means that in order to prove that the distribution of opinions in the group converges to the triangle distribution with the median at $1/2$ it is sufficient to show that the median converges to  $1/2$. It is quite easy to show that if the median is located at $1/2-\eps$, the probability of accepting a candidate to the left of the median is less than half, therefore the median should move to the right. The main challenge lies in the fact that the analyzed process is discrete, which makes analyzing the magnitude of the shift of the median technically more difficult.

\xhdr{Admission rules with special veto}
The dichotomy between the composition of opinions in the group when using consensus as opposed to majority calls for understanding intermediate admission rules, where in order to admit a candidate, some given fraction, greater than half of the group, is required to prefer him.
We study this question under a somewhat different model, assuming that the group originates with a founder, located at $1$, who has a special veto power. Whenever two candidates apply, only the one closer to the founder (i.e., the right one) is considered for admission, and he is admitted if and only if an $r$-fraction of the group prefers him over the candidate on the left (otherwise, no candidate is admitted).

Our results here are quite intriguing: we show that this process exhibits a phase transition at $r=1/2$. In particular, if $r<1/2$, regardless of the initial conditions, the group converges to a continuous distribution. It is a truncated triangle distribution, characterized by the location of the $(1-r)$-quantile. At the same time, for $r>1/2$ as the group grows, only candidates closer and closer to $0$ are accepted into the group. These results resemble those we have presented above for the consensus admission rule, but are even stronger.
Despite the excessive power granted to the founder of the group, who is located at $1$, the group can entirely change its character and become one that admits only candidates that are close to $0$.

\xhdr{Quantile-driven admission rules} 
Several of the admission rules mentioned above belong to a family we call ``quantile-driven." Under these rules the decision of which candidate to accept, if any, is determined solely by the location of the $p$-quantile for some value of $p$. The majority rule, for example, is a quantile-driven rule with $p=1/2$.
We show that for quantile-driven admission rules that have two rather natural properties the $p$-quantile always converges. This is a convenient tool for showing that in case of the majority rule, the median of the group converges to $1/2$. We also use it as part of the proof that for veto rules with $r<1/2$ the $(1-r)$-quantile converges.

\subsection{Fixed-Size Groups}
Many groups have a fixed size and do not grow over time, as in the case of condominiums and committees. Often a committee member serves for a term, after which it is possible to extend his membership for an additional term. A natural way of doing so is to place the decision whether or not to extend his membership in the hands of the other committee members. This can be accomplished, for example, by comparing the candidate who finished his term with a potential replacement. The one who receives a $p$-fraction of the votes (for some given $p$) is the one who joins (or rejoins) the committee.

The fundamental questions that drive our analysis for fixed-size groups are similar to the ones we have analyzed for growing groups. Specifically, we are interested in understanding how the composition of a fixed-size committee can evolve over time, and how it is influenced by the admission rule.

It turns out that in the fixed-size group model our questions make sense even in an adversarial setting, that is, when both the member who is up for re-election and the potential replacement are chosen adversarially. The two aspects of fixed-size groups we are interested in are: (a) By how much can the opinions of committee members drift as the committee evolves? and (b) Can a committee member have immunity against replacement under some admission rules.

As in the case of growing groups, the answers to both questions depend on the value of $p$. For the majority rule (i.e., $p = 1/2$), for every initial configuration, the committee can move arbitrarily far from its initial location. In contrast, for any admission rule that requires even a single vote more than the standard majority, the drift becomes bounded and diminishes with p. In the extreme case, i.e., consensus, every admitted candidate is at distance of at most $D$ from either boundary of the original configuration, where $D$ denotes the diameter of the initial configuration.

Regarding the problem of immunity, our process exhibits an interesting phase transition. In particular, there exists a committee that grants immunity to one of its members (i.e., ensuring that this member can never be replaced) if and only if $p > 3/4$. Generally speaking, a committee in which a member has immunity has the following structure: there are two clusters of points located at the two extremes, and a single point (the median) located in the middle. We use the bound on the drift of the committee for any $p$ greater than $1/2$ (as noted above) to show that neither of the two clusters can ever reach the median, so that the median of the committee effectively enjoys immunity. 

\section{Growing Groups: Consensus and Majority} \label{sec:majority}

\subsection{Consensus}
The first admission rule we analyze is consensus. Under this rule, a candidate is accepted only if all group members agree he is better than the other candidate. Even though
it may initially seem counter intuitive, it is quite easy to see that when the consensus rule is applied as the group grows only members closer and closer to the two extremes will join the group:

\begin{proposition}
Consider a group $S(k_0)$, then, for any $\eps$ with probability $1$ there exists $k_{\eps}$  such that for any $k>k_{\eps}$ only candidates in $[0,\eps]$ and $[1-\eps,1]$ can be admitted to the group.
\end{proposition}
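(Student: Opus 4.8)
The plan is to track the two extreme members of the group and show that, with probability $1$, the leftmost member converges to $0$ and the rightmost member converges to $1$; once that is established, the statement about which candidates can be admitted follows immediately from the description of the consensus rule. Concretely, let $L_k = \min_i x_i$ and $R_k = \max_i x_i$ for the group $S(k)$. By the mechanics of consensus explained in the introduction, a candidate can be admitted at a step only if \emph{both} candidates $y_1,y_2$ lie in $[0, \tfrac{L_k + R_k}{2})$ or both lie in $(\tfrac{L_k+R_k}{2}, 1]$ — that is, only if both are on the same side of the midpoint of the current extremes, since otherwise the leftmost member and the rightmost member vote for different candidates. Moreover, if the admitted candidate lies in the left interval, it can only decrease $L_k$ (or leave it unchanged), and symmetrically on the right; so $L_k$ is non-increasing and $R_k$ is non-decreasing, hence both converge almost surely to limits $L_\infty \ge 0$ and $R_\infty \le 1$.

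The heart of the argument is to show $L_\infty = 0$ and $R_\infty = 1$ almost surely. Suppose for contradiction that $L_\infty = a > 0$ with positive probability (the case $R_\infty < 1$ is symmetric). Then, conditioned on this event, for all large $k$ we have $L_k \in [a, a + \eta)$ for $\eta$ as small as we like, and in particular $\tfrac{L_k + R_k}{2} \ge \tfrac{a}{2} > 0$. At each such step, the event ``both $y_1, y_2$ fall in $[0, a/4]$'' has probability at least $(a/4)^2 > 0$, a constant independent of $k$; on this event a new member at position $\le a/4 < a$ is admitted (the consensus rule does admit one of the two, since all members agree), which forces $L_{k+1} \le a/4$, contradicting $L_k \ge a$ for the next step. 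Since infinitely many independent-enough trials each with probability bounded below by a positive constant must succeed almost surely (a Borel–Cantelli / second-moment argument, using that conditioning on $L_k \ge a$ does not kill the lower bound on the admission probability), the event $L_\infty = a > 0$ has probability $0$. Taking a union over a countable dense set of rational thresholds gives $L_\infty = 0$ a.s., and symmetrically $R_\infty = 1$ a.s.

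Finally, fix $\eps > 0$. On the full-measure event where $L_k \to 0$ and $R_k \to 1$, choose $k_\eps$ so that for all $k > k_\eps$ we have $L_k < \eps$ and $R_k > 1 - \eps$. For such $k$, any admissible pair of candidates must lie entirely on one side of $\tfrac{L_k + R_k}{2}$; and since the member at $L_k < \eps$ votes for whichever candidate is closer to it while the member at $R_k > 1-\eps$ votes for whichever is closer to it, unanimity forces both candidates into $[0, L_k] \subseteq [0,\eps]$ or both into $[R_k, 1] \subseteq [1-\eps, 1]$. Hence the admitted candidate — in either case one of the two — lies in $[0,\eps] \cup [1-\eps,1]$, as required.

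The main obstacle I anticipate is making the contradiction step fully rigorous despite the \emph{conditioning}: the $y$'s at each step are i.i.d., but the event $\{L_\infty = a\}$ is defined in terms of the whole trajectory, so one cannot naively apply Borel–Cantelli to the raw sequence. The clean way around this is to work with the stopping-time-like event $A_k = \{L_k \ge a\}$, observe that on $A_k$ the conditional probability (given the history up to step $k$) of admitting a member below $a/4$ is at least $(a/4)^2$, and then invoke the conditional/second Borel–Cantelli lemma (Lévy's extension) to conclude that on $\{L_k \ge a \text{ for all large } k\}$ infinitely many such admissions occur — a contradiction. This is the one place where care with the probabilistic bookkeeping is essential; the rest is the geometry of the consensus rule, which is routine.
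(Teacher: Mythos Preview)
Your overall strategy---track the extreme members $L_k,R_k$, show they converge almost surely to $0$ and $1$, and deduce the restriction on admissible candidates---is exactly the paper's approach. The Borel--Cantelli argument in your middle paragraph is more elaborate than the paper's one-line version but is correct, and your concern about conditioning is well handled.

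However, your geometry of the consensus rule is wrong in two places, and the second one creates a genuine gap. First, your necessary condition for admission (``both candidates on the same side of $(L_k+R_k)/2$'') is incorrect: the actual condition is that the midpoint $(y_1+y_2)/2$ of the \emph{candidates} lies below $L_k$ or above $R_k$. For instance, with $L_k=0.3$, $R_k=0.5$, $y_1=0.05$, $y_2=0.5$, the candidates straddle $(L_k+R_k)/2=0.4$, yet $(y_1+y_2)/2=0.275<L_k$ so all members vote for $y_2$ and admission occurs. Second, and more consequentially, your final paragraph asserts that unanimity forces both candidates into $[0,L_k]$ or $[R_k,1]$; this is false. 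When $(y_1+y_2)/2<L_k$ the admitted candidate is the larger one, which can be as large as $2L_k$. So the admitted candidate lies in $[0,2L_k]\cup[2R_k-1,1]$, not $[0,L_k]\cup[R_k,1]$. Consequently, choosing $k_\eps$ so that $L_k<\eps$ is not enough; you need $L_k<\eps/2$ and $R_k>1-\eps/2$, which is precisely what the paper does. Once you correct the admissibility intervals and adjust the threshold by the factor of $2$, the proof goes through.
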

\begin{proof}
Denote the members of the group $S(k)$ ordered from left to right by $x_1(k),...,x_k(k)$. The requirement for all the group members to agree in order to admit a member implies that only members in the intervals $[0, 2x_1(k)]$ and $[2x_k(k)-1,1]$ can be admitted. The proof is completed by observing that if $x_1(k)>\eps/2$ then the probability of accepting a candidate in $[0,\eps/2]$ is at least $\eps^2/4$ and hence with probability $1$ there exists a step $k^{l}_{\eps}>k_0$ such that $x_1(k^{l}_{\eps})<{\eps}/2$. A symmetric argument shows the existence of $k^r_\eps$ such that $x_{k^r_\eps}(k^r_\eps)>1-\eps/2$. Hence, the proposition holds with $k_{\eps} = \max \{k^{l}_{\eps},k^{r}_{\eps}\}$.
\end{proof}

\subsection{Majority}
Under the majority rule a candidate that receives at least half of the votes is accepted to the group. As discussed in the introduction,
the majority rule can be described as a function of the location of the median.\footnote{In case of a group of an even size any consistent choice of the median will do. We note that the reformulation of the majority rule by using the median also serves as a tie breaking mechanism for the case that each of the candidates was voted for by exactly half of the members.} Denote the median of the group $S(k)$ by $m(S(k))$, the majority rule can be defined as follows:
\begin{definition} [majority]
Given two candidates $y_1,y_2$, admit to the group $S(k)$ the candidate $y_i$ minimizing $|m(S(k))-y_i|$.
\end{definition}

We show that for the majority decision rule, with high probability, the process converges to a distribution given by the triangle density function with a median located at $1/2$:
\vspace{-0.1in}
\begin{align*}
h(x) = \begin{cases}
  4x & \text{  for $0 \leq x \leq 1/2$} \\
  4-4x & \text{  for $1/2 < x \leq 1$}
  \end{cases}
\end{align*}

To prove that the distribution of opinions of the group members converges to the triangle distribution described above it is sufficient to show that with high probability the median converges to $1/2$. Indeed, if the median is located at $1/2$ then a candidate located at $x<1/2$ will be admitted to the group with probability of $2x$. The reason for this is that this candidate is accepted to the group if and only if the other candidate is located at $[0,x)$ or $(1-x,1]$. Hence, the density function for $x\leq 1/2$ is $h(x)=4x$. In a similar way we can compute the value for the density function for $x>1/2$. Furthermore, if the median is in the interval $[1/2-\eps,1/2+\eps]$, with probability $1-O(\eps)$ the same candidate will be chosen as in the case the median is exactly $1/2$.\footnote{The formal reason for this is that with  probability $1-O(\eps)$, $|y_1-y_2|>2\eps$ and $|y_1-(1-y_2)|>2\eps$}
 Hence, if the median converges to $1/2$ the opinions distribution of the group converges to the triangle distribution $h(x)$.

We provide some informal intuition for the convergence of the median to $1/2$. Consider a group $S(k)$, such that $m(S(k)) = 1/2-\eps$ for $\eps>0$, and where the initial size of the group was $k_0=1$. (a symmetric argument holds for the case that $m(S(k)) = 1/2+\eps$).
We observe that by symmetry the probability of accepting a candidate in $[0,m(S(k))]$ is the same as the probability of accepting a candidate in $[m(S(k)),2m(S(k))]$. Also, the probability of accepting a candidate in $[2m(S(k)),1]$ is $4\eps^2$. Now, consider adding $k$ more members to the group. By the previous probability computation the number of members added to the right-hand side of $m(S(k))$, in the $k$ steps, exceeds that in its left side by roughly $ 4\eps^2 k$. This means that
the median should move by about $\frac{1}{2}(2 \eps)^2 k$ members to
the right. Since altogether as the group grew from size $1$ to $2k$,
$4k$ candidates have attempted to get accepted to the group, we cannot have more than
roughly $4 k\delta$ points in any interval of length $\delta$.
Thus the median will move to the right by at least a distance of
about $2 \eps^2 k/(4k)=\eps^2/2$. We have shown that when we double the number of points the median
increases from $1/2-\eps$ to roughly $1/2-\eps+\eps^2/2$. In particular, this means that after roughly $1/\eps$ doublings, the median will shift to about
$1/2-\eps/2$.

The intuition above is lacking in two main aspects. First, we assumed that the probability of accepting a candidate in $[0,m(S(k))]$ remains fixed throughout all the $k$ steps. However, this is not exactly true as in these $k$ steps the median does not remain at the same place. A second, more minor, issue is showing that, roughly speaking, it is always the case that each interval of size $\delta$ does not have too many members. Instead of formalizing this intuition we choose to take a more general approach: in the next section we define a family of admission rules that includes the majority rule and show convergence for each one of these rules. This gives us the following theorem for the majority rule: 

\begin{theorem} \label{thm-general-converge}
Consider a group $S(k_0)$. For any $\eps >0$, with probability $1- o(1)$, there exists $k_\eps$, such that for any $k'>k_\eps$, $|m(S(k'))-\frac{1}{2}|<\eps$. \footnote{ \app{In Section \ref{app-remark-conv} of the appendix} \noapp{In the full version} we provide a short remark on the convergence rate.} 
\end{theorem}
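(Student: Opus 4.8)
The plan is to analyze the discrete stochastic process governing the median $m(S(k))$ and show it converges to $1/2$ by a potential/drift argument, formalizing the informal intuition given above while handling its two stated gaps. I would track the quantity $\eps(k) = |m(S(k)) - 1/2|$ and argue that it shrinks over time. The key is that the identity of the admitted candidate at step $k$ depends only on $m(S(k))$: when the median sits at $1/2 - \eps$, the probability that the admitted candidate lands to the right of the current median strictly exceeds the probability it lands to the left, with a gap of order $\eps^2$ (computed as in the intuition: the ``extra'' mass comes from the interval $[2m(S(k)), 1]$, which has probability $\approx 4\eps^2$ of capturing both candidates). So in expectation the median drifts toward $1/2$.

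First I would make precise the geometric bookkeeping. Over the growth from size $k_0$ to size $k$, exactly $2(k-k_0)$ candidates have been sampled uniformly from $[0,1]$ (two per step), so with high probability no interval of length $\delta$ contains more than, say, $4k\delta + O(\sqrt{k \log k})$ admitted members — this is the quantitative version of the ``no interval is too crowded'' claim, and it controls how far the median physically moves when the left/right admitted-count imbalance changes by a given amount. Combined with the drift estimate, doubling the group size from $k$ to $2k$ shifts the median from $1/2 - \eps$ to roughly $1/2 - \eps + \eps^2/2$ (as in the intuition), hence after $O(1/\eps)$ doublings the median is within $\eps/2$ of $1/2$. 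Iterating the doubling argument across dyadic scales and taking a union bound over the $O(\log(1/\eps))$-or-so relevant phases gives convergence with probability $1 - o(1)$.

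The cleanest route, though — and the one the paper explicitly signals — is to invoke the general machinery developed in the next section: the majority rule is a quantile-driven admission rule (with $p = 1/2$) satisfying the two natural properties referenced there, so the general convergence theorem for quantile-driven rules applies directly and yields that the $1/2$-quantile, i.e. the median, converges. I would state the proof as: verify that majority is quantile-driven with the two required properties (monotonicity of the acceptance region in the quantile location, and that a fixed point of the induced deterministic dynamics occurs only at $1/2$), then quote the general theorem. This reduces Theorem~\ref{thm-general-converge} to a short corollary.

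The main obstacle is the first gap in the intuition: the probability of accepting a candidate on a given side of $m(S(k))$ is \emph{not} constant over the $k$ steps of a doubling phase, because the median itself moves during the phase. Handling this requires either (i) a self-improving bound — show the median never moves \emph{against} the drift by more than a lower-order amount within a phase, so the per-step acceptance probabilities stay within the regime that sustains the drift, via a supermartingale/Azuma concentration argument on the running imbalance; or (ii) working at a scale where $\eps$ is essentially frozen and absorbing the variation into the $o(1)$ error. The general quantile-driven framework is presumably designed precisely to encapsulate this coupling between the quantile's current position and the step dynamics, which is why deferring to it is the preferred strategy; the residual work is then just confirming majority fits the framework's hypotheses.
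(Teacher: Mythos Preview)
Your proposal is correct and matches the paper's approach: the paper proves Theorem~\ref{thm-general-converge} precisely by showing that the majority rule is a \emph{smooth} quantile-driven admission process (Claim in Section~\ref{sec-quantile}) and then invoking the general convergence theorem for such processes (Theorem~\ref{thm-general-converge-smooth}). One small correction: the two hypotheses you must verify are not quite ``monotonicity'' and ``unique fixed point at $1/2$''---the paper's smoothness conditions are (i) $f_p(\cdot)$ is strictly increasing and continuous, and (ii) the acceptance probability in any interval of length $\delta$ lies between $c_1\delta^2$ and $c_2\delta$; condition~(ii) is exactly the density control you sketched in your direct argument, and uniqueness of the fixed point is a consequence of~(i), not an assumption.
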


\section{Growing Groups: Quantile-Driven Admission Processes} \label{sec-quantile}

In this section we define and study a broad family of admission rules with the common property that the choice of which candidate to accept (if at all) is determined by the location of the $p$-quantile for some value of $0<p<1$. As we will see this family captures natural admission rules (e.g., the majority rule). Furthermore, we focus on a subfamily of these admission rules and show that as the group evolves the location of the $p$-quantile converges.

We denote the $p$-quantile of a group $S(k)$ by $q_p(S(k))$ and define it as follows: 
\begin{definition}
$q_p(S(k)) \in S(k)$ is a $p$-quantile of a group $S(k) \in [0,1]^k$ if $|\{i | x_i \leq q_p(S(k)) \} | \geq \ p \cdot k$ and $|\{i | x_i < q_p(S(k)) \} | \leq \ p \cdot k$. \footnote{For groups in which this definition admits more than a single choice for the $p$-quantile, any consistent choice will do.}
\end{definition}
\noindent
Using this definition we define a \emph{quantile-driven} admission rule:
\begin{definition}
An admission rule is \emph{quantile-driven} if there exists a parameter $p \in [0,1]$ such that for every $x \in [0,1]$ the probability of accepting a member below $x$ to the group $S(k)$ is only a function of $x$ and the $p$-quantile of $S(k)$. 
\end{definition}
The majority rule is a quantile-driven rule as according to it the candidate that is admitted to group is the one closer to the median. However, the consensus admission rule is not quantile-driven as the choice of which candidate (if at all) is admitted to the group is determined by both the $0$-quantile and the $1$-quantile.

We study quantile-driven admission processes which are admission processes in which candidates are admitted according to a quantile-driven admission rule. For these general processes we do not make any assumptions on the distribution that the candidates are drawn from or even on the number of the candidates. Even though, as earlier discussed for the specific processes we analyze in the present paper we assume that there are only $2$ candidates that are drawn from the uniform distribution $\mathcal U[0,1]$. We denote by $f_p(q_p)$\footnote{When $p$ is clear from the context we denote this function simply by $f(\cdot)$.} the probability of accepting a candidate below the current location of the $p$-quantile $q_p$. Note, that $f_p(\cdot)$ is based both on the admission rule and the distribution that the candidates are drawn from. For example, for the admission process with the majority rule we have that:
\begin{align*}
f_{1/2}(q) = \begin{cases}
  2q-2q^2 & \text{  for $q\leq 1/2$} \\
  1-2q+2q^2 & \text{  for $q>1/2$}
  \end{cases}
\end{align*}
An easy way for computing this function is using similar ideas to the ones we presented in our intuition for the convergence of the median to $1/2$. For example, if $q<1/2$, then with probability $(1-2q)^2$ a candidate in the interval $[2q,1]$ joins the committee. Furthermore, by symmetry the probability of a candidate to join $[0,q]$ is the same as the probability for joining $[q,2q]$. Thus, we have that for $q<1/2$, $f(q) = (1-(1-2q)^2)/2 = 2q-2q^2$.

We now define smooth quantile-driven admission processes and show that the majority process is smooth:
\begin{definition}
An admission process in which at every step a candidate joins the group\footnote{For processes that do not exhibit this property we can restrict our attention to steps in which a candidate is accepted and normalize the function $f_p(\cdot)$ accordingly} is \emph{smooth} if:
\begin{enumerate}
\item $f_p(\cdot)$ is a strictly increasing continuous function.
\item The probability of accepting a member in any interval of length $\delta$ is at least $c_1 \cdot \delta^{2}$ and at most $c_2 \cdot \delta$ for some constants $c_1$ and $c_2$.
\end{enumerate}
\end{definition}

\begin{claim}
The majority admission process is smooth.
\end{claim}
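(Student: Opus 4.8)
The plan is to verify the two defining conditions of a smooth admission process directly for the majority rule, using the closed form of $f_{1/2}(\cdot)$ already computed in the excerpt.

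\textbf{Condition 1: $f_{1/2}(\cdot)$ is strictly increasing and continuous.} The function is piecewise polynomial, with $f_{1/2}(q) = 2q - 2q^2$ on $[0,1/2]$ and $f_{1/2}(q) = 1 - 2q + 2q^2$ on $(1/2, 1]$. For continuity I would just check the two pieces agree at $q = 1/2$ (both give $1/2$). For strict monotonicity I would differentiate: on $[0,1/2]$ the derivative is $2 - 4q \geq 0$, which is $0$ only at $q = 1/2$; on $(1/2,1]$ the derivative is $-2 + 4q > 0$. Since the function is continuous and has nonnegative derivative that vanishes only at the single point $q=1/2$, it is strictly increasing on $[0,1]$. (Alternatively, note $f_{1/2}(q) = 1/2 - 2(q-1/2)^2$ for $q \le 1/2$ and $f_{1/2}(q) = 1/2 + 2(q-1/2)^2$ for $q > 1/2$, making both the continuity at $1/2$ and the monotonicity transparent.)

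\textbf{Condition 2: the probability of admitting a member in any interval $I$ of length $\delta$ is between $c_1 \delta^2$ and $c_2 \delta$.} Here I would use the description of the majority rule via the median together with the fact that, since the process has run from some finite starting group, in any fixed realization the median $m$ is some point of $[0,1]$ and the admitted candidate at a step is always the one of $y_1, y_2$ closer to $m$. For the upper bound: a candidate in $I$ is admitted only if at least one of the two i.i.d.\ uniform draws lands in $I$, an event of probability at most $2\delta$, so $c_2 = 2$ works regardless of where $m$ is. For the lower bound: I would exhibit an explicit event forcing admission of a point in $I$. Writing $I = [a, a+\delta]$, I would consider the sub-case where $I$ lies entirely on the far side of the median from its mirror image and pick the event that one draw lands in $I$ while the other lands in a suitable interval of length $\Omega(\delta)$ on the same side, making the $I$-candidate the closer one to $m$; this has probability $\Omega(\delta^2)$. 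Since every interval of length $\delta$ contains a sub-interval of length $\delta/2$ on one fixed side of $m$, and the ``reflected'' region that would steal the vote has length at least $\delta/2$ available within $[0,1]$ in at least one of the two symmetric configurations, a uniform constant $c_1$ can be extracted. The worst case is an interval straddling the median or sitting near an endpoint, and I would handle it by always retreating to a half-length sub-interval on whichever side gives room.

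\textbf{Main obstacle.} The genuinely fiddly part is the lower bound in Condition 2: one must produce a single constant $c_1$ that works for \emph{every} location of the interval $I$ \emph{and} every possible location of the median $m$, including degenerate cases where $I$ abuts $0$ or $1$ or brackets $m$. Condition 1 and the upper bound of Condition 2 are essentially immediate. I expect the cleanest route to the lower bound is a short case analysis — $I$ well to one side of $m$ versus $I$ near an endpoint versus $I$ containing $m$ — in each case naming an explicit product event of two uniform draws with probability at least (some small absolute constant) $\cdot\, \delta^2$ that guarantees a point of $I$ is admitted.
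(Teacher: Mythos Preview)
Your treatment of Condition~1 and the upper bound in Condition~2 is essentially identical to the paper's, and correct. The divergence is in the lower bound of Condition~2, which you flag as the ``genuinely fiddly part'' and propose to handle by a case analysis on the position of $I$ relative to the median $m$ and the endpoints of $[0,1]$.

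The paper bypasses all of that with a one-line observation you overlooked: under the majority rule \emph{a candidate is admitted at every step}. Hence if both independent uniform draws $y_1,y_2$ land in $I$ (an event of probability exactly $\delta^2$), the admitted candidate --- whichever of the two is closer to $m$ --- necessarily lies in $I$. This gives $c_1 = 1$ uniformly, with no dependence on where $I$ sits or where $m$ is. Your proposed case analysis, while plausibly completable, is unnecessary; the obstacle you identified dissolves once you use the ``someone is always admitted'' property of majority rather than trying to control which specific candidate wins.
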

\begin{proof}
Observe that both pieces of the function $f_{1/2}(\cdot)$ are continuous and strictly increasing for the appropriate range and hence the function is increasing and continuous (for  continuity at $q=1/2$ observe that for $q=1/2$ both pieces of the function attain the same value). Furthermore, observe that since in the majority rule a candidate is accepted at every step the probability of accepting a candidate in an interval of length $\delta$ is at least $\delta^2$. On the other hand, the probability of accepting a candidate in an interval of length $\delta$ is upper bounded by the probability that at least one of the candidates lies in the interval $\delta$ which is $2\delta$. 
\end{proof}

\subsection{Convergence of Smooth Admission Processes}

Our main technical result states that the location of the $p$-quantile of a group that uses a smooth admission rule always converges to unique $\tau_p$ such that $f(\tau_p)=p$. \footnote{Such a $\tau_p$ always exists since $f(\cdot)$ is strictly increasing, $f(0)=0$ and $f(1)=1$.} Formally we show that:

\begin{theorem} \label{thm-general-converge-smooth}
Consider a group $S(k_0)$ that uses a smooth admission process $f_p(\cdot)$. Let $\tau_p$ be the unique value satisfying $f_p(\tau_p)=p$. For any $\eps >0$, with probability $1- o(1)$, there exists $k'_\eps$, such that for any $k'>k'_\eps$, $|q_p(S(k'))-\tau_p|<\eps$.
\end{theorem}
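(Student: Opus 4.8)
The plan is to track the $p$-quantile $q_p(S(k))$ as a stochastic process and show that it is a supermartingale-like quantity when it is above $\tau_p$ and a submartingale-like quantity when it is below, with a drift that is quantitatively controlled by the ``gap'' $|q_p(S(k)) - \tau_p|$. Because $f_p$ is strictly increasing and continuous with $f_p(\tau_p) = p$, whenever $q_p(S(k)) = \tau_p + \eps$ the probability of admitting a candidate below the current quantile is $f_p(\tau_p + \eps) > p$; since the quantile is (roughly) the $pk$-th order statistic, admitting too many members below it pulls it leftward in expectation, and vice versa. So the first step is to make this precise: relate the one-step change in the rank of $q_p$ to the Bernoulli event ``new member lands below $q_p$,'' and translate a change in rank into a change in position using the lower bound in part~2 of smoothness (an interval of length $\delta$ receives at least $c_1\delta^2$ mass, so a crowded region cannot trap the quantile, and a deficit of members below the quantile forces it to move by at least roughly $\eps^2$-order distance per ``doubling'' of the group, exactly as in the informal majority intuition).

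Second, I would set up a potential/Lyapunov function argument rather than a direct martingale computation, because the step sizes are non-uniform and shrink as the group grows. A natural choice is to work in ``doubling epochs'': partition time so that epoch $j$ runs from group size $2^j k_0$ to $2^{j+1} k_0$. Within an epoch, the number of admitted candidates is comparable to the current size, the per-step displacement of the quantile is small, and one can argue that the expected net displacement of $q_p$ over the epoch is of order $-\mathrm{sign}(q_p - \tau_p)\cdot g(|q_p-\tau_p|)$ for some positive function $g$, as long as $|q_p - \tau_p|$ stays bounded away from $0$ throughout the epoch. Combined with a variance/Azuma bound showing the fluctuation within an epoch is smaller than this drift (here the upper bound $c_2\delta$ in smoothness controls how much a single step can move things), we get that the gap contracts by a definite amount each epoch until it first drops below $\eps$. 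Then a second ingredient — a ``no escape'' argument — shows that once the gap is below $\eps$ it stays below $2\eps$ with high probability: the drift still points inward at the boundary $|q_p - \tau_p| = \eps$, and the sum over all remaining (geometrically lengthening) epochs of the fluctuation tails is $o(1)$, so by a union bound the quantile never escapes the $2\eps$-window. Rescaling $\eps$ gives the theorem.

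The main obstacle, as the authors themselves flag in the majority discussion, is that $f_p(q_p(S(k)))$ is not constant over an epoch: as members are added the quantile drifts, changing the admission probability, so the ``expected surplus of left-admissions'' is an integral of a moving quantity rather than a clean binomial with fixed parameter. Handling this requires a self-consistent estimate: one assumes a crude a priori bound on how far $q_p$ can move in an epoch (from the per-step bound $c_2\delta$ and the number of steps), uses the continuity of $f_p$ to conclude $f_p$ varies by only $o(1)$ over that range, and then feeds this back to sharpen the drift estimate. Making the error terms from this bootstrapping genuinely smaller than the $\Theta(\eps^2)$-scale drift — uniformly over all epochs, including the very late ones where both drift and noise are tiny — is the delicate part, and is exactly where both clauses of the smoothness definition (the quadratic lower bound to keep the quantile from getting stuck, the linear upper bound to keep single steps from overshooting) are needed in tandem.

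Finally, I would remark that Theorem~\ref{thm-general-converge} for the majority rule follows immediately: by the preceding Claim the majority process is smooth, and solving $f_{1/2}(\tau) = 1/2$ with $f_{1/2}(q) = 2q - 2q^2$ for $q \le 1/2$ gives $2\tau - 2\tau^2 = 1/2$, i.e. $\tau = 1/2$, so the median converges to $1/2$ with probability $1 - o(1)$.
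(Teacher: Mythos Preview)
Your outline is essentially the paper's proof: the authors also work in epochs (a fine sequence $k_{i+1}=(1+c'_1\delta(k_i)^2)k_i$ with $\delta(k)=k^{-1/10}$, then aggregated into doublings $k_{a_{j+1}}\ge 2k_{a_j}$), bound the number of left-admissions by Chernoff rather than Azuma, and translate a rank surplus of order $g(\dm)k$ into a distance $\Theta(g(\dm))$ via density estimates, followed by a union bound over epochs for the ``no escape'' part.

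One correction worth making: you have the roles of the two smoothness clauses swapped. The \emph{upper} bound $c_2\delta$ is what prevents the quantile from getting stuck---it caps how many members can pile into a short interval, so a rank shift of $N$ forces a spatial shift of at least $N/(c'_2 k)$. The \emph{lower} bound $c_1\delta^2$ is what prevents overshooting: it guarantees every short interval around $q_p$ already contains at least $c'_1\delta^2 k$ members, so adding $t\le c'_1\delta^2 k$ new members cannot move the quantile by more than $\delta$. This last observation is also the clean replacement for your ``bootstrapping'' step: rather than assume-and-refine, the paper simply picks $t$ small enough that the dense neighbourhood of $q_p$ pins it to $[q_p-\sigma,q_p+\sigma]$ deterministically, and then runs Chernoff with the worst-case value of $f_p$ over that window.
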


We first provide some intuition on why smooth admission processes converge. Assume that $q_p(S(k))<\tau_p$, the assumption that $f(\cdot)$ is strictly increasing implies that
$f(q_p(S(k)))<p$ and hence the $p$-quantile has to move right. The upper bound on the probability of accepting a candidate in an interval of length $\delta$ allows us to show that the quantile will indeed keep moving right and will not ``get stuck'' at some cluster of points. The lower bound on the probability to accept a candidate provides us an assurance that the $p$-quantile cannot move too far when a small number of members is added. 

We now give a taste of the way that the formal proof operates. The proof itself is provided  in \noapp{the full version} \app{Appendix \ref{app-quantile}}. First, we let $\dm(k) = |\tau_p-q_p(S(k))|$. Then, we define the following two strictly increasing functions:
\begin{itemize}
\item $g_r(\dm): [0,\tau_p] \rightarrow [0,p]$, $g_r(\dm) = p-f_p(\tau_p-\dm)$.
\item $g_l(\dm): [0,1-\tau_p] \rightarrow [0,1-p]$, $g_l(\dm) = f_p(\tau_p+\dm)-p$.
\end{itemize}

The function $g_r(\dm)$ is defined for cases in which the $p$-quantile is left of $\tau_p$ and given the distance of the $p$-quantile from $\tau_p$ (i.e., $\dm$) it returns the probability of accepting a candidate in the interval $[\tau_p-\dm,\dm]$. To see why this is the case, observe that by definition $f(\tau_p) = p$. The description of the symmetric function $g_l(\dm)$ is similar and hence we omit it.

The crux of the proof is in the following proposition stated here for $q_p(S(k))<\tau_p$: 

\begin{proposition} \label{prop-points-right-ab}
Consider adding $t$ more members to a group $S(k)$, such that $q_p(S(k))<\tau_p$. For any $\sigma < \tau_p-q_p(S(k))$ such that:
\begin{enumerate}
\item $g_r(\dm(k)-\sigma) > g_r(\dm(k))/2 >c_2 \cdot \sigma$.
\item Each of the intervals $[q_p(S(k))-\sigma,q_p(S(k))]$ and $[q_p(S(k)),q_p(S(k))+\sigma]$ contain at least $t$ members.
\end{enumerate}
with probability at least $1-e^{-\Theta(g_r(\dm(k))^2\cdot t)}$, the group $S(k+t)$ includes at least $\frac{g_r(\dm(k))}{4} \cdot t$ members in the interval $[q_p(S(k)),q_p(S(k+t))]$. (this implies that $\dm(S(k+t)) \leq \dm(S(k))$).

\end{proposition}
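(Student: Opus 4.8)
The plan is to track the location of the $p$-quantile through the $t$ insertion steps and show that it moves to the right by an amount that can be attributed to at least $\frac{g_r(\dm(k))}{4}\cdot t$ newly inserted points lying between the old and the new quantile location. First I would set $\beta = g_r(\dm(k))$ and work on the event that, in the $t$ steps, the number of accepted candidates falling in the interval $[\tau_p-\dm(k),\, q_p(S(k))]$ (call these the ``left-catch-up'' insertions) is at most $(\beta/2)\cdot t$ more than... — actually the cleaner bookkeeping is: among the $t$ new members, let $A$ be the number that land in $I_\ell := [q_p(S(k))-\sigma,\, q_p(S(k))]$ and $B$ the number that land to the right of $q_p(S(k))$ but at or below $\tau_p$. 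By hypothesis~1, at every intermediate step $j$ the quantile stays within $[q_p(S(k))-\sigma,\, q_p(S(k))+\sigma]$ — this needs to be argued, see below — so the probability that a given accepted candidate lands in $[\tau_p-\dm(k)+\sigma,\, q_p(S(k_j))]$ is at least $g_r(\dm(k)-\sigma) > \beta/2$ throughout, and the probability it lands in $I_\ell$ is at most $c_2\sigma < \beta/2$. A Chernoff bound (this is where the $1-e^{-\Theta(\beta^2 t)}$ comes from) then gives that with the stated probability the number of new members landing in $[q_p(S(k)),\, \tau_p]$ is at least $(\beta/2)t$ while the number landing in $I_\ell$ is at most, say, $(\beta/4)t$ fewer than that — so there is a net surplus of at least $(\beta/4)t$ new members on the segment from $q_p(S(k))$ rightward over the tiny window just left of it.

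The second step is to convert this surplus of points into a statement about where the quantile ends up. Since each of $I_\ell$ and $[q_p(S(k)),q_p(S(k))+\sigma]$ already contained $\ge t$ members of $S(k)$ (hypothesis~2), adding at most $t$ new members cannot push the $p$-quantile of $S(k+t)$ out of the window $[q_p(S(k))-\sigma,\,q_p(S(k))+\sigma]$: the quantile index shifts by at most $t$ relative positions, and there are already $\ge t$ cushioning points on each side. Combined with the fact that $f_p(q_p(S(k)))<p$ (strict monotonicity of $f_p$ and $q_p(S(k))<\tau_p$), which forces the quantile to move weakly right, we get $q_p(S(k))\le q_p(S(k+t)) \le q_p(S(k))+\sigma$. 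Now count: the members of $S(k+t)$ with opinion in $[q_p(S(k)),\, q_p(S(k+t))]$ include all of the net surplus new members identified above (they lie in $[q_p(S(k)),\tau_p]$, and $q_p(S(k+t))$ might be smaller than $\tau_p$, so I actually need them in $[q_p(S(k)), q_p(S(k+t))]$ — here I use that the quantile count is exactly determined, so the number of points weakly below $q_p(S(k+t))$ is $\lceil p(k+t)\rceil$ish while the number weakly below $q_p(S(k))$ was $\lceil pk\rceil$ish, and the difference, minus the new points that slipped in below $q_p(S(k))$, is what sits in the half-open middle band; the Chernoff event bounds exactly that difference from below by $(\beta/4)t$). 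This gives the claimed $\frac{g_r(\dm(k))}{4}\cdot t$ members in $[q_p(S(k)),q_p(S(k+t))]$, and in particular $q_p(S(k+t))\ge q_p(S(k))$, hence $\dm(k+t)\le \dm(k)$ since both quantiles are $\le\tau_p$ by the containment just established.

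The main obstacle is the intermediate-step argument: a priori, during the $t$ insertions the quantile could drift and the acceptance probabilities $f_p(q_p(S(k_j)))$ change, so the ``$\ge \beta/2$ per step'' bound is not literally constant. The fix is exactly hypothesis~1 together with hypothesis~2 used as an invariant: as long as no more than $t$ points have been inserted, the quantile cannot have left $[q_p(S(k))-\sigma, q_p(S(k))+\sigma]$ (same cushioning argument as above, applied at each prefix), and on that whole window $g_r$ evaluated at the relevant distance is $\ge g_r(\dm(k)-\sigma) > \beta/2$; so the per-step lower bound holds deterministically on the window, and the window is never exited during the process. One has to be a little careful to phrase this as: either the quantile stays in the window for all $t$ steps (in which case the Chernoff bound applies with the uniform per-step bounds), or it exits early — but exiting the window to the right only helps (more points got inserted on the right), and exiting to the left is impossible by monotonicity of $f_p$. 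I would also double check the direction of hypothesis~1's inequality $c_2\sigma < g_r(\dm(k))/2$ is used precisely to dominate the ``leakage'' of new points into $I_\ell$; the probability any single accepted candidate lands in the length-$\sigma$ interval $I_\ell$ is $\le c_2\sigma$ by the smoothness upper bound, which is the only place that bound enters.
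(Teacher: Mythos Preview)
Your high-level plan --- use hypothesis~2 to confine the quantile to the window $[q_p(S(k))-\sigma,\,q_p(S(k))+\sigma]$ for all $t$ steps, use hypothesis~1 to get a uniform per-step bound, then apply Chernoff --- is exactly the paper's plan. But your execution of the middle step is garbled, and two of the supporting claims are simply false.

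First, you misread the function $g_r$. By definition $g_r(\dm)=p-f_p(\tau_p-\dm)$, so $g_r(\dm(k)-\sigma)=p-f_p(q_p(S(k))+\sigma)$; it is the \emph{deficit} between $p$ and the probability of landing below $q_p(S(k))+\sigma$, not the probability of landing in any interval. Your sentence ``the probability that a given accepted candidate lands in $[\tau_p-\dm(k)+\sigma,\,q_p(S(k_j))]$ is at least $g_r(\dm(k)-\sigma)$'' is therefore meaningless (that interval is $[q_p(S(k))+\sigma,\,q_p(S(k_j))]$, typically empty or inverted). The paper instead bounds, for every intermediate step $k'$, the probability of accepting into $[0,\,q_p(S(k))]$: if $q_p(S(k'))<q_p(S(k))$ this is at most $f(q_p(S(k')))+c_2\sigma\le p-g_r(\dm(k))+c_2\sigma\le p-g_r(\dm(k))/2$; if $q_p(S(k'))\ge q_p(S(k))$ it is at most $f(q_p(S(k))+\sigma)=p-g_r(\dm(k)-\sigma)\le p-g_r(\dm(k))/2$. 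A single Chernoff bound on the count $X$ of new members landing in $[0,\,q_p(S(k))]$ then gives $X\le (p-g_r(\dm(k))/4)\,t$ with the stated probability, and the conclusion follows immediately from the definition of the $p$-quantile --- no ``net surplus'' bookkeeping is needed.

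Second, two of your side claims are wrong. The statement ``$f_p(q_p(S(k)))<p$ \ldots\ forces the quantile to move weakly right'' is false: this is only an expectation, and the quantile can and does move left in individual steps. Likewise ``exiting to the left is impossible by monotonicity of $f_p$'' is false; what prevents exiting (in either direction) is precisely the $\ge t$ cushioning points of hypothesis~2, which you already invoked correctly. The rightward conclusion $q_p(S(k+t))\ge q_p(S(k))$ is obtained only \emph{after} the Chernoff event, from $X<pt$, not from any deterministic monotonicity.
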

Informally, the second condition in the proposition implies that we can add $t$ more members to the group and be sure that for all these additions the $p$-quantile was confined to the small interval $[q_p(S(k))-\sigma,q_p(S(k)),q_p(S(k))+\sigma]$. The first condition implies that while the quantile was in this interval the probability to accept a candidate in the interval $[0,q_p(S(k))]$ is at most $p-g_r(\dm(k))/2$. By applying Chernoff bounds we get that in the $t$ rounds at most $(p-g_r(\dm(k))/4) t$ members have joined the interval $[0,q_p(S(k))]$ and hence the $p$-quantile moved by at least $g_r(\dm(k))/4 \cdot t$ points closer to $\tau_p$. To ``translate'' the number of points into distance we compute bounds on the density of each interval by using Chernoff bounds. \footnote{This ``translation'' is also the reason for requiring that $g_r(\dm(k))/2 >c_2 \cdot \sigma$.} Finally, we combine the density bounds with careful repeated applications of Proposition \ref{prop-points-right-ab}.


\section{Growing Groups: Special Veto Power} \label{sec:veto}
In this section we assume that the group has a founder with opinion $1$. This founder has a special veto power in the sense that if a candidate is admitted to the group it will always be the candidate that the founder prefers (i.e., the right candidate). We term such rules \emph{veto rules}. We study a family of veto rules characterized by a parameter $r$ ($0<r<1$). Under each such rule if $r$-fraction of the group members agree that the right candidate is better than the left one then the right candidate joins the group. Else, no candidate is accepted at this step. Veto rules are also quantile-driven rules. To see why observe that given two candidates located at $y_1$ and $y_2$ ($y_1<y_2$) all the members to the left of $(y_1+y_2)/2$ vote for $y_1$ while all the members to its right vote for $y_2$. For veto rules the only candidate who has the potential to be admitted to the group is $y_2$ and he will be admitted if at least $r$-fraction of the group will vote for him. Putting this together we get that $y_2$ is admitted if at least a fraction $r$ of the group is located to the right of $(y_1+y_2)/2$. In particular this implies that $y_2$ will be accepted to the group $S(k)$ if $(y_1+y_2)/2 \leq q_{1-r}(S(k))$. The reason for this is that a fraction greater than $r$ of the group is located to the right of $(y_1+y_2)/2$ and votes for $y_2$. Hence, the  family of veto rules can be described as follows:
\begin{definition} [veto rules]
Consider two candidates $y_1<y_2$. $y_2$ will be admitted to the group $S(k)$ if and only if $(y_1+y_2)/2 < q_{1-r}(S(k))$.
\end{definition}

Under veto rules, there are many steps in which none of the candidates joins the group. Since we want to track the changes in the group, we will only reason about the steps of the process in which a candidate is admitted. Hence, to compute the probability that the next candidate that is accepted to the group lies in some interval we will have to first compute the probability that any candidate is accepted when the $(1-r)$-quantile is at $q_{(1-r)}(S(k))$. For simplicity throughout this section we denote $1-r$ by $p$:
\begin{claim} \label{clm-veto-total-prob}
If $q_p\leq 1/2$, then the probability of accepting \emph{any} candidate is $2q_p^2$. If $q_p>1/2$, then the probability of accepting \emph{any} candidate is $1-2(1-q_p)^2$.
\end{claim}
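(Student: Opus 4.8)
The plan is to translate the event ``some candidate is admitted'' into a statement about the midpoint of the two sampled candidates and then evaluate its probability by an elementary area computation. By the definition of veto rules, when the two i.i.d.\ samples are $y_1<y_2$ a candidate is admitted exactly when $(y_1+y_2)/2<q_p$ (with $p=1-r$). This condition is symmetric in $y_1$ and $y_2$, so the ``left/right'' bookkeeping in the definition is irrelevant for the \emph{total} acceptance probability: writing $Y_1,Y_2\sim\mathcal U[0,1]$ independently, the probability that any candidate is admitted equals $\Pr[(Y_1+Y_2)/2<q_p]=\Pr[Y_1+Y_2<2q_p]$.

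I would then regard $(Y_1,Y_2)$ as a uniform point of the unit square $[0,1]^2$, so that the desired probability is the area of the region $R=\{(y_1,y_2)\in[0,1]^2: y_1+y_2<2q_p\}$, i.e.\ the part of the square strictly below the line $y_1+y_2=2q_p$, and split according to where this line meets the square. If $q_p\le 1/2$, then $2q_p\le 1$ and $R$ is the right triangle with legs $2q_p$ along the two axes, whose area is $\tfrac12(2q_p)^2=2q_p^2$. If $q_p>1/2$, then $2q_p>1$ and the complement of $R$ inside the square is the right triangle cut off at the corner $(1,1)$, with legs of length $2-2q_p=2(1-q_p)$ and hence area $\tfrac12\bigl(2(1-q_p)\bigr)^2=2(1-q_p)^2$; thus $R$ has area $1-2(1-q_p)^2$. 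These are exactly the two quantities claimed. (Equivalently, one can observe that $(Y_1+Y_2)/2$ has precisely the triangle density $h$ from Section~\ref{sec:majority} and integrate $h$ from $0$ to $q_p$, which again produces the two cases.)

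There is no genuine obstacle here; the only points needing care are observing that the admission event depends on the unordered pair $\{y_1,y_2\}$ only through its midpoint, so the computation may proceed on the full square without conditioning on which candidate is larger, and correctly identifying which corner of the square the line $y_1+y_2=2q_p$ cuts off in each of the regimes $q_p\le 1/2$ and $q_p>1/2$.
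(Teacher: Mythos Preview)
Your proposal is correct and is essentially the same argument as the paper's: both identify the acceptance event with the region $\{(y_1,y_2)\in[0,1]^2: y_1+y_2<2q_p\}$ in the unit square and compute its area as a triangle (for $q_p\le 1/2$) or the complement of a triangle (for $q_p>1/2$). Your explicit remark that the midpoint condition is symmetric in $y_1,y_2$, so no ordering is needed, and your side observation about the triangle density of $(Y_1+Y_2)/2$ are pleasant additions but do not change the approach.
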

\begin{proof}
An easy method for computing the probability of accepting a candidate is using the geometric representation depicted in Figure \ref{fig-accept}. The diagonal line is $y_1=2q_p-y_2$ and the surface below it is the area such that the average of the two candidates $y_1$ and $y_2$ is below $q_p$. Thus, it includes all pairs of candidates $(y_1,y_2)$ for which one of the candidates will be admitted to the group. For $q_p<1/2$ this surface is a triangle with an area of $2q_p^2$. For $q_p>1/2$ it is easier to compute the surface of the upper white triangle and subtract this area from the unit square. Thus we have that the area of the pentagon that includes all pairs of candidates $(y_1,y_2)$ such that one of the candidates is admitted to the group is $1-2(1-q_p)^2$.
\end{proof}

\begin{figure}[h] 
\begin{center}
\subfigure[$q_p<1/2$]{
\includegraphics[width=.2\textwidth]{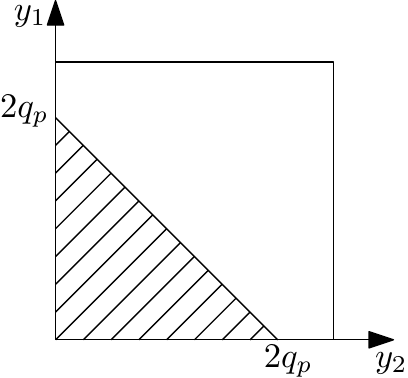}
 \label{fig-accept-below-half}
}
\subfigure[$q_p>1/2$]{
\includegraphics[width=.25\textwidth]{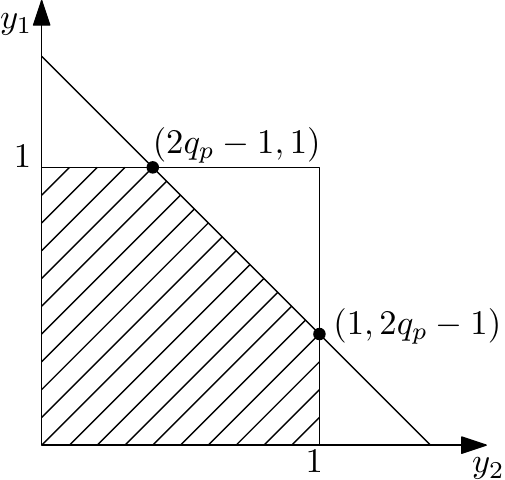}
 \label{fig-accept-more-half}
}

\caption{The probability of accepting a candidate under veto rules: in both pictures the striped area includes all pairs of candidates $(y_1,y_2)$ for which one of the candidates will be admitted to the group. 
} \label{fig-accept}  
\end{center}
\end{figure}

In the two subsections below we analyze the convergence of the $(1-r)$-quantile for different values of $r$. We establish the following phase transition: when $r>1/2$ the $(1-r)$-quantile converges to $0$ and when $r<1/2$ the $(1-r)$-quantile converges to a specific value $1/2<\tau_{1-r}<1$ to be later determined. In both cases the distribution of opinions as the group grows is fully determined by the location of the $(1-r)$-quantile. Hence, when $r>1/2$ we will see that as the group grows only candidates closer and closer to $0$ will be accepted. For $r<1/2$ the opinion distribution in the group will converge to a truncated triangle density distribution with a maximum located at $\tau_{1-r}$ as depicted in Figure \ref{fig-density}. 

\begin{figure}[h]
\begin{center}
\includegraphics[width=.3\textwidth]{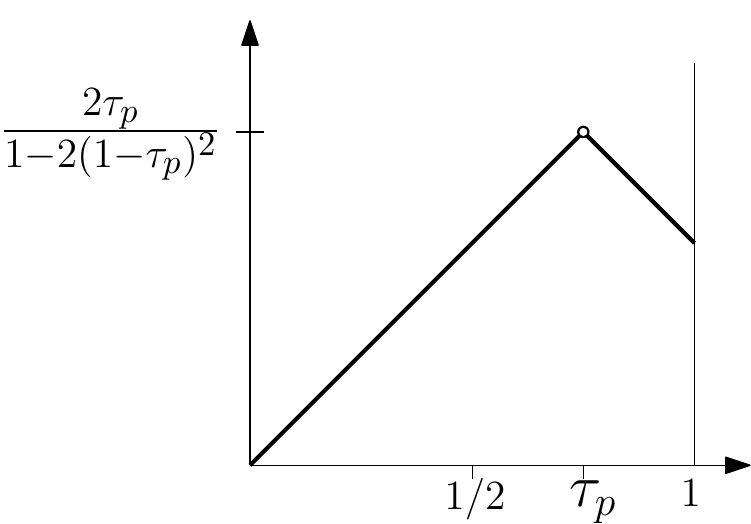}
\caption{A sketch of the density function that the group converges to for $p>1/2 (r<1/2)$. 
} \label{fig-density}  
\end{center}

\end{figure}

\subsection{$r>1/2$: Convergence to $0$}
\newcommand{\epsdensity} {\log((\frac{K}{\delta})^{\frac{4}{(1-2p)^2}})}
\newcommand{\roundsize}{\frac{1}{2\eta}}
\newcommand{\lbmembers}{\frac{\delta^2}{16}}
\newcommand{\probbelow}{\roundsize k \cdot e^{-\frac{(1-2p)^2 \cdot \delta^2 \cdot k}{64}}}

We show that for $p<1/2$ (hence $r>1/2$) as the group grows with high probability $q_p(S(k))$ is converging to $0$. This implies that as the group grows only candidates closer and closer to $0$ will be admitted. While the proof itself is somewhat technical the intuition behind it is rather simple: For any group $S(k)$ such that $q_p(S(k))<1/2$ the probability that the next accepted candidate lies in the interval $[0,q_p(S(k))]$ is exactly $1/2$. Recall that the right candidate is accepted if and only if $(y_1+y_2)/2< q_p(S(k))$. Thus a candidate located in the interval $[0,q_p(S(k))]$ will be chosen with probability $q_p(S(k))^2$. Also, note that in this case by Claim \ref{clm-veto-total-prob} the probability of accepting any candidate at all is $2q_p(S(k))^2$. This leads us to the following observation:
\begin{observation}
For any group such that $q_p(S(k))<1/2$, the probability that the next accepted candidate lies in $[0,q_p(S(k))]$ is $1/2$.
\end{observation}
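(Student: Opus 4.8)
The plan is to obtain the conditional probability as a ratio of two unconditional probabilities, both read off directly from the geometric picture of the two uniform candidates.

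First I would recall the denominator. By Claim~\ref{clm-veto-total-prob}, when $q_p(S(k)) < 1/2$ the probability that \emph{some} candidate is accepted in a step equals $2 q_p(S(k))^2$; note this is positive, so conditioning on acceptance is legitimate. Next I would compute the probability of the joint event ``a candidate is accepted \emph{and} the accepted candidate lies in $[0, q_p(S(k))]$.'' Represent the two candidates as a point $(y_1,y_2)$ drawn uniformly from the unit square, so the left candidate is $\min(y_1,y_2)$ and the admitted candidate, whenever there is one, is the right candidate $\max(y_1,y_2)$. The admitted candidate lies in $[0, q_p(S(k))]$ precisely when $\max(y_1,y_2) \le q_p(S(k))$, i.e.\ when both candidates fall in $[0,q_p(S(k))]$, an event of probability $q_p(S(k))^2$. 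Crucially, on this event the average $(y_1+y_2)/2$ is (almost surely) strictly below $q_p(S(k))$, so by the definition of the veto rule a candidate is indeed accepted; hence the region $[0,q_p(S(k))]^2$ is contained, up to a measure-zero set of ties, in the acceptance region $\{(y_1+y_2)/2 < q_p(S(k))\}$. Therefore the probability of the joint event is exactly $q_p(S(k))^2$.

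Dividing the two quantities gives the conditional probability $q_p(S(k))^2 / \bigl(2 q_p(S(k))^2\bigr) = 1/2$, which is the claim. I do not expect a genuine obstacle here: the only point requiring a word of care is that the veto rule is stated with a strict inequality, so the inclusion of $[0,q_p(S(k))]^2$ in the acceptance region holds only up to the null set of ties $y_1 = y_2$ (or average exactly $q_p(S(k))$), which does not affect probabilities; everything else is the one-line ratio computation above, reusing Claim~\ref{clm-veto-total-prob} for the normalization.
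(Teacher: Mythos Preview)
Your proposal is correct and matches the paper's own argument essentially line for line: the paper also computes the numerator as $q_p(S(k))^2$ (both candidates in $[0,q_p(S(k))]$) and divides by the normalization $2q_p(S(k))^2$ from Claim~\ref{clm-veto-total-prob}. The only difference is that you make explicit the measure-zero caveat about the strict inequality in the veto rule, which the paper leaves implicit.
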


Recall that $q_p(S(k))$ is the location of the $p$-quantile for $p<1/2$. Roughly speaking, the fact that the probability of accepting members to the left of $q_p(S(k))$ is greater than $p$ implies that if $q_p(S(k))<1/2$ then
the $p$-quantile has to move left (towards $0$). A similar argument for the case that $q_p(S(k))>1/2$ shows that in this case the probability to accept a candidate in $[0,q_p(S(k))]$ is greater than $1/2$ and hence the $p$-quantile should move left. Note that this is an example for an admission process which is not smooth ($f(\cdot)$ is not strictly increasing) but still converges.

The formal proof that the $p$-quantile indeed moves to the left gets more involved by the discrete nature of the process. This requires us to carefully track the changes in the location of the $p$-quantile to show that indeed as the group grows the $p$-quantile is moving to the left. As part of the proof, we actually prove a slightly stronger statement which is that the 
$(p+\eta)$-quantile (for $\eta = \frac{1-2p}{8}$) is converging to $0$. Formally we prove the following theorem:


\begin{theorem}\label{thm:conv:extreme}
Consider a group $S(k_0)$. For any $\eps >0$, with probability $1- o(1)$, there exists $k_\eps$, such that for any $k'>k_\eps$, $q_p(S(k')) < \eps$.
\end{theorem}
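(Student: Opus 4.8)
## Proof Proposal

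The plan is to track a carefully chosen quantile (in fact the $(p+\eta)$-quantile, $\eta = \tfrac{1-2p}{8}$, as the theorem's proof preview announces) and show it is driven monotonically toward $0$, up to fluctuations controlled by Chernoff bounds. The key structural fact is the Observation preceding the theorem: whenever $q_p(S(k)) < 1/2$, the next admitted candidate lands in $[0, q_p(S(k))]$ with probability exactly $1/2$. Since $1/2 > p + \eta$, this means that, conditioned on an admission event, the expected fraction of new arrivals landing left of the current $p$-quantile exceeds $p + \eta$ by a constant margin $\tfrac{1/2 - p}{1}$-ish, which over a block of $t$ admissions pushes the $(p+\eta)$-quantile strictly to the left. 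The first thing I would do is set up the block/phase structure: I want to argue in geometrically shrinking scales, showing that if the relevant quantile is at location $\delta$, then after adding enough members it drops below (say) $\delta/2$ with high probability, and then union-bound over the $O(\log(1/\eps))$ scales down to $\eps$.

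The two technical ingredients, mirroring the machinery of Proposition~\ref{prop-points-right-ab} but for the non-smooth veto process, are: (i) a \emph{density upper bound} — after the group has grown to size $K$, no interval of length $\delta$ contains more than $O(\delta \cdot K)$ members w.h.p., which follows because each admitted candidate is uniform-ish on a region and the per-step probability of landing in a length-$\delta$ interval is $O(\delta)$, so Chernoff plus a union bound over $O(1/\delta)$ intervals does it; and (ii) a \emph{lower bound on progress} — over a block of $t$ admissions during which $q_p$ stays below $1/2$, the number of arrivals in $[0, q_p]$ is at most $(p + \eta/2)t$ w.h.p. by Chernoff (since the true probability is $1/2 \geq p + \eta$ with room to spare), so the $(p+\eta)$-quantile must shed at least $\Omega(\eta t)$ members' worth of mass on its left, i.e. move left past $\Omega(\eta t)$ points. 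Converting "moves past $\Omega(\eta t)$ points" into "moves left by a definite distance" is exactly where the density upper bound (i) is used: if the current group has size $\Theta(t)$ and density is $O(\delta K)$ per length-$\delta$ interval, then $\Omega(\eta t)$ points span a distance bounded below by a constant times $\eta$ (appropriately), giving genuine geometric decrease in the quantile's location.

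The main obstacle — as the paper itself flags — is the discreteness: within a block of $t$ steps the $p$-quantile is not stationary, so the "probability $1/2$" statement of the Observation is only valid step-by-step relative to the \emph{current} quantile, not a fixed one. I would handle this the same way Proposition~\ref{prop-points-right-ab} does: choose the block length $t$ small enough (relative to the number of members currently sitting in a tiny window $[q_p - \sigma, q_p + \sigma]$ around the quantile) that throughout the entire block the $p$-quantile is trapped in that window — this is guaranteed because a quantile can only move out of the window after $\sigma$-worth of new points accumulate on one side, and by the density bound that takes more than $t$ steps. Then every step in the block has its "fraction landing left of current $p$-quantile" at least $p + \eta$ minus an $O(\sigma)$ error from the window width, and the Chernoff argument goes through with slightly degraded constants. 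One subtlety I would be careful about: near $q_p = 1/2$ the Observation gives exactly $1/2$, and for $q_p > 1/2$ one checks (via Claim~\ref{clm-veto-total-prob}, computing $q_p^2 / (1 - 2(1-q_p)^2)$) that the conditional probability of landing in $[0, q_p]$ is even larger than $1/2$, so the drift is if anything stronger there; this means I only ever need the $q_p \le 1/2$ regime for the hard part, and the quantile, once below $1/2$, stays below (again up to controlled fluctuation) and marches to $0$.

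Finally, I would assemble the pieces: fix $\eps > 0$, let $\delta_j = 2^{-j}$ range over scales from (roughly) $1/2$ down to $\eps$; for each scale define a "success event" that the $(p+\eta)$-quantile, starting at most $\delta_j$, falls below $\delta_{j+1}$ within a prescribed (polynomial in $1/\delta_j$ and current group size) number of admission steps, with failure probability $e^{-\Theta(\eta^2 \delta_j^2 k)}$ from the combined Chernoff bounds; these failure probabilities are summable and tend to $0$ as $k_0$ (equivalently the group size when we start the analysis) grows, giving the $1 - o(1)$ guarantee. Once the $(p+\eta)$-quantile is below $\eps$, so a fortiori is the $p$-quantile, which is what the theorem claims. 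The only nonroutine estimates are the two Chernoff applications and the window-trapping bound, all of which are close cousins of what already appears in the proof sketch of Proposition~\ref{prop-points-right-ab}.
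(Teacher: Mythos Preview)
Your high-level plan (track the $(p+\eta)$-quantile with $\eta=\tfrac{1-2p}{8}$, argue in blocks via Chernoff, show geometric decay through scales) matches the paper's strategy, but the execution you propose has a real gap, and the paper's route around it is different from yours.

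\textbf{The gap.} Your ingredient (i), ``no interval of length $\delta$ contains more than $O(\delta K)$ members,'' is false for this process. Conditioned on acceptance with $q_p = q < 1/2$, the admitted candidate has density $z/q^2$ on $[0,q]$ and $(2q-z)/q^2$ on $[q,2q]$, so the maximal density is $1/q$, not $O(1)$; an interval of length $\delta$ near the quantile carries conditional probability $\Theta(\delta/q)$. Concretely, once $q_p$ has drifted to, say, $0.01$, the interval $[0,0.02]$ already holds at least $pK$ members, which is far larger than $0.02K$. Hence your conversion ``$\Omega(\eta t)$ points $\Rightarrow$ distance $\Omega(\eta)$'' fails. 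The correct scaling would give distance $\Omega(\eta q)$, which is still geometric, but you would have to rebuild the density bound in this $q$-relative form. Relatedly, the window-trapping you import from Proposition~\ref{prop-points-right-ab} needs a density \emph{lower} bound (the window must already contain $\geq t$ points on each side), which you never establish; the smooth-process machinery you are borrowing does not apply here, and indeed the paper remarks explicitly that this process is not smooth since $f(q)=1/2$ is constant on $[0,1/2]$.

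\textbf{How the paper avoids all of this.} The paper does not use window-trapping or any density estimate. Its key observation is purely deterministic: during a block of $\eta k$ admissions starting from size $k$, the $p$-quantile can never exceed $q_{p+\eta}(S(k))$, simply because only $\eta k$ new points are added. With this in hand, the paper does not count arrivals relative to the moving quantile at all; instead it fixes the target interval $[0,\psi\cdot q_{p+\eta}(S(k))]$ for an explicit constant $\psi=\sqrt{(1+2p+2\eta)/(2(1-\eta))}<1$ and shows directly that the conditional probability of landing there is at least $\psi^2/2$, which by the choice of $\eta$ exceeds $p+\eta$. A single Chernoff bound per block then shows that after enough blocks (growing the group by a fixed factor) the $(p+\eta)$-quantile itself lies in $[0,\psi\cdot q_{p+\eta}(S(k))]$. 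This yields the geometric contraction by factor $\psi$ with no point-to-distance conversion and no density argument, which is exactly the step where your approach breaks.
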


\begin{proof}
Let $\eta = \frac{1-2p}{8}$ and let $\psi = \sqrt{\dfrac{1+2p+2\eta}{2(1-\eta)}}$. The proof is composed of two main claims. First, in Claim \ref{clm:extreme-eps-k} below, we consider increasing the group by adding $\eta k$ members and reason about the number of new members in the interval $[0,\psi \cdot q_{p+\eta}(S(k))]$. The main advantage of reasoning about additional $\eta k$ members is that we are guaranteed that in all those steps the $p$-quantile will always be left of $[0,q_{p+\eta}(S(k))]$. This makes reasoning about the acceptance probabilities considerably easier. Next, we clump together many of these $\eta k$ increments in order to increase the group's size by a factor of about $\roundsize$. We show in Claim \ref{clm-extm-psi} that if we wait till the group's size is large enough than by multiple applications of Claim \ref{clm:extreme-eps-k} with high probability
the $(p+\eta)$-quantile (and hence the $p$-quantile) becomes closer to $0$ by a factor of $\psi$.

\begin{claim} \label{clm:extreme-eps-k}
Consider adding $\eta k$ members to the group $S(k)$, with probability $(1-e^{-\Theta(\eta^3 k)})$: the number of members that joined the interval $[0,\psi \cdot q_{p+\eta}(S(k))]$ in the $\eta k$ steps is at least $(1-\eta) \frac{\psi^2}{2} \cdot \eta k$.
\end{claim}
\begin{proof}
Note that by definition we have that for every step $k'$ of the $\eta k$ steps, $q_{p}(S(k')) \leq q_{p+\eta}(S(k))$.
\begin{itemize}
\item  $q_{p}(S(k'))\leq 1/2$. In this case the probability to accept a candidate below $\psi \cdot q_{p+\eta}(S(k))$ is  
$$\dfrac{(\psi \cdot q_{p+\eta}(S(k)))^2}{2 (q_{p}(S(k')))^2} \geq \dfrac{(\psi \cdot q_{p+\eta}(S(k)))^2}{2 (q_{p+\eta}(S(k)))^2} = \frac{\psi^2}{2}.$$
\item  $q_{p}(S(k'))> 1/2$. In this case the probability to accept a candidate below $\psi \cdot q_{p+\eta}(S(k))$ is  
$$\dfrac{(\psi \cdot q_{p+\eta}(S(k)))^2}{1-2(1-(q_{p}(S(k')))^2} \geq \dfrac{(\psi \cdot q_{p+\eta}(S(k)))^2}{1-2 (1-q_{p+\eta}(S(k)))^2} \geq \frac{\psi^2}{2}$$.
\end{itemize}

Hence, it is always the case that the probability to accept a candidate below $\psi \cdot q_{p+\eta}(S(k))$ is at least $\dfrac{(\psi \cdot q_{p+\eta}(S(k)))^2}{2 (q_{p}(S(k')))^2} \geq \dfrac{(\psi \cdot q_{p+\eta}(S(k)))^2}{2 (q_{p+\eta}(S(k)))^2} = \frac{\psi^2}{2}$. Thus, in expectation in the $\eta k$ steps at least $\frac{\psi^2}{2} \cdot \eta k $ candidates in the interval $[0,\psi \cdot q_{p+\eta}(S(k))]$ join the group. Let $X$ be the number of candidates accepted below $q_{p+\eta}(S(k_{i}))$. By taking a Chernoff bound we get that with high probability $X$ is at least $(1-\eta)\frac{\psi^2}{2} \cdot \eta k$:
\begin{align*}
Pr[ X \leq (1-\eta) \cdot \frac{\psi^2}{2} \cdot \eta k] &\leq e^{\frac{-\eta^3 \cdot \psi^2 \cdot k}{ 4} }.
\end{align*}
\end{proof}

We now reason about groups of growing sizes. Let $k_1 \geq\frac{1}{\eta^6}$ (the larger $k_1$ is, the higher the probability the theorem holds is) and for any $i>1$ let $k_{i+1} = (1+\eta) k_i$. Also let $j=\lceil \log_{1+\eta} \roundsize \rceil$. We show that with high probability: $q_{p+\eta}(S(k_{i+j})) < \psi \cdot q_{p+\eta}(S(k_{i}))$ 

\begin{claim} \label{clm-extm-psi}
For $i>1$, with probability $(1-\sum_{l=1}^{j-1} e^{-\Theta(\eta^3 (1+\eta)^lk_i)})$, $q_{p+\eta}(S(k_{i+j})) < \psi \cdot q_{p+\eta}(S(k_{i}))$.
\end{claim}
\begin{proof}
To prove the claim we apply Claim \ref{clm:extreme-eps-k} $j$ times. First we observe that with high probability for every $i$, $q_{p+\eta}(S(k_{i+1})) \leq q_{p+\eta}(S(k_i))$. The reason for this is that by Claim \ref{clm:extreme-eps-k} we have that the number of members that joined in the $\eta k_i$ steps between $k_i$ and $k_{i+1}$ in the interval  $[0,\psi \cdot q_{p+\eta}(S(k))]$ is at least
\begin{align*}
(1-\eta) \cdot \frac{\psi^2}{2} \cdot \eta k_{i} & \geq (1-\eta) \cdot \dfrac{1+2p+2\eta}{4(1-\eta)} \cdot \eta k_{i} 
=  (p+ \frac{5}{2}\eta) \cdot \eta k_{i} >(p+\eta) \cdot \eta k_{i}.
\end{align*}
The last transition is due to the fact that by definition $1=8\eta +2p$. Now, since $\psi<1$ we get that the number of members that joined $[0, q_{p+\eta}(S(k_i))]$ is also at least $(p+\eta) \cdot \eta k_{i}$ and hence we have that $q_{p+\eta}(S(k_{i+1})) \leq q_{p+\eta}(S(k_i))$. Therefore, the number of members that joined the interval $[0,\psi \cdot q_{p+\eta}(S(k_i))]$ between steps $k_i$ and $k_{i+j}$ is at least $(p+ \frac{5}{2}\eta)(k_{i+j}-k_i)$. To complete the proof we show that this number is greater than $(p+\eta)k_{i+j}$. To this end, observe that $j$ was chosen such that $k_{i+j} = \roundsize k_i + c$ for some $c\geq 0$. Thus we have that:
\begin{align*}
(p+ \frac{5}{2}\eta)(k_{i+j}-k_i) &= (p+ \frac{5}{2}\eta)(\roundsize-1)k_i +(p+ \frac{5}{2}\eta) c 
\end{align*}
We now separately bound the coefficient of $k_i$:
\begin{align*}
(p+ \frac{5}{2}\eta)(\roundsize-1) &= (p+\eta) \roundsize + \frac{3}{2}\eta (\roundsize)-(p+ \frac{5}{2}\eta) \\
&= (p+\eta) \roundsize + \frac{3}{4}-(p+ \frac{5}{2}\eta) > (p+\eta) \roundsize
\end{align*}
Hence, we have that $q_{p+\eta}(S(k_{i+j})) < \psi \cdot q_{p+\eta}(S(k_{i}))$ as required. To compute the probability that the claim assertion holds we can take a union bound on the bad events in Claim \ref{clm:extreme-eps-k} and get that the claim holds with probability of at least $1-\sum_{l=i}^{i+j-1} e^{-\Theta(\eta^3 k_l)} = 1-\sum_{l=1}^{j-1} e^{-\Theta(\eta^3 (1+\eta)^lk_i)}$
\end{proof}
To complete Theorem \ref{thm:conv:extreme} proof we can simply apply Claim \ref{clm-extm-psi} repeatedly and get
that each time we increase the group by a factor of $(1+\eta)^j$ the distance of the $(p+\eta)$-quantile from $0$ is decreasing by an extra factor of $\psi$. Hence, for any $\eps$ there exists $k_\eps$ such that $q_{p+\eta}(S(k_\eps))<\eps$ and for any $k>k_\eps$ it holds that $q_{p+\eta}(S(k_\eps))<\eps$. 

To compute the probability that the assertion of the theorem holds we take a union bound over all the bad events and get that the assertion holds with probability at least
\begin{align*}
1-\sum_{i=1}^{\infty} e^{-\Theta(\eta^3 k_i)} = 1-\sum_{i=1}^{\infty} e^{-\Theta(\eta^3 (1+\eta)^{i-1} k_1)}.
\end{align*} 
\end{proof}

\subsection{$r<1/2$: Convergence to a Continuous Distribution }

We show that for $p>1/2$ ($r<1/2$) as the group grows the $p$-quantile of the group is converging to the point $\tau_p = \frac{2p + \sqrt{2p^2-p}}{1+2p}>1/2$. If the $p$-quantile is at $q>1/2$ then the probability of a candidate $x<q$ to be the next accepted candidate is $\frac{x}{1-2(1-q)^2}$. As with probability $x$ a candidate below it appears and by Claim \ref{clm-veto-total-prob} for $q>1/2$ the probability of any candidate to be accepted is $1-2(1-q)^2$. Similarly we can compute the acceptance probability of a candidate $x>q$. By multiplying the probabilities by $2$ we get the following density function (sketched in Figure \ref{fig-density} for $q=\tau_p$): 
\begin{align*}
h(x,q) = \begin{cases}
  \frac{2x}{1-2(1-q)^2} & \text{  for $0 \leq x \leq q$} \\
  \frac{4q-2x}{1-2(1-q)^2} & \text{  for $q < x \leq 1$}.
  \end{cases}  
\end{align*}


We observe that as $q$ converges to $\tau_p$ the distribution of opinions in the group is converging to $h(x,\tau_p)$. This is because for values $q$ close to $\tau_p$ the value of the function $h(x,q)$ is close to that of $h(x,\tau_p)$.

The proof that the $p$-quantile converges to $\tau_p$ acquires an additional level of complexity by the fact that the probability of the next accepted candidate to be in the interval $[0,q_p(S(k))]$ has a different expression for $q_p(S(k))<1/2$ and for $q_p(S(k))>1/2$. That is, in both cases with probability $q_p(S(k))^2$ both of the candidates will be in the interval $[0,q_p(S(k))]$ and hence a candidate in this interval will be accepted. However, since we condition on the event that any candidate is accepted at all
we have to divide $q_p(S(k))^2$ by the probability that a candidate is accepted
which is different $q_p(S(k))<1/2$ and for $q_p(S(k))>1/2$. In particular we have that for $q_p(S(k))<1/2$ the probability of the next accepted candidate to be in $[0,q_p(S(k))]$ is $1/2$ and for $q_p(S(k))>1/2$ this probability is $f(q_p(S(k)) = \frac{q_p(S(k)^2}{1-2(1-q_p(S(k))^2}$.

Luckily, the admission process when $q_p(S(k))>1/2$ (restricted to steps in which a candidate is admitted) is 
smooth and hence by Theorem \ref{thm-general-converge} the $p$-quantile converges to $\tau_p$. This implies that to show convergence it suffices 
to show that with high probability there exists some step $k_{1/2}$ such that from this step onwards the $p$-quantile remains above $q>1/2$. This is done similarly to the proof showing that for $p>1/2$ the $p$-quantile converges to $0$. Here, when $q_p(S(k))<1/2$ the probability of the next accepted candidate to be below $q_p(S(k))$ is $1/2$. Since $q_p(S(k))$ denotes the location of the $p$-quantile for $p>1/2$, $q_p(S(k))$ has to move right, at least until it passes $1/2$. Formally, we prove the following theorem:

%
%

\begin{theorem} \label{thm-general-converge-non-extreme}
Consider a group $S(k_0)$. For any $\eps >0$, with probability $1- o(1)$, there exists $k_\eps>k_0$, such that for any $k'>k_\eps$, $|q_p(S(k'))-\tau_p|<\eps$.
\end{theorem}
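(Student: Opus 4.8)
The plan is to reduce the proof to two pieces, mirroring the structure set up in the text. First, observe that once the $p$-quantile is above $1/2$ the restricted admission process (restricted to the steps in which a candidate is actually admitted) is smooth: indeed, on $(1/2,1]$ the acceptance probability of a candidate below $q$ is $f(q)=\frac{q^2}{1-2(1-q)^2}$, which one checks is continuous and strictly increasing on $(1/2,1]$, and the second smoothness condition (quadratic lower bound, linear upper bound on the probability of accepting a candidate in an interval of length $\delta$) holds because the total acceptance probability $1-2(1-q)^2$ is bounded away from $0$ there. Solving $f(\tau_p)=p$ gives exactly $\tau_p=\frac{2p+\sqrt{2p^2-p}}{1+2p}$, and $\tau_p>1/2$ precisely when $p>1/2$. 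Hence, by Theorem~\ref{thm-general-converge-smooth} applied to this restricted process, \emph{conditioned on the $p$-quantile staying above $1/2$ from some step on}, the $p$-quantile converges to $\tau_p$ with probability $1-o(1)$.

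\textbf{The remaining step} is therefore to show that with probability $1-o(1)$ there is a step $k_{1/2}$ after which $q_p(S(k))$ never drops below (say) $1/2$ — in fact it suffices to show it reaches some fixed level strictly above $1/2$ and then never returns below $1/2$. This is where I would reuse the machinery of the $r>1/2$ case almost verbatim. The key local fact is the one already noted in the text: whenever $q_p(S(k))<1/2$, the probability that the \emph{next accepted} candidate lies in $[0,q_p(S(k))]$ is exactly $1/2$ (both candidates fall in $[0,q_p]$ with probability $q_p^2$, and the total acceptance probability is $2q_p^2$ by Claim~\ref{clm-veto-total-prob}). Since $p>1/2>1/2$, wait — here $p>1/2$ so $1/2<p$, meaning the probability of accepting to the left of the $p$-quantile is $1/2<p$, so the $p$-quantile has a deficit on its left and must drift \emph{right}. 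I would make this precise exactly as in Claims~\ref{clm:extreme-eps-k} and~\ref{clm-extm-psi}: pick $\eta$ so that $1/2+$(a margin) $< p$, add batches of $\eta k$ members at a time (so that throughout a batch the $p$-quantile is guaranteed to stay on one side of a reference point), apply a Chernoff bound to conclude that more than a $(1/2+\text{margin})$-fraction, hence enough, of the new accepted members land to the right, and clump $\Theta(\log(1/\eta)/\eta)$ batches to multiply the gap $1/2 - q_p$ down by a constant factor $\psi<1$ each super-round. Iterating, with probability $1-\sum_i e^{-\Theta(\eta^3 k_i)} = 1-o(1)$ the $p$-quantile exceeds $1/2$ and, by the same drift estimate (the drift is still rightward, and the per-step displacement is controlled by the density bounds), stays above a fixed threshold $>1/2$ thereafter.

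\textbf{Combining the two pieces:} on the intersection of the two good events (each of probability $1-o(1)$), from step $k_{1/2}$ on the process coincides with a smooth admission process whose quantile converges to $\tau_p$, so for any $\eps>0$ there is $k_\eps$ with $|q_p(S(k'))-\tau_p|<\eps$ for all $k'>k_\eps$; a union bound keeps the total failure probability $o(1)$. Finally, the statement about the limiting density $h(x,\tau_p)$ follows because $h(x,q)$ depends continuously on $q$, so $q_p\to\tau_p$ forces the opinion distribution to $h(x,\tau_p)$ — the truncated-triangle shape of Figure~\ref{fig-density}.

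\textbf{Main obstacle.} The genuinely delicate part is not the convergence-after-$1/2$ step (that is handed to us by Theorem~\ref{thm-general-converge-smooth}) but rather proving that the $p$-quantile, once above $1/2$, does not wander back below $1/2$ infinitely often: the drift argument gives a rightward push in expectation, but because the process is discrete and the quantile can jump, one must combine the Chernoff concentration on the \emph{number} of new left-side members with the density bounds (upper bound $c_2\delta$ on the mass of any length-$\delta$ interval) to translate "few new members joined to the left" into "the quantile moved right by a definite distance", and then sum the failure probabilities over all super-rounds so that the union bound is still $o(1)$. This is exactly the bookkeeping already carried out for Theorem~\ref{thm:conv:extreme}, so I would factor it out as a lemma of the form "if at every step with $q_p(S(k))<c$ the probability the next accepted candidate is left of $q_p$ is $\le p-\gamma$ for fixed $\gamma>0$, then w.p. $1-o(1)$ the $p$-quantile eventually exceeds $c$ and stays above $c-\eps$," and invoke it here with $c=1/2$.
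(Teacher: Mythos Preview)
Your proposal is correct and follows essentially the same two-piece decomposition as the paper: (i) verify that for $q>1/2$ the restricted process is smooth and invoke Theorem~\ref{thm-general-converge-smooth} to get convergence to $\tau_p$, and (ii) show via a batched Chernoff argument mirroring the $r>1/2$ proof that $q_p$ eventually crosses $1/2$ and never drops back. The paper packages step (ii) as Proposition~\ref{prp-veto-smooth}, tracking the $(p-\eta)$-quantile (with $\eta=(p-1/2)/4$) rather than $q_p$ itself to buy the per-batch slack, and shows an \emph{additive} increase $q_{p-\eta}\mapsto q_{p-\eta}+\eta\,q_{p-\eta}^2$ per super-round rather than the multiplicative shrinking of the gap $1/2-q_p$ you sketch; your version would need the same adjustment since the rightward push scales with $q_p^2$, not with $1/2-q_p$, but this is exactly the bookkeeping you already flag as the main obstacle.
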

\begin{proof}
We begin by observing that for $q>1/2$ the admission rule is smooth. First, recall that for $q\geq 1/2$ we have that $f(q)=\frac{q^2}{1-2(1-q)^2}$. Note that due to the normalization it is indeed the case that each step a candidate is accepted to the group. Also, it is easy to verify that this is function is increasing and continuous for $q\in(1/2,1]$. Next, observe that for $q\in(1/2,1]$ the probability of accepting a candidate in every interval $\delta$ is at most $\frac{2\delta}{1-2(1-q)^2} \leq 4\delta$ and at least $\delta^2$. For the lower bound observe that the interval with the minimal acceptance probability is the last interval $[1-\delta,1]$. Now, by the geometric representation depicted in Figure \ref{fig-delta} a candidate in $[1-\delta,1]$ will be accepted if the point defined by the two candidates is in one of the striped triangles in the figure. Hence the total probability of accepting a candidate in $[1-\delta,1]$ is at least $\delta^2$. 

Now, by Theorem \ref{thm-general-converge} we have that the $p$-quantile of a smooth admission rule converge to $\tau_p=\frac{2p + \sqrt{2p^2-p}}{1+2p}$ with probability $1-o(1)$. In Proposition \ref{prp-veto-smooth} below we show that with high probability there exists some value of $k$ starting which the admission rule is always smooth. Thus, we have that with high probability the $p$-quantile of the group converges to $\tau_p$.
\end{proof}


\begin{figure}[t]
\begin{center}
\includegraphics[width=.3\textwidth]{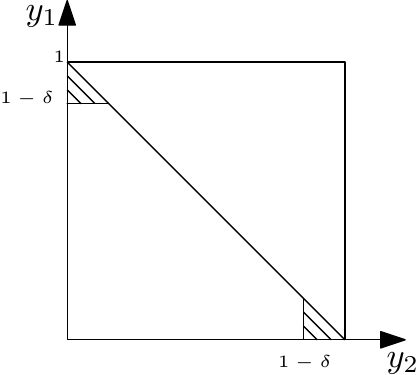}
\caption{The striped areas are the areas that if the point defined by the two candidates $(y_1,y_2)$ is in one of them then a candidate in the interval $[1-\delta/2,1]$ will join the group.  \label{fig-delta}  
}
\end{center}
\end{figure}

%

\begin{proposition} \label{prp-veto-smooth}
Consider a group $S(k_0)$. There exists a time step $k_{1/2}$ such that with probability $1-o(1)$ for any $k'>k_{1/2}$ the admission rule is smooth.
\end{proposition}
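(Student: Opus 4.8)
The plan is to mirror the proof of Theorem~\ref{thm:conv:extreme}: there $p<1/2$ and the $p$-quantile is driven down to $0$; here $p>1/2$ and the $p$-quantile is driven up to, and then past, $1/2$. The engine is the same Observation used there: whenever $q_p(S(k))\le 1/2$, the probability that the next accepted candidate lies in $[0,q_p(S(k))]$ equals exactly $1/2$, and since $p>1/2$ this is strictly below $p$, so the $p$-quantile must drift right. Since, as established in the proof of Theorem~\ref{thm-general-converge-non-extreme}, the veto process restricted to accepting steps is smooth at step $k$ exactly when $q_p(S(k))>1/2$, it suffices to show that with probability $1-o(1)$ there is a step $k_{1/2}$ after which $q_p(S(k'))>1/2$ holds forever.

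I would fix $\eta=\frac{2p-1}{8}$ (so $1/2+2\eta<p$) and, following Theorem~\ref{thm:conv:extreme}, track the $(p-\eta)$-quantile, noting $q_{p-\eta}(S(k))>1/2$ implies $q_p(S(k))>1/2$. The basic block is the reflection of Claim~\ref{clm:extreme-eps-k}: add $\eta k$ members to $S(k)$ and set $Y=q_{p-\eta}(S(k))$. Throughout these steps $q_p(S(k'))\ge Y$, because at most $\eta k$ members are added and $\#\{i:x_i<Y\}\le(p-\eta)k$ in $S(k)$, hence $\le(p-\eta)k+\eta k=pk\le pk'$ in $S(k')$. A short geometric computation (splitting on whether the current quantile is below or above $1/2$, as in Claim~\ref{clm-veto-total-prob}) then shows that at each such step a newly accepted candidate falls below the \emph{fixed} point $Y$ with probability at most $1/2$. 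A Chernoff bound gives that with probability $1-e^{-\Theta(\eta^3 k)}$ at most $(1/2+\eta)\eta k$ of the new members land below $Y$, so $S((1+\eta)k)$ has at most $(p-\eta)k+(1/2+\eta)\eta k=(p-\eta)(1+\eta)k-2\eta^2 k$ members below $Y$; hence $q_{p-\eta}$ does not move left, and in fact at least $2\eta^2 k$ members of $S((1+\eta)k)$ lie strictly above $Y$ and at or below $q_{p-\eta}(S((1+\eta)k))$.

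To turn this member surplus into an actual rightward advance of the quantile I need an upper bound on how many members can lie in a short interval just above the quantile; this is obtained exactly as in the proof of Theorem~\ref{thm-general-converge-smooth}, via Chernoff together with a union bound over dyadic subintervals. Because for small $q_p$ the acceptance region (and the mass it deposits) is concentrated within distance $O(q_p)$ of the quantile, the resulting advance is multiplicative, so---just as the factor $\psi$ is iterated in Claim~\ref{clm-extm-psi}---iterating the basic block multiplies $q_{p-\eta}(S(k))$ by a factor bounded away from $1$ each time the group grows by a bounded factor, and after finitely many iterations $q_{p-\eta}(S(k))>1/2$ with probability $1-o(1)$. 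Finally, once $q_{p-\eta}(S(k))>1/2$ the process is smooth; a ``stability above $1/2$'' argument keeps it there. Here single-step quantile displacements are $O(1/k)$, and for $q_{p-\eta}\in(1/2,\tau_{p-\eta})$ (note $\tau_{p-\eta}>1/2$ since $p-\eta>1/2$) the same geometric estimate gives a strict rightward drift, so re-running the block argument started just above $1/2$ shows $q_{p-\eta}(S(k'))>1/2$ for all larger $k'$ with probability $1-o(1)$. The statement then follows by a union bound over the per-iteration failure events, which form a convergent geometric-type series, exactly as at the end of the proof of Theorem~\ref{thm:conv:extreme}.

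The step I expect to be the main obstacle is the last one, the ``no return above $1/2$'' claim: Theorem~\ref{thm-general-converge-smooth} assumes smoothness at every step, whereas here smoothness is only available on the event $q_p>1/2$, so closing this mild circularity requires re-running a localized version of the smooth-process machinery (Proposition~\ref{prop-points-right-ab}) in the window where $q_{p-\eta}$ is just above $1/2$, using that per-step moves vanish like $1/k$. A secondary technical point is the density upper bound needed to convert the member surplus into a genuine displacement of the quantile when $q_p$ is near $0$; this is routine given the estimates already used for Theorem~\ref{thm-general-converge-smooth}, but it is what makes the reflection of Claim~\ref{clm-extm-psi} go through.
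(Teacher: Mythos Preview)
Your overall architecture matches the paper's proof exactly: set $\eta=(2p-1)/8$, track the $(p-\eta)$-quantile, use that $q_p(S(k'))\ge Y:=q_{p-\eta}(S(k))$ throughout an $\eta k$-block, and split into a ``reach $1/2$'' phase and a ``stay above $1/2$'' phase. The differences are in implementation, and in both places the paper is simpler than what you propose.

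\textbf{Reaching $1/2$.} You propose to obtain a member surplus above $Y$ and then convert it to a displacement via the density upper bounds from the proof of Theorem~\ref{thm-general-converge-smooth}. But those density bounds (Lemma~\ref{lem:delta0_segments}/Claim~\ref{clm-density-in-delta-intervals}) rest on the smoothness hypothesis ``probability of landing in any interval of length $\delta$ is at most $c_2\delta$,'' which fails badly when $q_p<1/2$ (the conditional probability there is $\delta/q_p^2$). So you cannot simply import that machinery; you would have to redo it with process-specific estimates. The paper sidesteps this entirely: instead of surplus-plus-density, it bounds directly the probability that the next accepted member lands in $[0,\,Y+\eta Y^2]$. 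Since the probability of landing in $[0,Y]$ is at most $1/2$ and the crude bound for $[Y,\,Y+\eta Y^2]$ is $2\eta Y^2/(2Y^2)=\eta$, the total is $\le 1/2+\eta=p-3\eta$. Aggregating over a factor-$1/\eta$ growth (so that the at most $k_i$ old members become negligible) then gives the additive advance $q_{p-\eta}\mapsto q_{p-\eta}+\eta\,q_{p-\eta}^2$ (Claims~\ref{clm:half-eps-k} and~\ref{clm-half-decreasing}). Note this is \emph{not} multiplicative by a factor bounded away from $1$; the factor is $1+\eta Y$, which tends to $1$ as $Y\to0$, though the iteration $Y\mapsto Y+\eta Y^2$ still reaches $1/2$ in finitely many rounds. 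Your multiplicative claim can in fact be made true with more careful accounting, but it is not what the density argument you cite would give, and it is not needed.

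\textbf{Staying above $1/2$.} The obstacle you single out---circularity between smoothness and $q_p>1/2$---does not arise in the paper, because the stability step does not use smoothness at all. The argument is a one-line Chernoff bound (Claim~\ref{clm:non-stays}): if $q_{p-\eta}(S(k_i))>1/2$, then $q_p(S(k'))>1/2$ for all $k'$ in the next $\eta k_i$ steps, so the probability a new member lands in $[0,1/2]$ is at most $1/2$; hence at most $(1/2+\eta)\eta k_i<(p-\eta)\eta k_i$ of the new members land there, and $q_{p-\eta}(S(k_{i+1}))>1/2$ again. Iterating and union-bounding gives ``stays above $1/2$ forever'' with probability $1-o(1)$. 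No reference to $\tau_{p-\eta}$ or to Proposition~\ref{prop-points-right-ab} is needed.
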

\begin{proof}
Pick $\eta = \frac{p-1/2}{4}$. We will show that with high probability there exists $k_{1/2} > k_0$ such that $q_{p-\eta}(S(k_{1/2}))>1/2$ and for any $k'>k_{1/2}$, $q_{p}(S(k_{1/2}))>1/2$. This implies that the admission rule is smooth for any group of size greater than $k_{1/2}$.


The proof follows a very similar structure to the proof of Theorem \ref{thm:conv:extreme}. We begin by showing that as we increase the group by $\eta k$ members at most $(p-2\eta) \cdot \eta k$ members will join the interval $[0, q_{p-\eta}(S(k_i))+ \eta \cdot q_{p-\eta}(S(k_i))^2]$.

\begin{claim} \label{clm:half-eps-k}
For a group $S(k)$ such that $q_{p-\eta}(S(k))<1/2$, consider adding $\eta k$ members to the group $S(k)$, with probability $(1-e^{-\Theta(\eta^3k)})$. The number of members that joined the interval $[0, q_{p-\eta}(S(k))+ \eta \cdot q_{p-\eta}(S(k))^2]$ in the $\eta k$ steps at most $(p-2\eta) \cdot \eta k$.
\end{claim}
\begin{proof}
Note that by definition we have that for every step $k'$ of the $\eta k$ steps, $q_{p}(S(k')) \geq q_{p-\eta}(S(k))$. We claim that this implies that the probability that the next accepted candidate is below $q_{p-\eta}(S(k)) + \eta \cdot q_{p-\eta}(S(k))^2$ is at most $p-3\eta$. To see why this is the case we first observe that with probability $ q_{p-\eta}(S(k))^2$ both candidates are below $q_{p-\eta}(S(k))$ and hence a candidate below $q_{p-\eta}(S(k))$ is accepted. Also note that with probability at most $2 \eta \cdot q_{p-\eta}(S(k))^2$ a member in the interval $[q_{p-\eta}(S(k)), q_{p-\eta}(S(k))+ \eta \cdot q_{p-\eta}(S(k))^2]$ joins the group, as this is an upper bound on the probability that a candidate in this interval shows up. Finally, as we only take into account steps in which a candidate was chosen we should divide the probabilities above by the probability of accepting a member. It it easy to see that the probability of accepting a member is minimized when $q_p(S(k')) = q_{p-\eta}(S(k))$ and hence the probability that the next accepted candidate is in the interval $[0, q_{p-\eta}(S(k))+ \eta \cdot q_{p-\eta}(S(k))^2]$ is at most 
\begin{align*}
\dfrac{q_{p-\eta}(S(k))^2 + 2 \eta \cdot q_{p-\eta}(S(k))^2}{2q_{p-\eta}(S(k))^2} = \dfrac{1}{2}+\eta = p-3\eta.
\end{align*}
By taking a Chernoff bound we get that with high probability the number of members accepted in the $\eta k$ steps in the interval $[0, q_{p-\eta}(S(k))+ \eta \cdot q_{p-\eta}(S(k_i))^2]$ is at most $(p-2\eta) \cdot \eta k$. Denote by $X$ the number of candidates accepted in the interval $[0, q_{p-\eta}(S(k))+ \eta \cdot q_{p-\eta}(S(k_i))^2]$, then
$Pr[X\geq (1+\eta) \cdot (p-3\eta) \cdot \eta k] \leq e^{-\frac{\eta^2 (p-3\eta) \cdot \eta k}{3}} \leq e^{-\Theta(\eta^3k)}$
\end{proof}

Let $k_1 \geq\frac{1}{\eta^6}$ (the larger $k_1$ is, the higher the probability the theorem holds is) and for any $i>1$ let $k_{i+1} = (1+\eta) k_i$. Also let $j=\lceil \log_{1+\eta} \frac{1}{\eta} \rceil$. We show that with high probability: $q_{p-\eta}(S(k_{i+j})) > q_{p-\eta}(S(k_i))+ \eta \cdot q_{p-\eta}(S(k_i))^2$.

\begin{claim} \label{clm-half-decreasing}
For $i>1$, with probability $(1-\sum_{l=1}^{j-1} e^{-\Theta(\eta^3 (1+\eta)^lk_i)})$, either there exists $k_i<k'<k_{i+j}$ such that $q_{p-\eta}(S(k))>1/2$ or $q_{p-\eta}(S(k_{i+j})) > q_{p-\eta}(S(k_i))+ \eta \cdot q_{p-\eta}(S(k_i))^2$.
\end{claim}
\begin{proof}
To prove the claim we apply Claim \ref{clm:half-eps-k} $j$ times. First we observe that with high probability for every $i$, $q_{p-\eta}(S(k_{i+1})) \geq q_{p-\eta}(S(k_i))$. The reason for this is that by Claim \ref{clm:half-eps-k} we have that the number of members that joined in the $\eta k_i$ steps between $k_i$ and $k_{i+1}$ in the interval $[0, q_{p-\eta}(S(k_i))+ \eta \cdot q_{p-\eta}(S(k_i))^2]$ is at most $(p-2\eta) \cdot \eta k$. Since the interval $[0, q_{p-\eta}(S(k_i))]$ is included in this interval we have that at most $(p-2\eta) \cdot \eta k$ members were admitted to it. Hence, $q_{p-\eta}(S(k_{i+1})) \geq q_{p-\eta}(S(k_i))$.

Next, we consider all the members in the $k_{i+j}-k_i$ steps. The fact that $q_{p-\eta}(S(k_{i+1})) \geq q_{p-\eta}(S(k_i))$ implies that in the $k_{i+j}-k_i$ steps the number of candidates admitted to the interval $[0, q_{p-\eta}(S(k_i))+ \eta \cdot q_{p-\eta}(S(k_i))^2]$ is at most $(p-2\eta)(k_{i+j}-k_i)$. We observe that in the worst case for the group $S(k_i)$ the interval $[0, q_{p-\eta}(S(k_i))+ \eta \cdot q_{p-\eta}(S(k_i))^2]$ contained $k_i$ points. Thus, to prove the claim, we should show that $(p-2\eta)(k_{i+j}-k_i)+k_i < (p-\eta)k_{i+j}$. Observe that 
\begin{align*}
(p-2\eta)\cdot(k_{i+j}-k_i) + k_i &= (p-\eta)k_{i+j} -\eta k_{i+j} + (1-p+2\eta)k_i  \\
&< (p-\eta)k_{i+j} - k_{i} + (1-p+2\eta)k_i \\
&< (p-\eta)k_{i+j}.
\end{align*}
For the second transition we used the fact that $k_{i+j} = (1+\eta)^{\lceil \log_{1+\eta} \frac{1}{\eta} \rceil} k_i > \frac{1}{\eta} k_i$.

Thus by taking a union bound over the bad events we get that the claim holds with probability
at least   $1-\sum_{l=i}^{i+j-1} e^{-\Theta(\eta^3 k_l)}$.
\end{proof}

Lastly, we show that once we reached a step $k_i$ such that the $(p-\eta)$-quantile is above $1/2$, then with high probability the $p$-quantile of $S(k_{i+1})$ will also be above $1/2$.
\begin{claim} \label{clm:non-stays}
If $q_{p-\eta}(S(k_i)) > 1/2$, then with probability $(1-e^{-\Theta(\eta^3 k_i)}$, $q_{p-\eta}(S(k_{i+1})) > 1/2$
\end{claim}
\begin{proof}
Observe that since $q_{p} (S(k')) >1/2$ for all $k'$ of the $\eta k_i$ steps, the expected number of members accepted below $1/2$ in these $\eta k_i$ steps is at most $\frac{\eta k_i}{2}$. By taking a Chernoff bound, we have that with probability $1-e^{-\frac{\eta^3 k_i}{6}}$ the number of candidates accepted in $[0,1/2]$ is at most $(1/2+\eta)\eta k_i<(<p-\eta)k+i$:
\begin{align*}
Pr[X\geq (1+\eta) \cdot \frac{1}{2} \eta k_i] \leq e^{-\frac{\eta^2 \frac{1}{2} \eta k_i}{3}} = e^{-\frac{\eta^3 k_i}{6}}
\end{align*}
\end{proof}

The proof of proposition is completed by observing that we can apply Claim \ref{clm-half-decreasing} till we reach $k_{1/2}$ such that $q_{p-\eta}(k_{1/2})>1/2$. Once we reached $k_{1/2}$ we repeatedly apply Claim \ref{clm:non-stays} to get that the $(p-\eta)$-quantile stays above $1/2$ with high probability. By taking a (loose) union bound over the bad events we have that the probability of this is at least $1-\sum_{i=1}^{\infty} e^{-\Theta(\eta^3 k_i)}$. \qed \end{proof}

\section{Fixed-Size Groups} \label{sec-fixed}
We now turn our attention to groups of fixed size. As committees are a very good example for such groups throughout this section we will refer to the group as a committee.  
A committee $x$ consisting of $n$ members is represented by the location of its members' opinions on the real line: $(x_1, \ldots, x_n)$ with the convention that $x_i \leq x_{i+1}$ for every $i$. We consider an iterative process, where in each iteration one of the current committee members $x_i$ can be replaced by a new candidate $y$.
The member $x_i$ is replaced by $y$ if and only if at least $\lceil (n-1)/2 \rceil + \ell$
members weakly prefer $y$ over $x_i$. This means that $x_i$ is replaced if for 
$|x_j-y|\leq|x_j-x_i|$ for at least $\lceil (n-1)/2 \rceil + \ell$ members $x_j$ such that $j \neq i$. The case $\ell=0$ corresponds to standard majority, and $\ell=\lfloor (n-1)/2 \rfloor$ corresponds to consensus. 


We study two aspects of the evolution of fixed-size committees: (1) the magnitude of drift of the committee (i.e., how far the committee can move from its initial configuration), and (2) whether there exist committee members who are guaranteed immunity against replacement. We are able to answer both questions in the more demanding worst case framework. That is, we assume that both the members that might be replaced and the contender are chosen adversarially.

\subsection{Magnitude of drift}

It is easy to see that for usual majority ($\ell=0$) the committee
can move arbitrarily far when its initial configuration is an arithmetic
progression $x_i=i$ (we simply keep replacing $x_1$ by $x_n+1$.) 
More generally, in the next theorem we show that under the majority rule, any committee with distinct members can be transformed into an arithmetic progression and hence the drift from the initial configuration is unbounded.

\begin{proposition}
\label{prop:majority-far}
For every initial configuration in which all $x_i$'s are distinct, the committee can move arbitrarily far under the majority voting rule.
\end{proposition}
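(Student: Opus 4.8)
The plan is to follow the two‑step route the text alludes to: (i) observe that an arithmetic progression drifts arbitrarily far, and (ii) show that \emph{any} configuration with distinct members can be steered, by a finite sequence of legal majority replacements, into \emph{some} arithmetic progression. Step (i) is short: if $(x_1,\dots ,x_n)$ is an arithmetic progression with common difference $d>0$, then replacing $x_1$ by $y=x_n+d$ is legal, because a member $x_j$ weakly prefers $y$ over $x_1$ exactly when $x_j\ge (x_1+y)/2$, and there are precisely $\lceil (n-1)/2\rceil$ such indices (for even $n$ this count includes the member equidistant from $x_1$ and $y$); the resulting committee is the same progression shifted right by $d$, so iterating moves the committee arbitrarily far. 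Everything therefore reduces to step (ii), and $n\le 2$ is already trivial since any two distinct points form an arithmetic progression.

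For step (ii) I would first record the elementary mobility fact: the set of candidates a member $x_j$ weakly prefers over $x_i$ is the closed interval with endpoints $x_i$ and $2x_j-x_i$, so $x_i$ can be replaced by any point lying in at least $\lceil (n-1)/2\rceil$ of these intervals over $j\ne i$. The idea is then to \emph{collapse} the committee onto a very short arithmetic progression parked just below the median. Concretely: let $m=x_{\lceil n/2\rceil}$ be the (lower) median of the initial configuration, fix $\delta>0$ much smaller than every gap of the initial configuration, and set target slots $t_i=m-(n-i+1)\delta$ for $i=1,\dots ,n$, so $t_1<\dots <t_n<m$ is an arithmetic progression of common difference $\delta$. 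Fill the slots one member at a time, always moving an \emph{extreme} member \emph{inward}: in the first phase repeatedly move the current largest member into the highest still‑empty slot until no not‑yet‑placed member lies above all slots (this places, in decreasing order, every original member above $m$ and then the original member sitting at $m$); in the second phase repeatedly move the current smallest member into the lowest still‑empty slot, placing the remaining original members in increasing order. After exactly $n$ moves the committee equals the target progression, distinctness being preserved throughout because the slots are distinct and every original position stays bounded away from the slot interval.

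The heart of the argument, and what I expect to be the main obstacle, is verifying that each of these inward moves is legal, i.e.\ endorsed by at least $\lceil (n-1)/2\rceil$ members; this needs a careful count that is slightly parity‑ and phase‑dependent. For a first‑phase move of a current maximum $x$ (which lies above $m$) into a slot $t\in[t_1,t_n]$, the relevant midpoint $(x+t)/2$ exceeds $m$ — this is exactly where smallness of $\delta$ and the fact that every original member above $m$ is bounded away from it get used — so every member currently at or below $m$ endorses the move; these are all the already‑placed members together with the bottom part of the original committee, a set whose size is at least $\lceil (n-1)/2\rceil$ (with slack when $n$ is odd, and exactly $n/2$ at the start when $n$ is even) and only grows as more members are placed. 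The step that inserts the original median $m$ into its slot is the tightest point, and the symmetric estimate controls the second‑phase rightward moves; in both, after reducing, one is left with an inequality of the type $3n\ge 2$ together with the observation that at least one already‑filled slot just above the target also endorses, which makes the count reach $\lceil (n-1)/2\rceil$ for all $n\ge 3$. Arranging the order of the moves and the choice of $\delta$ so that every one of these inequalities holds simultaneously is where the real work lies.
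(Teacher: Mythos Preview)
Your two–step plan is sound and step (i) matches the paper exactly. Step (ii), however, is carried out quite differently. You park all $n$ target slots in a tiny interval \emph{strictly below} the median and then move every member—including the median itself—into that interval, first pulling the right half down and then pushing the left half up. This works, but the step in which the median $m$ is moved into its slot is genuinely delicate: the ``midpoint exceeds $m$'' argument you use for the earlier first‑phase moves no longer applies (since now $x=m$), and you are forced into the parity‑dependent count you sketch at the end. That count does go through (for odd $n=2k+1$ the $k$ originals below $m$ already suffice; for even $n=2k$ one needs, as you note, the single already‑filled slot $t_{k+1}$ just above the target to supply the missing vote), but it is exactly the kind of case analysis one would like to avoid.

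The paper sidesteps all of this by never moving the median. With $n=2k+1$ it places $k$ slots just \emph{below} $M=x_{k+1}$ and $k$ slots just \emph{above} $M$, then replaces $x_1,\dots,x_k$ by $M-\epsilon,\dots,M-k\epsilon$ and $x_{k+2},\dots,x_{2k+1}$ by $M+\epsilon,\dots,M+k\epsilon$. Because each replacement leaves the median and the entire opposite half intact, those $k+1$ members automatically prefer the newcomer, so every move is legal without any counting. What your asymmetric construction buys is nothing extra—the paper's symmetric version is strictly simpler and the legality verification becomes a one‑liner.
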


\begin{proof}
For the proof consider, for simplicity, the case of odd
$n$ (the case of even $n$ is similar). Let the initial
configuration be
$x_1<x_2< \ldots <x_{2k+1}$
. Let $\med=x_{k+1}$ be the
median and let $\epsilon $ be a small positive real satisfying,
say,
$\epsilon k<\med-x_{k}=x_{k+1}-x_k$  and
$\epsilon k \leq x_{k+2}-\med=x_{k+2}-x_{k+1}$.
Now in step $i$ ($1 \leq i \leq k$),  replace $x_i$ by $\med-\epsilon
i$, and in step  $k+i$ ($1 \leq i \leq k$) replace $x_{k+1+i}$ by
$\med+\epsilon  i$. It is easy to verify that these replacements are
legal (in fact, in each of them we have at least $k+1$ points that
prefer the newcomer). Now we have an arithmetic progression and it can move
arbitrarily far, by the observation above (in these steps
we have only $k$
points that prefer the new one).
\end{proof}

In view of the above, it is interesting that even if $\ell$
is $1$, the committee cannot move
too far away.
The next theorem establishes an upper bound on the distance the committee can move, as a function of $\ell$ and the diameter of the initial configuration $D = x_n-x_1$:

\begin{theorem}
\label{thm:bounded-shift}
If $n=2k+1$, $1 \leq \ell \leq k$, and the initial configuration has diameter $D = x_n-x_1$,
for any future configuration $x'$, it holds that $x'_{k-\ell+2} \leq x_n + \frac{Dk}{2\ell-1}$ and $x'_{k+\ell} \geq x_1 - \frac{Dk}{2\ell-1}$. The term $\frac{Dk}{2\ell-1}$ is tight up to a constant factor.
\end{theorem}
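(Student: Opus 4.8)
The plan is first to reduce to the single inequality $x'_{k-\ell+2}\le x_n+\frac{Dk}{2\ell-1}$; the other one follows by applying this to the reflected configuration $\widetilde x=(-x_n,\dots,-x_1)$, whose diameter is again $D$ and whose dynamics is the mirror image of the original. Write $\Delta=\frac{Dk}{2\ell-1}$ and $B=x_n+\Delta$. Since $x'_{k-\ell+2}$ is exactly the $(k+\ell)$-th largest coordinate, the inequality is equivalent to the invariant: \emph{in every reachable configuration, at most $k+\ell-1$ coordinates exceed $B$}. I would prove this by induction on the number of replacement steps, so all the content lies in showing that a single step preserves it.

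The structural input is a description of a legal \emph{rightward} move. If member $x_i$ is replaced by $y>x_i$ and $m=(x_i+y)/2$ denotes the midpoint, then a member $x_j$ ($j\ne i$) weakly prefers $y$ if and only if $x_j\ge m$; hence legality means at least $k+\ell$ of the other $2k$ coordinates are $\ge m$ (equivalently at most $k-\ell$ are $<m$), and $y=2m-x_i$. From this I extract two facts about any move that \emph{strictly increases} the $(k+\ell)$-th largest coordinate $q$: (a) the replaced $x_i$ must lie strictly below $q$, so the $(k+\ell)$-th largest among the other coordinates equals $q$; legality then forces $m\le q$, hence $y=2m-x_i\le 2q-x_1$ where $x_1$ is the current leftmost coordinate; and (b) after the move the new $(k+\ell)$-th largest coordinate is $\min\{\,y,\ x_{k-\ell+3}\,\}$, the previous $(k+\ell-1)$-th largest, so $q$ can only ``climb'' toward the order statistic immediately above it and never overshoots it. (Everything is stated for configurations in general position; ties are dealt with by a routine perturbation.)

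Using (a) and (b) I would run an amortized argument tracking $q$ together with $x_1$. A step that raises $q$ is charged either to the gap between $q$ and the coordinate just above it — when $q$ climbs a single ``notch'' via (b) — or, when $q$ leaps past several coordinates, to the leftmost coordinate through the reach bound $y\le 2q-x_1$ of (a). The factor $\frac1{2\ell-1}$ appears because the coordinates above $q$ that have not themselves been promoted past $x_n$ always lie in a band controlled by the initial spread of the $2\ell-1$ middle coordinates $x_{k-\ell+2},\dots,x_{k+\ell}$, which is at most $D$; the factor $k$ appears because at every step only the $k-\ell+1\le k$ coordinates at positions $1,\dots,k-\ell+1$ can serve as the replaced $x_i$ in a $q$-raising move. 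Summing the per-step increments of $q$ over the whole process — using (b) to bound one increment by a current gap, (a) to bound ``long jumps,'' and a counting bound on how many $q$-raising moves occur before the supply of low coordinates below the frontier runs out — yields a total rise of at most $\frac{k}{2\ell-1}D$, i.e.\ $q\le x_n+\Delta$, which is the invariant.

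For the matching lower bound I would give an explicit initial configuration and schedule: place a cluster of ``fuel'' coordinates at the left end, arrange the $2\ell-1$ middle coordinates with equal gaps $\Theta\!\big(\tfrac{D}{2\ell-1}\big)$ across $[0,D]$, use the opening moves to push a right ``anchor'' cluster out as far as the structural bound permits, and then repeatedly leapfrog the current leftmost coordinate to just past the frontier; each such move is legal because the $k+\ell$ coordinates at or beyond its midpoint are precisely the anchor together with the still-unmoved middle coordinates, and each advances $x_{k-\ell+2}$ by one gap. Carrying this out for $\Theta(k)$ effective advances drives $x_{k-\ell+2}$ to $x_n+\Omega\!\big(\tfrac{Dk}{2\ell-1}\big)$. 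I expect the main obstacle — both for the upper bound and for pinning down the constant — to be the coupling between the two inequalities: the reach $2q-x_1$ of a leapfrog depends on the current leftmost coordinate, which can itself drift left, so the rightward drift cannot be bounded in isolation. The crux is to show this cannot be exploited — morally, moving a coordinate from ``far left'' to ``far right'' trades one excess coordinate for another rather than producing a new one — and to follow the discrete evolution of $q$ closely enough that the constant comes out as $\frac{k}{2\ell-1}$ rather than a weaker $O(k/\ell)$.
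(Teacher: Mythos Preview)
Your structural observations (a) and (b) are correct and useful, and the reduction by reflection is fine. The gap is that the ``amortized argument'' you outline is never actually carried out, and the place where it would have to do real work is precisely the difficulty you yourself flag at the end: the reach bound $y\le 2q-x_1$ is only useful if you control $x_1$, but $x_1$ can drift left without bound under the same dynamics, and the ``gap'' charge $q'\le x_{k-\ell+3}$ is only useful if you control $x_{k-\ell+3}$, which can itself have been pushed arbitrarily far right by earlier moves. Your heuristic for the factors $k$ and $\tfrac1{2\ell-1}$ (``only $k-\ell+1$ low coordinates can be removed''; ``the $2\ell-1$ middle coordinates lie in a band of width $D$'') does not survive: low coordinates are replenished after every replacement, and the middle coordinates do not stay in any fixed band. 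So as written this is a plausible plan with the key lemma missing.

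The paper supplies that lemma via a potential function you do not consider: let $\Phi(x)=\sum_{j=1}^{2k+1}|x_j-x_{k+1}|$ be the total distance from the median. One checks (this is the paper's Lemma) that whenever a legal step moves the median to the right, $\Phi$ \emph{decreases} by at least
\[
2\sum_{j=k-\ell+2}^{k}|x'_j-x_j|\;+\;|x'_{k+1}-x_{k+1}|,
\]
i.e.\ it pays twice the displacement of each of the $\ell-1$ coordinates in positions $k-\ell+2,\dots,k$ and once the displacement of the median. Since $\Phi\le kD$ always, summing over the whole process gives $(2(\ell-1)+1)\,t\le k$ whenever the block $x_{k-\ell+2},\dots,x_{k+1}$ has moved right by $Dt$, which is exactly $t\le\frac{k}{2\ell-1}$. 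This single inequality simultaneously handles the ``left drift feeds right drift'' coupling that stalls your approach, because $\Phi$ is a global quantity insensitive to which side is moving.

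For tightness the paper does not use a cluster-and-leapfrog schedule but a geometric progression $x_{i+1}-x_i=(1-\delta)^{i-1}$ with $\delta$ tuned so that $x_{k-\ell+2}$ is equidistant from $x_1$ and the next point $x_{2k+2}$ of the progression; one finds $\delta=\Theta(\ell/k)$, the process converges after total travel $\Theta(1/\delta)$, and comparing with the initial diameter $\Theta(k)$ gives the claimed $\Theta(Dk/\ell)$. Your proposed construction may well work too, but the analysis you sketch (``$\Theta(k)$ effective advances, each of size $\Theta(D/(2\ell-1))$'') would need to be checked; in particular you have to verify that the anchor cluster can be pushed far enough \emph{before} the leapfrogging begins, since your own upper bound limits how far it can go.
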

\begin{proof}
We assume for simplicity that $n$ is odd. The case of even $n$ is similar \app{(see remark \ref{rem-even} in Appendix \ref{app-fixed})}. The proof relies on the following Lemma which we prove in \noapp{the full version} \app{Appendix \ref{app-fixed}}:

\begin{lemma}
\label{lem:shift}
Let the configuration before a step be $x = (x_1, x_2, \ldots, x_{2k+1})$, and the configuration after a step in which $y$ has been added and $x_i$ been dropped be $x' = (x_1', x'_2, \ldots, x'_{2k+1})$. If the median moved to the right, then the sum of distances from the median has decreased by at least $2\sum_{j=k-\ell+2}^{k} d(x_j,x'_j) + d(x_{k+1},x'_{k+1})$.
\end{lemma}

We show that $x'_{k-\ell+2} \leq x_n + \frac{Dk}{2\ell-1}$.
By symmetry the same argument implies that $x'_{k+\ell} \geq x_1 - \frac{Dk}{2\ell-1}$.
Consider any configuration $x'$ during the process.
We show that if $x'_{k-\ell+2} \geq x_n + Dt$, then $t \leq \frac{k}{2\ell-1}$.
If $x'_{k-\ell+2} \geq x_n + Dt$, then for every $j \in \{k-\ell+2, \ldots, k+1\}$, the point $x_{j}$ has moved at least $Dt$ to the right.
It is easy to see that the sum of distances from the median is always bounded by $D \lfloor n/2 \rfloor$.
Therefore in the original configuration the sum of distances is at most $kD$.
By Lemma~\ref{lem:shift}, it must hold that $2\sum_{j=k-\ell+2}^{k} d(x_j,x'_j) + d(x_{k+1},x'_{k+1}) \leq kD$.\footnote{Note that Lemma \ref{lem:shift} applies to a single change, and we are discussing a sequence of changes. However, the sum of distances of the $j$-th point is lower bounded by the distance from its
initial to final location.}
Now, substitute $d(x_j,x'_j) \geq Dt$ for every $j \in \{k-\ell+2, \ldots, k+1\}$ to get
$2(\ell-1)Dt + Dt \leq kD$, or equivalently $t \leq \frac{k}{2\ell-1}$, as desired.
%

To see that this is asymptotically tight, consider a profile $x_1, \ldots, x_{2k+1}$ with $x_{i+1}-x_i = (1-\delta)^{i-1}$, where $\delta$ is chosen to make point $x_{k-\ell+2}$ equally distanced from points $x_1$ and $x_{2k+2}$, where $x_{2k+2}$ is defined by the same geometric progression (i.e., $x_{2k+2} = x_{2k+1} + (1-\delta)^{2k}$).
Here, it can be shown that $\delta=\Theta(k/\ell^2)$.
The process continues iteratively by always considering the next point in the geometric progression versus the current smallest point in the profile. The new candidate continues to be chosen over the smallest point at all iterations.
The process converges to a point at distance $\sum_{i \geq 0}(1-\delta)^i = 1/\delta$ from $x_1$.
The distance that point $x_{k-\ell+2}$ moved is roughly $1/\delta = \Theta(k^2/\ell)$, whereas the diameter of the initial configurations is at most $2k$. Thus, the distance that $x_{k-\ell+2}$ moved is $\Theta(Dk/\ell)$, as claimed.
\end{proof}


For the case of consensus (i.e., $\ell=k$), we establish a stronger bound on the shift of the committee. In the \noapp{full version}\app{Appendix \ref{app-fixed}} we prove that:

\begin{proposition}
\label{prop:consensus}
For the case of consensus (i.e., $\ell=k$), if $n \geq 3$ and the initial configuration has diameter
$D=x_n-x_1$, then  every new element that will be added to the
committee  during the process is at least $x_1-D$ and at most
$x_n+D$.
\end{proposition}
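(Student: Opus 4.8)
The plan is to exhibit a potential function on committees that never increases under a legal step and that upper-bounds the position of the rightmost member; a mirror-image potential then handles the leftmost member. Throughout I write the sorted committee as $c_1 \le c_2 \le \dots \le c_n$, and I denote the initial configuration by $C_0=(x_1,\dots,x_n)$.

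First I would record a structural fact about consensus: the only member a step can actually move is the unique minimum or the unique maximum. Indeed, if the member $c_i$ being replaced is neither, pick indices $j\neq i$ with $c_j\le c_i$ and $j'\neq i$ with $c_{j'}\ge c_i$; under consensus both of these members must weakly prefer the newcomer $y$ over $c_i$, that is $|c_j-y|\le c_i-c_j$ and $|c_{j'}-y|\le c_{j'}-c_i$, and these two inequalities together force $y=c_i$, leaving the committee unchanged. Hence every step either replaces the unique minimum $c_1$ by some $y$ with $c_1\le y\le 2c_2-c_1$ (the upper bound being the binding consensus constraint, the one imposed by $c_2$), or replaces the unique maximum $c_n$ by some $y$ with $2c_{n-1}-c_n\le y\le c_n$, or does nothing.

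The crux is to verify that $\Phi(C)=c_n+c_2-c_1$ is non-increasing. I would check this by a short case analysis according to which extreme moves and where $y$ lands relative to $c_2,\dots,c_{n-1}$; in each case the new value of $\Phi$ is controlled by combining the relevant consensus inequality ($y\le 2c_2-c_1$ on the left, $y\ge 2c_{n-1}-c_n$ on the right) with an elementary order relation such as $c_3\le c_n$, $c_2\le c_n$, $y\ge c_1$ in the left case, or $2c_1\le c_2+c_{n-1}$, $c_{n-1}\le c_n$ in the right case. I expect this to be the main obstacle: the correct potential is not an obvious guess, and one must resist the natural but failing candidates — the diameter $c_n-c_1$, the quantity $2c_n-c_1$, or the sum of pairwise distances, each of which is either too weak or not monotone. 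It is also here that the hypothesis $n\ge 3$ enters (so that $c_3$ exists and satisfies $c_3\le c_n$); the statement genuinely fails for $n=2$.

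To conclude: since $\Phi(C)\ge c_n(C)$ pointwise and $\Phi$ never increases, every reachable committee $C$ satisfies $c_n(C)\le\Phi(C)\le\Phi(C_0)=x_n+x_2-x_1\le x_n+(x_n-x_1)=x_n+D$, and moreover any newly admitted element $y$ obeys $y\le\max(2c_2-c_1,\,c_n)\le\Phi\le x_n+D$ for the committee just before $y$ is added, so no added member ever exceeds $x_n+D$. Running the identical argument after the reflection $x\mapsto -x$ — equivalently, using the reflected potential $\Psi(C)=c_n-c_{n-1}-c_1$, which is non-increasing for the same reasons — gives $c_1(C)\ge x_1-(x_n-x_{n-1})\ge x_1-D$, so no added member is below $x_1-D$ either, which completes the proof.
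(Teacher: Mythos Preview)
Your proof is correct and follows essentially the same route as the paper: the same structural observation that only the current minimum or maximum can be nontrivially replaced, the same potential $\Phi(C)=c_n+c_2-c_1$ shown non-increasing by a case analysis, and the same symmetry/reflection to handle the lower bound. The paper writes out the case split in full (and, like you, relies on $c_3\le c_n$ in the subcase where the newcomer overshoots $c_n$, which is exactly where $n\ge 3$ is used), but your sketch names the right inequalities in each branch.
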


\subsection{Immunity}

Given $n$, the majority needed to replace an existing member, and an initial configuration, we say that a committee member has \emph{immunity} if it can never be replaced by the process above.
We show that a phase transition occurs at a majority of $\frac{3}{4} n$.
%
For simplicity of presentation we assume that $n=4k+3$.

\begin{theorem}
\label{thm:phase-immunity}
Let $n=4k+3$. There exists an initial configuration in which a member has immunity if and only if a majority of at least $3k+3$ is required.
\end{theorem}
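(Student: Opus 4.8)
The plan is to prove the two directions separately, with the threshold $3k+3$ (out of $n = 4k+3$) as the pivot. For the "if" direction I would exhibit an explicit configuration: two tight clusters of $2k+1$ points each, one near $0$ and one near $1$, together with a single "median" point $\med$ placed exactly in the middle. The claim is that with a required majority of $3k+3 = \lceil(n-1)/2\rceil + \ell$ where $\ell = k+1$, the median point $\med$ can never be replaced. The key idea is that to unseat $\med$ one needs $3k+3$ members weakly preferring the newcomer $y$ to $\med$; since the two extreme clusters together contribute only $4k+2$ points and $\med$ itself does not vote, a successful challenger must be preferred by essentially an entire cluster plus a large part of the other one. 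I would then invoke Theorem~\ref{thm:bounded-shift} (with $\ell = k+1 \geq 1$, so the drift bound $\frac{Dk}{2\ell-1}$ applies): because the two clusters each have internal diameter $0$ initially and the whole configuration has some diameter $D$, the clusters can drift only a bounded amount, and by choosing the initial separation large enough (scaling the gap between each cluster and $\med$ relative to $D$) neither cluster can ever produce a point close enough to $\med$ to flip $\ell = k+1$ of the "wrong-side" voters. Quantitatively, I would track that the left cluster occupies $2k+1$ of the $4k+2$ voting slots, so a challenge to $\med$ coming from the left needs all $2k+1$ left points plus $k+2$ right points to prefer $y$ over $\med$; the bounded-drift bound prevents the right cluster from ever having $k+2$ points on the left side of the midpoint between $y$ and $\med$.

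For the "only if" direction I must show that if the required majority is at most $3k+2$ (i.e. $\ell \leq k$), then in \emph{every} initial configuration, every member can eventually be replaced. Here I would argue constructively: given any configuration and any target member $x_i$, I want to drive enough points to one side of $x_i$ so that a newcomer on that side gets $3k+2$ votes. The mechanism is the unbounded-drift phenomenon — but Theorem~\ref{thm:bounded-shift} only gives unbounded drift for $\ell = 0$. For $1 \leq \ell \leq k$ the drift is bounded, so I cannot simply push a cluster to infinity. Instead I would use a more delicate sequence of replacements that rearranges points \emph{locally} around $x_i$: the point is that with only $\ell \leq k$ extra votes required beyond bare majority, one can always find $3k+2$ members of the current committee (other than $x_i$) lying on one side, because $x_i$ has at most $2k+1$ points strictly on its far side, hence at least $2k+2$ points (including ties, excluding $x_i$) weakly on the near side — wait, this needs care. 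The cleaner route: first show that one can always move points so that at least $3k+2$ of them lie in a tiny interval on one side of $x_i$, using a "caterpillar" sequence of legal replacements each of which only needs a bare majority of nearby points, and then one final replacement with those $3k+2$ points voting unanimously against $x_i$ removes it.

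I expect the main obstacle to be the "only if" direction: making rigorous that for every $\ell \leq k$ and every configuration one can \emph{legally} accumulate $3k+2$ points on one side of the designated victim. The subtlety is that each intermediate replacement must itself be legal (needing $3k+3$ weak preferences for $\ell = k$, the hardest case), so the argument must simultaneously handle the consensus-like regime $\ell = k$ while still achieving enough concentration — and Proposition~\ref{prop:consensus} shows that for full consensus ($\ell = k$, majority $3k+3 > 3k+2$) nothing of this sort is possible, confirming the threshold is exactly right and that the argument must break at precisely $\ell = k+1$. I would therefore structure the "only if" proof around the inequality $3k+2 \leq \lceil(n-1)/2\rceil + \ell$ forcing $\ell \leq k$, and show that with $\ell \leq k$ the number of "free" votes is large enough that a contracting sequence of replacements (each shrinking the spread on the victim's heavy side) is always available, whereas at $\ell = k+1$ the count $4k+2 < 2(3k+3) - (n-1) = \ldots$ fails, matching the clustered construction of the first direction. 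The even-$n$ and general-$n$ cases I would relegate to a remark, since the structure is identical up to floors and ceilings.
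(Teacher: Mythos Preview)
Your ``if'' construction (two clusters of $2k+1$ points flanking a single median) matches the paper's, but the reasoning you give for why the clusters cannot drift toward the median is not the right one. You try to apply Theorem~\ref{thm:bounded-shift} to the whole committee of size $4k+3$; that only bounds the position of certain order statistics of the full configuration and says nothing about a cluster staying coherent. The paper's key step is a \emph{sub-committee} argument: to replace a member $x_i$ of the left cluster you need $3k+2+\ell$ votes total (writing the required majority as $3k+2+\ell$ with $\ell\geq 1$), and since the median plus the entire right cluster supply only $2k+2$, at least $k+\ell$ of the $2k$ remaining left-cluster members must vote for the newcomer. Hence the left cluster of size $2k+1$ behaves as an autonomous committee with excess parameter $\ell\geq 1$, and Theorem~\ref{thm:bounded-shift} applied to \emph{that} committee (with its own diameter $D$) bounds its drift by $\frac{Dk}{2\ell-1}$. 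Placing the median farther than this from each cluster edge finishes the argument. (Your side remark that the clusters have ``internal diameter $0$'' would in fact short-circuit all of this---such a cluster can only admit a newcomer at its own location---but you don't exploit it, and the paper works with positive cluster diameter.) Also, your parameter bookkeeping is off: $3k+3 = \lceil(n-1)/2\rceil + \ell$ gives $\ell=k+2$, not $k+1$, and consensus corresponds to $\ell = 2k+1$, not $\ell=k$.

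The real gap is the ``only if'' direction, where you have only a sketch. The paper gives a concrete staged procedure that you are missing. To remove the leftmost $2k+2$ members (up to and including the median), one proceeds in stages $j=1,\ldots,2k+2$: at the start of stage $j$ the smallest $j$ points form an arithmetic progression, and one first compresses them into a very fine arithmetic progression just below $x_{j+1}$ (replacing $x_1$ by $x_j-\delta_j$, then $x_2$ by $x_j-2\delta_j$, etc., with $\delta_j$ tiny), then marches this block rightward one step at a time until it merges with $x_{j+1}$. Each such replacement is legal because it collects the $2k+1$ right-cluster votes, the median's vote, and at least $k$ votes from within the left arithmetic progression, totalling $\geq 3k+2$. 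This is the ``caterpillar'' you gesture at, but the arithmetic-progression structure and the vote count $2k+2+k=3k+2$ are the content; without them the argument does not go through.
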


In Figure \ref{fig-immunity} we give an example of such a configuration in which the median has immunity. It consists of two clusters, each of size $2k+1$, and an additional point which is the median. Each cluster is located at a different side of the median and sufficiently far from it. We now sketch the proof showing that the median of this committee has immunity. Observe that in order to remove a member in the left cluster at least $k+1$ members of the left cluster have to prefer the contender over the existing member. This means that informally we can consider the left cluster as an independent committee requiring a majority of at least $\lceil (n-1)/2 \rceil + 1$. Thus, as long as the left cluster is sufficiently far from the median, we can apply Theorem \ref{thm:bounded-shift} to show that the drift of the left cluster is bounded. As the same argument holds for the right cluster we have that the median will stay a median. The previous argument relied on the fact that the majority required to remove a candidate is large enough such that the number of votes required separately from each cluster is greater than half its size. This intuition is formalized in the proof of the next proposition:
 
\begin{figure}[t] 
\begin{center}
\includegraphics[width=0.5\textwidth]{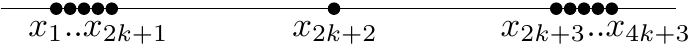}
\caption{An example of the configuration that the median has immunity if at least $3k+3$ votes are required to remove a member. 
} \label{fig-immunity}  
\end{center}
\end{figure}

\begin{proposition}
If $n=4k+3$ and a majority of at least $3k+3$ is required, then there exists a configuration in which the median has immunity.
\end{proposition}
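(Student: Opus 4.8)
The plan is to take the two-cluster configuration sketched in Figure~\ref{fig-immunity} and prove outright that its central member can never be removed. Put the median member $v$ at the origin, a left cluster $L_0$ consisting of $2k+1$ distinct points packed into a short interval whose right endpoint is $-1$, and, symmetrically, a right cluster $R_0$ of $2k+1$ distinct points packed into a short interval whose left endpoint is $1$; take the internal diameter $d$ of each cluster to be at most $1/(2k)$. (For $k=0$ the required majority $3k+3=3$ exceeds the number $n-1=2$ of other members, so no member can ever be removed and the claim is vacuous; assume $k\ge 1$.) Since $n=4k+3$, a member is removed only when $3k+3$ of the other $4k+2$ members weakly prefer the contender. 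If the targeted member lies in the right cluster, then only the $2k+2$ members outside that cluster --- namely $v$ together with all of $L$ --- are ``free'' votes, so at least $3k+3-(2k+2)=k+1$ of the remaining $2k$ members of the right cluster must weakly prefer the contender. Hence, followed along its own lineage, the right cluster is itself a legal run of the fixed-size process on $n'=2k'+1$ members with $k'=k$ and extra margin $\ell'=1$, and symmetrically for the left cluster; for as long as $v$ survives, these two lineages and $v$ partition the committee.

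The first substantive step is to invoke Theorem~\ref{thm:bounded-shift} on each cluster-lineage. With $\ell'=1$, the theorem controls the lineage's own median: for the right cluster it gives that in every reachable configuration the cluster median is at least $1-dk\ge \tfrac{1}{2}>0$, and by symmetry the left cluster's median is always at most $-1+dk\le -\tfrac{1}{2}<0$. It follows that at most $k$ members of the right cluster can ever lie weakly to the left of the origin --- if $k+1$ of them did, the $(k+1)$-st smallest of the cluster's $2k+1$ points, which is precisely its median, would be $\le 0$ --- and, symmetrically, at most $k$ members of the left cluster can ever lie weakly to the right of the origin.

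The second step is a short counting argument at $v$. Suppose $v$ were removed at some step; consider the first such step. Just before it, $v$ is still at the origin and the two cluster bounds above are in force. If the contender $y$ satisfies $y<0$, then any member weakly preferring $y$ to the origin sits at $\le y/2<0$, hence strictly to the left of the origin; but the number of non-$v$ members strictly to the left of the origin is at most $(2k+1)+k=3k+1$ (all of $L$, plus at most $k$ strays from $R$), which is smaller than the $3k+3$ votes needed. The case $y>0$ is symmetric, and $y=0$ leaves the committee unchanged. This contradicts the removal of $v$, so the original median member has immunity, proving the proposition.

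The step I expect to require the most care is the bookkeeping that justifies applying Theorem~\ref{thm:bounded-shift} to a single cluster: one must verify that a cluster-lineage is genuinely a run of the fixed-size process --- at each step either untouched, or with exactly one of its members replaced by a contender that clears the $\ell'=1$ threshold of $k+1$ out of $2k$ --- that this description persists throughout the entire stretch during which $v$ has not been removed, and that the separation between the clusters and the origin (here $1$) strictly exceeds the drift bound $dk$ of Theorem~\ref{thm:bounded-shift}, so the cluster medians provably never reach the origin. With that settled, the rest is only the elementary distance-and-counting calculation above.
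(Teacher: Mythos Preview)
Your proof is correct and follows essentially the same approach as the paper's: construct a two-cluster configuration around a central median, argue that replacing any cluster member requires at least $k+1$ votes from the other $2k$ members of that cluster, apply Theorem~\ref{thm:bounded-shift} to each cluster as an independent $(2k+1)$-member process with $\ell'=1$ to keep the cluster medians bounded away from the origin, and then count to see that no contender can gather $3k+3$ votes against $v$. The paper states the argument for general threshold $3k+2+\ell$ (applying Theorem~\ref{thm:bounded-shift} with parameter $\ell$), but your instantiation at the minimal threshold $3k+3$ suffices and the structure is identical; your explicit handling of the $k=0$ case and the lineage bookkeeping are details the paper leaves implicit.
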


\begin{proof}
Suppose a majority of $3k+2+\ell$ is required, where $1 \leq \ell \leq k$.
Consider the configuration where $x_{2k+1}-x_1 \leq d,~~ x_{4k+3}-x_{2k+3} \leq D$
and the median $\med=x_{2k+2}$ is of distance greater than $\frac{Dk}{2\ell-1}$ from $x_{2k+1}$ and from $x_{2k+3}$.
We claim that the median has immunity.

Consider the $2k+1$ members to the left of $\med$.
Any element $x_i$ of them can be replaced by a new candidate $y$ only if at least $k+\ell$ elements out of the $2k+1$ elements prefer $y$ to $x_i$ (otherwise, $y$ has a majority of at most $3k+\ell+1$, which is not sufficient).
By applying Theorem~\ref{thm:bounded-shift} to this set of $2k+1$ members, we get that in any future configuration $x'_{k-\ell+2} \leq x_{2k+1} + \frac{kd}{2\ell-1}$.
But since the median is of distance greater than $\frac{Dk}{2\ell-1}$ from $x_{2k+1}$, there are at least $k-\ell+2$ elements to the left of the median throughout the whole process.
Analogously, it can be shown that there are always at least $k-\ell+2$ members to the right of the median (by applying the assertion that $x'_{k+\ell} \geq x_1 - \frac{Dk}{2\ell-1}$ from Theorem~\ref{thm:bounded-shift} to the $2k+1$ members to the right of the median).
Now observe that as long as there are at least $k-\ell+2$ elements in each side of the median, it cannot be replaced.
Indeed, for every new candidate $y$, there are at most $3k+\ell$ members who prefer $y$ to the median, while the required number is at least $3k+\ell+2$.
\end{proof}

It is interesting to note that the median can guarantee an even stronger property than immunity, namely to always remain the median. This can be done by slightly modifying the previous instance, so that the distance between the median and each of the two points $x_{2k+1}$ and $x_{2k+3}$ is greater than, say, $kd$. Since no element from the left set can ever be above $x_{2k+1} + kd$ and no element from the right set can ever be below $x_{2k+3} - kd$ 
the original median remains the median forever.

Finally, in \app{Appendix \ref{app-fixed}}\noapp{the full version} we show that the other direction also holds, that is:
\begin{proposition} \label{prop-no-immunity}
If $n=4k+3$ and a majority of at most $3k+2$ is required to replace an existing member, then for any initial configuration no element has immunity.
\end{proposition}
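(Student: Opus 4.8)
The plan is to reduce the statement to a clean combinatorial goal. Fix the member $z=x_i$ we wish to oust, and observe that a single legal step removes $z$ as soon as at least $3k+2$ of the other $4k+2$ members lie strictly on one side of $z$: relocating $z$ to a point infinitesimally on that side is then approved by all of them. So it suffices to show that, from any initial configuration, using only moves that do not touch $z$, we can reach a configuration with $\ge 3k+2$ members strictly on one side of $z$. By the left--right symmetry of the model assume that at most $2k+1$ members lie strictly below $z$ (hence at least $2k+1$ strictly above it); if the upper side already has $\ge 3k+2$ members we are done, so let $a\in\{1,\dots,2k+1\}$ be the number of lower members, $b=4k+2-a\ge 2k+1$ the number of upper ones, and aim to push $3k+2-b$ lower members across $z$, one at a time.

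The engine is the following slack. Any \emph{rightward} relocation of a lower member --- replacing some $x_j<z$ by a point $p>x_j$ with $p$ either $\le z$ or just above $z$ --- is automatically approved by $z$ and by every member currently above $z$, since each of these points is weakly closer to $p$ than to $x_j$. Thus, writing $a'\le 2k+1$ for the current lower count and $b'=4k+2-a'$ for the upper count, a rightward move of a lower member is legal as long as it is also approved by at least $(3k+2)-1-b'\le k$ of the remaining $a'-1$ lower members --- that is, by at most a simple majority of the size-$a'$ sub-committee of lower members. Hence, with respect to rightward moves, the lower members behave like a committee of size $a'\ (\le 2k+1)$ under the \emph{majority} rule; by the maneuver behind Proposition \ref{prop:majority-far} we first turn them into a tight arithmetic progression positioned just below $z$, and then march this progression rightward, each step replacing the lowest lower member by a point just above the highest one, in particular across $z$. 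A direct count shows each march step is approved internally by $\lfloor a'/2\rfloor$ lower members, hence by $\lfloor a'/2\rfloor+1+b'=4k+3-\lceil a'/2\rceil\ge 4k+3-(k+1)=3k+2$ members in total, so it is legal (note the equality when $a'=2k+1$, which is why $3k+2$ is the critical value). Iterating, the upper side grows by one member per crossing until it reaches $3k+2$; then $z$ is replaced, so $z$ has no immunity.

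The technically delicate point is the preprocessing: showing that an \emph{arbitrary} set of lower members can be reshaped --- using only the target-ward moves that collect the $1+b'$ free external votes --- into an arithmetic progression just below $z$, since a march step on a heavily clustered or skewed set could fail to gather its $\lfloor a'/2\rfloor$ internal approvals (one has to "unfold" the clustering scale by scale). I expect this to be the main obstacle, and it is exactly where the hypothesis "majority $\le 3k+2$'' is essential: for a required majority of $3k+3$ the median together with the far side provide one fewer vote, which is precisely enough to prevent a cluster on one side from ever being driven past the median --- the structural mechanism that grants a member immunity in Theorem \ref{thm:phase-immunity}. (As in Proposition \ref{prop:majority-far}, coinciding members are handled separately.)
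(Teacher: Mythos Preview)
Your reduction and the march step are correct, and the overall architecture --- collect the free votes of the target $z$ and the far side, then operate on the near side as a sub-committee under (at most) majority --- is exactly the mechanism the paper uses. The paper frames it slightly differently: rather than fixing $z$ and pushing points across it, it works from the outside in, showing that one process removes every one of $x_1,\dots,x_{2k+2}$ in turn (and symmetrically the right half). But the vote-counting is identical: each replacement is approved by the median, the $2k+1$ members beyond it, and at least $k$ of the near-side members, giving $3k+2$.

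The genuine gap is precisely the step you flag, and your proposed fix does not work. You appeal to the maneuver behind Proposition~\ref{prop:majority-far} to turn an arbitrary lower set into an arithmetic progression, but that maneuver moves points on \emph{both} sides of the sub-committee median --- in particular it moves the right-of-sub-median points \emph{leftward}. A leftward move of a lower member is not approved by $z$ or by any upper member, so it would need all $3k+2$ votes from the at most $2k$ other lower members, which is impossible. Only rightward moves of lower members are ever available to you, and Proposition~\ref{prop:majority-far} does not establish that rightward moves alone suffice to reach a progression. (Concretely: you cannot move the rightmost lower member toward $z$ either, since no other lower member approves that.)

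The paper's proof handles this by an explicit staged construction that uses only rightward moves: at stage $j$ the leftmost $j$ points already form an arithmetic progression; one first shrinks its common difference (by replacing $x_1$ with $x_j-\delta_j$, then the new leftmost with $x_j-2\delta_j$, etc.\ --- all rightward for $\delta_j$ small enough), then marches it rightward until it abuts $x_{j+1}$, yielding a $(j{+}1)$-term progression. Every replacement here is rightward and is approved by $z$, all upper members, and the required $k$ near-side members. This incremental outside-in construction is the missing ingredient; once you have it, your crossing argument goes through.
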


\section{Concluding Remarks}
In this paper we initiate the study of evolving social groups, and the effects of different admission rules on their long-run compositions.
In our models, each group member is represented by a point in $[0,1]$ representing his opinion. Every group member prefers candidates located closer to him to candidates that are further away because of homophily.
We consider stochastic models where in each step two random candidates appear and voted for by the current group members.
In the case of a fixed-size group, our analysis holds even in an adversarial model.

The framework we present extends itself to several exciting directions.
First there are more families of admission rules that are worth studying. One such family is the $p$-majority which we only studied for fixed-size groups. Recall that for growing groups we have analyzed a variant of it that gave special veto power to the founder located at $1$. We suspect that for growing groups the family of $p$-majority admission rules also exhibits a phase transition: for $p>3/4$ as the group grows only candidates close to the extremes will join it; for $p<3/4$ the distribution of opinions in the group will converge to some continuous distribution. Additional interesting extensions include considering candidates that arrive according to a not necessarily uniform distribution on $[0,1]$, and analyzing a process in which at every step more than $2$ candidates apply. While these
extensions lead to interesting questions we believe
that the models we have considered in this paper already shed
light on real life processes involving the dynamics of evolving
groups; the present paper provides a framework and tools for further exploration of this direction.

\bibliographystyle{ACM-Reference-Format-Journals}
\bibliography{n}

%
%

\app{
\newpage
\appendix
\section{Proofs from Section \ref{sec:majority}}
\subsection{A remark about the convergence rate of the majority and consensus rules } \label{app-remark-conv}
It is interesting to note that the
convergence of the majority process is very slow. Indeed, suppose that  when there are $t$ points selected already,
the median is $1/2-g(t)$. Then, by the reasoning above, in the
next step the probability that the chosen point is to the right of the
median exceeds the probability it is on its left by $(2g(t))^2$. This
means that the median, on the  average, steps by $\frac{1}{2} (2g(t))^2$ units
to the right in each step. For small values of $g(t)$ the density of
points in the relevant range is about $2 \delta$ points in an interval
of length $\delta$. 
This means that on the average the median increases by
about $2g(t)^2/(2t)=g(t)^2/t$ in a step. We thus get that $g(t)-g(t+1)$
is essentially $g(t)^2/t$ implying that $g=g(t)$ satisfies the following
differential equation: $g'=-g^2/t$. Solving we get $1/g=\ln t+c$ or
equivalently $t=Ce^{g}$. $C$ can be solved from the initial conditions.
Thus, for example, if we start with $t_0=500$ (which is close to $10e^4$)
and the median for that $t$ is $1/4=1/2-1/4$, we get that it will take 
close to $t=10e^{1/\eps}$ steps to get to a
median $1/2-\eps$. More information about how to prove that discrete random processes
converge with high probability to the solution of a differential equation
can be found in \cite{wormald}.

In contrast, in the consensus model convergence is fast: it is
easy to see that for any initial configuration, after $t$ steps,
with high probability every newly elected member lies in
$[0,O(1/\sqrt t)] \cup [1-O(1/\sqrt t), 1]$. 
%

\section{Proofs from Section \ref{sec-quantile}} \label{app-quantile}
In the following section we prove Theorem \ref{thm-general-converge}. Recall that the theorem we wish to prove is the following:

Consider a group $S(k_0)$ that uses a smooth admission process $f_p(\cdot)$. Let $\tau_p$ be the unique value satisfying $f_p(\tau_p)=p$. For any $\eps >0$, with probability $1- o(1)$, there exists $k'_\eps$, such that for any $k'>k'_\eps$, $|q_p(S(k'))-\tau_p|<\eps$.


The proof is quite complicated hence we first provide a brief outline of the proof. We repeat propositions and claims that were already presented in the main body to make the proof easier to follow. 

The proof has two main building blocks that are used iteratively to show that the $p$-quantile converges to $\tau_p$. First, in Section \ref{app-sub-closer-to-t}, we show that if the $p$-quantile is in a relatively small and dense interval then it will move closer to $\tau_p$ by a certain number of points which is a function of the density of the interval it is in. In the second building block, in Section \ref{app-sec-density} we use Chernoff bounds to show that on one hand the intervals are dense enough so that the $p$-quantile will remain long enough in the same interval. But, on the other hand, they are not too dense to prevent from the $p$-quantile to move a non-negligible distance towards $\tau_p$. In the rest of the proof, Section \ref{app-sec-combining}, we carefully use these two building blocks on groups of growing size to show that indeed the $p$-quantile converges to $\tau_p$.  

\subsection{If the $p$-quantile is confined to a small interval then it moves closer to $\tau_p$} \label{app-sub-closer-to-t}
We now formally show that if the $p$-quantile is in a relatively small and dense interval then it will move closer to $\tau_p$ by a certain number of points. We first state and prove the proposition for the case that $q_p(S(k))<\tau_p$ and then provide the statement for the symmetric case.

\begin{proposition} \label{prop-points-right}
Consider adding $t$ more members to a group $S(k)$, such that $q_p(S(k))<\tau_p$. For any $\sigma < \dm(k)$ such that:
\begin{enumerate}
\item $g_r(\dm(k)-\sigma) > g_r(\dm(k))/2 >c_2 \cdot \sigma$.
\item Each of the intervals $[q_p(S(k))-\sigma,q_p(S(k))]$ and $[q_p(S(k)),q_p(S(k))+\sigma]$ contain at least $t$ members.
\end{enumerate}

The following hold with probability at least $1-e^{-\Theta(g_r(\dm(k))^2\cdot t)}$: 
\begin{enumerate}
\item $\dm(S(k+t)) \leq \dm(S(k))$.
\item The group $S(k+t)$ contains at least $\frac{g_r(\dm(k))}{4} \cdot t$ members in the interval $[q_p(S(k)),q_p(S(k+t))]$. (i.e., the $p$-quantile moved by at least $\frac{g_r(\dm(k))}{4} \cdot t$ members as the group size was increased by $t$.)
\end{enumerate}
\end{proposition}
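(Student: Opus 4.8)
The plan is to fix $S(k)$, abbreviate $q = q_p(S(k))$, $\tau = \tau_p$, and $\omega = \omega(k) = \tau - q > 0$, and to control the $p$-quantile over the whole block of $t$ admissions. There are two ingredients: a \emph{deterministic confinement} of the quantile to the window $[q-\sigma,\,q+\sigma]$ throughout the block (this is where Condition~2 is used), and a Chernoff estimate showing that with high probability fewer than $\bigl(p-g_r(\omega)/4\bigr)t$ of the $t$ newly admitted candidates land below $q$, which forces the quantile rightward by the claimed number of members.

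First I would establish the confinement. Admitting a single member increases, by at most one, the number of points below any fixed threshold; hence after $i\le t$ admissions the number of points of $S(k+i)$ below $q-\sigma$ exceeds that in $S(k)$ by at most $t$. Since $q$ is itself the $p$-quantile of $S(k)$ and $[q-\sigma,q]$ contains at least $t$ members of $S(k)$, fewer than $pk-t$ members of $S(k)$ lie below $q-\sigma$ (the $\pm 1$ boundary terms are routine to handle), so in $S(k+i)$ at most $pk\le p(k+i)$ points lie below $q-\sigma$, forcing $q_p(S(k+i))\ge q-\sigma$. The symmetric argument, using the $\ge t$ members in $[q,q+\sigma]$, gives $q_p(S(k+i))\le q+\sigma$. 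As $\sigma<\omega$, we conclude $q_p(S(k+i))\in[q-\sigma,\,q+\sigma]$ for every $0\le i\le t$, and in particular the quantile never crosses $\tau$.

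Next, with the quantile pinned inside $[q-\sigma,q+\sigma]$, I would give a uniform upper bound on the conditional probability, at each of the $t$ steps, that the admitted candidate lands strictly below $q$. If the quantile sits at $q'\le q$, this probability is $f_p(q')+\Pr[\text{admit a candidate in }[q',q)]\le f_p(q)+c_2\sigma = \bigl(p-g_r(\omega)\bigr)+c_2\sigma$, using monotonicity of $f_p$, the upper-bound half of smoothness, and $f_p(q)=f_p(\tau-\omega)=p-g_r(\omega)$; if instead $q'\ge q$ it is at most $f_p(q')\le f_p(q+\sigma)=p-g_r(\omega-\sigma)$. By Condition~1, $c_2\sigma<g_r(\omega)/2$ and $g_r(\omega-\sigma)>g_r(\omega)/2$, so in either case the probability is at most $p-g_r(\omega)/2$. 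Hence the number of the $t$ new members landing below $q$ is stochastically dominated by $\mathrm{Bin}\bigl(t,\,p-g_r(\omega)/2\bigr)$, and an additive Chernoff bound shows that with probability $1-e^{-\Theta(g_r(\omega)^2 t)}$ this number is below $\bigl(p-g_r(\omega)/4\bigr)t$.

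Finally I would translate the count into a statement about $q_p(S(k+t))$. On that event, the number of points of $S(k+t)$ strictly below $q$ is at most $pk+\bigl(p-g_r(\omega)/4\bigr)t < p(k+t)-\tfrac{g_r(\omega)}{4}t$; since $|\{i:x_i\le q_p(S(k+t))\}|\ge p(k+t)$, this forces $q_p(S(k+t))\ge q$ and shows that $[q,\,q_p(S(k+t))]$ contains more than $\tfrac{g_r(\omega)}{4}t$ points of $S(k+t)$, which is the second conclusion; combined with the confinement bound $q_p(S(k+t))\le q+\sigma<\tau$, it gives $\omega(k+t)=\tau-q_p(S(k+t))\le\tau-q=\omega(k)$, the first conclusion. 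The delicate step is the confinement: because the process is discrete, the quantile could a priori drift — or "hop over" a cluster of points and even shoot past $\tau$, making $\omega$ grow — over the course of $t$ steps, and it is precisely Condition~2 together with the $c_2\cdot\delta$ upper bound in smoothness that rules this out and thereby legitimizes the single uniform acceptance-probability bound driving the Chernoff step; everything afterward is bookkeeping with the definition of the $p$-quantile.
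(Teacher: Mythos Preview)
Your proof is correct and follows essentially the same route as the paper: deterministic confinement of the quantile to $[q-\sigma,q+\sigma]$ from Condition~2, the two-case uniform bound $p-g_r(\omega)/2$ on the probability of admitting below $q$ (one case using $c_2\sigma<g_r(\omega)/2$, the other using $g_r(\omega-\sigma)>g_r(\omega)/2$), then Chernoff to get fewer than $(p-g_r(\omega)/4)t$ admissions below $q$. Your writeup is in fact more careful than the paper's on two points --- you spell out why confinement holds and you make explicit the stochastic-domination step justifying Chernoff for the history-dependent indicators --- but the argument is the same.
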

\begin{proof}
Note that since each of the intervals $[q_p(S(k))-\sigma,q_p(S(k))]$ and $[q_p(S(k)),q_p(S(k))+\sigma]$ include at least $t$ points, then for each step $k'$ in the next $t$ steps we have that $q_p(S(k')) \in [q_p(S(k))-\sigma,q_p(S(k))+\sigma]$. Using this we compute an upper-bound on the probability of accepting a candidate in $[0,q_p(S(k))]$:
\begin{enumerate}
\item For every step $k'$ such that $q_p(S(k'))<q_p(S(k))$ we have that the probability of accepting a candidate in $[0,q_p(S(k))]$ is at most the probability of accepting a candidate in $[0,q_p(S(k'))]$, which is $f(q_p(S(k')))$, plus the probability of accepting a candidate in  $[q_p(S(k')),q_p(S(k))]$ which is at most $c_2 \cdot(q_p(S(k)) -q_p(S(k'))) < c_2 \cdot \sigma$. Since $f(\cdot)$ is an increasing function, $q_p(S(k'))<q_p(S(k))$, and by using our assumptions on $\sigma$, we have that:
\begin{align*}
f(q_p(S(k')))+c_2 \cdot \sigma \leq f(q_p(S(k)))+c_2 \cdot \sigma = p-g_r(\dm(k)) +c_2 \cdot \sigma \leq p-g_r(\dm(k))/2.
\end{align*}

\item For every step $k'$ such that $q_p(S(k')) \geq q_p(S(k))$ the probability of accepting a candidate in $[0,q_p(S(k))]$ is at most the probability of accepting a candidate in $[0,q_p(S(k'))]$ which is: 
$$f(q_p(S(k'))) \leq f(q_p(S(k))+\sigma) =  p- g_r(\dm(k)-\sigma) \leq p-g_r(\dm(k))/2,$$
where in the last transition we used our assumptions on $\sigma$.
\end{enumerate}
Hence the probability of accepting a candidate in $[0,q_p(S(k))]$ is at most $p-g_r(\dm(k))/2$. We can now use Chernoff bounds to compute the probability that the number of members that join the interval $[0,q_p(S(k))]$ in the next $t$ steps is more than $(p-\frac{g_r(\dm(k))}{4}) \cdot t$. Denote the number of candidates that joined the interval $[0,q_p(k))]$ by $X$, then:
\begin{align*}
Pr[X \geq (1+\frac{g_r(\dm(k))}{4p}) \cdot (p-\frac{g_r(\dm(k))}{2}) \cdot t] &\leq e^{-\frac{(\frac{g_r(\dm(k))}{4p})^2 \cdot (p-\frac{g_r(\dm(k))}{2}) \cdot t }{3}} \\
&\leq e^{-\frac{(\frac{g_r(\dm(k))}{4p})^2 \cdot (g_r(\dm(k)-\frac{g_r(\dm(k))}{2}) \cdot t }{3}}\\
&= e^{-\frac{g_r(\dm(k))^2}{96p}\cdot t} 
\end{align*}
For the transition before the last we use the assumption that $g_r(\dm(k))<p$ for every $\dm(k)<\tau_p$. This implies that $\dm(S(k+t)) \leq \dm(S(k))$ (since the number of member that joined $[0,q_p(S(k))]$ is less than $p\cdot t$ and in particular that the number of members in the interval $[q_p(S(k)),q_p(S+t))]$ is at least $\frac{g_r(\dm(k))}{4} \cdot t$. The last implies that in the group $S(k+t)$ the number of points separating $q_p(S(k))$ and  $q_p(S(k+t))$ is at least $\frac{g_r(\dm(k))}{4} \cdot t$, as required.
\end{proof}

The proof for the symmetric case is very much similar hence we only state the corresponding proposition without repeating the proof:

\begin{proposition} \label{prop-points-left}
Consider adding $t$ more members to a group $S(k)$, such that $q_p(S(k))>\tau_p$. For any $\sigma < \dm(k)$ such that:
\begin{enumerate}
\item $g_l(\dm(k)-\sigma) > g_l(\dm(k))/2 > c_2 \cdot \sigma$.
\item Each of the intervals $[q_p(S(k))-\sigma,q_p(S(k))]$ and $[q_p(S(k)),q_p(S(k))+\sigma]$ contain at least $t$ members.
\end{enumerate}

The following hold with probability at least $1-e^{-\Theta(g_l(\dm(k))^2\cdot t)}$: 
\begin{enumerate}
\item $\dm(S(k+t)) \leq \dm(S(k))$.
\item The group $S(k+t)$ contains at least $\frac{g_l(\dm(k))}{4} \cdot t$ members in the interval $[q_p(S(k)),q_p(S(k+t))]$.
\end{enumerate}
\end{proposition}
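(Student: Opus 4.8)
The plan is to reflect the proof of Proposition~\ref{prop-points-right} about the point $\tau_p$. Since now $q_p(S(k))>\tau_p$, the $p$-quantile must drift to the \emph{left}, so instead of tracking admissions into $[0,q_p(S(k))]$ I would track admissions into $(q_p(S(k)),1]$, and everywhere replace $g_r$ by $g_l$. Write $q=q_p(S(k))$; because $q>\tau_p$ we have $q=\tau_p+\dm(k)$, so by definition of $g_l$ the probability of accepting a candidate strictly below $q$ equals $f_p(q)=p+g_l(\dm(k))$, and hence the probability of accepting a candidate in $(q,1]$ equals $1-p-g_l(\dm(k))$.

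First I would note, exactly as in Proposition~\ref{prop-points-right}, that because $[q-\sigma,q]$ and $[q,q+\sigma]$ each already contain at least $t$ members, the $p$-quantile stays inside $[q-\sigma,q+\sigma]$ throughout the next $t$ admissions; let $q'=q_p(S(k'))$ be its value at an intermediate step $k'$. Next I would bound, for each such step, the probability of accepting a candidate in $(q,1]$ by $1-p-g_l(\dm(k))/2$, splitting into two cases. If $q'\ge q$, that probability is at most $\bigl(1-f_p(q')\bigr)+c_2\sigma$, and monotonicity of $f_p$ together with $q'\ge q$ gives $1-f_p(q')\le 1-f_p(q)=1-p-g_l(\dm(k))$, so the hypothesis $g_l(\dm(k))/2>c_2\sigma$ finishes the case. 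If $q'<q$, the probability is at most $1-f_p(q')\le 1-f_p(q-\sigma)=1-p-g_l(\dm(k)-\sigma)$ (using $q-\sigma=\tau_p+(\dm(k)-\sigma)$), and the hypothesis $g_l(\dm(k)-\sigma)>g_l(\dm(k))/2$ finishes it. Then, with $X$ the number of the $t$ new members landing in $(q,1]$, a Chernoff bound with deviation of order $g_l(\dm(k))$ gives
\[
\Pr\Bigl[X\ge \bigl(1-p-\tfrac{g_l(\dm(k))}{4}\bigr)\,t\Bigr]\le e^{-\Theta(g_l(\dm(k))^2 t)}.
\]
On the complementary event at most $(1-p)k+\bigl(1-p-\tfrac{g_l(\dm(k))}{4}\bigr)t=(1-p)(k+t)-\tfrac{g_l(\dm(k))}{4}t$ of the $k+t$ members of $S(k+t)$ lie in $(q,1]$, so more than $p(k+t)$ of them are at most $q$; hence $q_p(S(k+t))\le q$, which gives $\dm(S(k+t))\le\dm(S(k))$, and at least $\tfrac{g_l(\dm(k))}{4}t$ members separate $q_p(S(k+t))$ from $q_p(S(k))$, as required.

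The one place where the symmetry is not completely literal — and the step I would be most careful about — is the use, in the Chernoff estimate, of the bound $g_l(\dm)<1-p$ (the mirror of $g_r(\dm)<p$ in Proposition~\ref{prop-points-right}); this follows at once from $f_p<1$ and is exactly what keeps the exponent of order $g_l(\dm(k))^2 t$ rather than merely positive. Beyond that, the argument is a verbatim reflection of the one already given, with the $(1-p)$-factors playing the role of the $p$-factors, so I do not expect any genuine obstacle.
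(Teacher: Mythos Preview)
Your proposal is correct and is exactly the approach the paper intends: the paper does not spell out a separate proof for Proposition~\ref{prop-points-left} but simply states that ``the proof for the symmetric case is very much similar,'' and your mirror of Proposition~\ref{prop-points-right} --- tracking admissions into $(q,1]$ with $g_l$ in place of $g_r$ and $1-p$ in place of $p$ --- is precisely that symmetric argument. Your careful note that $g_l(\dm)<1-p$ replaces the bound $g_r(\dm)<p$ in the Chernoff step is the one non-automatic point, and you have handled it correctly.
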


%

\subsection{Density Bounds} \label{app-sec-density}
We now provide bounds on the density of the group in every interval and every step. In particular, consider a group of size $k$ and let $\delta(k) = k^{-1/10}$, we will show that for  large enough $k$, each interval $I$ of length $|I|\geq \delta(k)$ contains at least $c'_1 \cdot |I| \cdot \delta(k) \cdot k$ members and at most $c'_2 \cdot |I| \cdot k$ members. We begin by partitioning the $[0,1]$ interval into equal segments of length $\is /2$:

\begin{lemma} \label{lem:delta0_segments}
Consider adding $t$ new members to the group $S(k)$. With high probability $(1-\frac{2}{\is} \cdot e^{-\Theta(\is^2 \cdot t )})$ for every segment $J$ in the $\is /2$-partition : 
\begin{enumerate}
\item The number of members accepted in the $t$ steps to $J$ is at least $c_1 \cdot \frac{\is^2}{8} \cdot t$.
\item The number of members accepted in the $t$ steps to $J$ is at most $c_2 \cdot \is \cdot t$.
\end{enumerate}
\end{lemma}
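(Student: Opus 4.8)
The plan is to reduce everything to a Chernoff bound applied segment-by-segment, using the two quantitative guarantees built into the definition of a smooth process, and then to take a union bound over the $O(1/\is)$ segments. First I would fix one segment $J$ of the $\is/2$-partition, and for $i=1,\dots,t$ let $Z_i^J$ be the indicator that the $i$-th of the newly accepted members lands in $J$. The crucial observation is that the second smoothness property, applied to $J$ (an interval of length $\is/2$), bounds the one-step acceptance probability \emph{conditionally on the entire history so far}, uniformly over the current configuration of the group (and in particular over the location of the $p$-quantile):
\[
\frac{c_1\is^2}{4}\ \le\ \Pr\!\left[Z_i^J=1 \mid \text{history before step } i\right]\ \le\ \frac{c_2\is}{2}.
\]
So although the $Z_i^J$ are not independent, the sum $X_J=\sum_{i=1}^t Z_i^J$ is stochastically dominated below by $\mathrm{Bin}(t,c_1\is^2/4)$ and above by $\mathrm{Bin}(t,c_2\is/2)$, and the standard multiplicative Chernoff bounds remain valid for $X_J$ because the per-step conditional means are pinned between these two values.

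Next I would invoke the bounds. For the lower tail, $\E[X_J]\ge c_1\is^2 t/4$, so taking deviation factor $\tfrac12$,
\[
\Pr\!\left[X_J<\frac{c_1\is^2}{8}\,t\right]\ \le\ e^{-\Theta(\is^2 t)} .
\]
For the upper tail, $\E[X_J]\le c_2\is t/2$, so taking deviation factor $1$,
\[
\Pr\!\left[X_J>c_2\is\,t\right]\ \le\ e^{-\Theta(\is t)}\ \le\ e^{-\Theta(\is^2 t)},
\]
where the last step uses $\is<1$. Hence for each fixed $J$ the probability that either bound fails is at most $e^{-\Theta(\is^2 t)}$.

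Finally, since the $\is/2$-partition of $[0,1]$ has at most $\lceil 2/\is\rceil$ segments, a union bound over all of them (and over the two failure events per segment) shows that with probability at least $1-\tfrac{2}{\is}\,e^{-\Theta(\is^2 t)}$ every segment simultaneously satisfies both assertions, which is exactly the claimed bound (and is genuinely ``high probability'' precisely in the regime $\is^2 t \gg \log(1/\is)$ in which the lemma is invoked). I do not anticipate a real obstacle here; the only point that needs a sentence of care is the dependence among the $Z_i^J$ across steps (the group's state, and hence the exact acceptance probabilities, evolves), which is dispatched by the observation above that the smoothness bounds hold conditionally on the whole past, so the usual Chernoff/Azuma-type concentration goes through unchanged.
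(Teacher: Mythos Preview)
Your proposal is correct and follows essentially the same approach as the paper: fix a segment, use the smoothness lower and upper bounds on the per-step acceptance probability, apply multiplicative Chernoff, and union bound over the $O(1/\is)$ segments. The paper's own proof is slightly terser (it simply invokes Chernoff with deviation factor $1/2$ on both sides without discussing dependence), whereas you add the useful clarification that the smoothness bounds hold conditionally on the history, so the sum is stochastically sandwiched between two binomials and Chernoff is legitimate; this is a point the paper leaves implicit.
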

\begin{proof}
We first compute the probability that a specific interval $J$ has the right number of candidates and then apply a union bound to show that the lemma holds for all intervals simultaneously. Throughout this proof we denote the number of accepted candidates that are located in an interval $J$ by $X_J$.
\begin{enumerate}
\item \textbf{Lower bound -} By the assumption that the admission process is smooth we have that the probability of of accepting a candidate in a segment of length $\frac{\is}{2}$ is at least $c_1 \cdot \frac{\is^2}{4}$. Thus, by taking a Chernoff bound we get that the number of candidates accepted to interval $J$ is at least $c_1 \cdot \frac{\is^2}{8} \cdot t $ with probability $(1-e^{-\frac{c_1 \cdot \is^2 t}{32}})$:
\begin{align*}
Pr[X_J \leq (1 - 0.5) \cdot c_1 \cdot \frac{\is^2}{4} t  ] \leq e^{-\frac{ \frac{1}{4} \cdot c_1 \cdot \frac{\is^2}{4}t}{2}}
= e^{-\frac{c_1 \cdot \is^2 t}{32}}.
\end{align*}

\item \textbf{Upper bound -} By the assumption that the admission process is smooth we have that the probability of accepting a candidate in a segment of length $\frac{\is}{2}$ is at most $c_2 \cdot \frac{\is}{2}$. Thus, by taking a Chernoff bound we get that the number of candidates accepted to interval $J$ is at most $c_2 \cdot \is \cdot t$ with probability $(1-e^{-\frac{c_2 \cdot \is t}{24}})$:
\begin{align*}
Pr[X_J \geq (1 + 0.5) c_2 \cdot \frac{\is}{2} ] \leq e^{-\frac{ \frac{1}{4} \cdot c_2 \cdot \frac{\is}{2} t}{3}}
= e^{-\frac{c_2 \cdot \is t}{24}}.
\end{align*}
\end{enumerate}

Finally we take a union bound to show that all the segments have the right number of members with high probability:
\begin{align*}
\frac{2}{\is} (e^{-\frac{c_1 \cdot \is^2 t}{32}} + e^{-\frac{c_2 \cdot \is t}{24}} ) 
\leq \frac{2}{\is} \cdot e^{-\Theta(\is^2 \cdot t )}.
\end{align*}
\end{proof}

Next, we use the bounds on the smaller consecutive segments to show that any interval of length greater than $\is$ contains the ``right'' number of members:
\begin{claim} \label{clm-density-in-delta-intervals}
Let $k \geq \frac{k_0}{\is}$, where $k_0$ is the initial size of the group, for any interval $I$ of length 
$|I| \geq \is$ the following holds with probability of at least $1-\frac{2}{\is} \cdot e^{-\Theta(\is^2 \cdot k )}$ :
\begin{enumerate}
\item The number of members of $S(k)$ in $I$ is at least $c'_1 \cdot |I| \cdot \delta (k) \cdot k$ for $c'_1<c_1$.
\item The number of members of $S(k)$ in $I$ is at most $c'_2 \cdot |I| \cdot k$, for $c'_2>c_2$.
\end{enumerate}
\end{claim}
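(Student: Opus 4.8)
The plan is to deduce the density bounds for an arbitrary interval $I$ from the per-segment bounds of Lemma~\ref{lem:delta0_segments} by a covering argument, so that no union bound over the (uncountably many) intervals $I$ is required. Fix the target size $k$, write $\delta = \delta(k) = k^{-1/10}$, and consider the $\delta/2$-partition of $[0,1]$ into $2/\delta$ segments. I would apply Lemma~\ref{lem:delta0_segments} to this partition with starting group $S(k_0)$ and the $t = k - k_0$ steps that grow $S(k_0)$ into $S(k)$: with probability at least $1 - \frac{2}{\delta} e^{-\Theta(\delta^2 t)}$, every segment $J$ of the partition receives between $c_1 \frac{\delta^2}{8} t$ and $c_2 \delta t$ of the $t$ newly admitted members. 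Since the hypothesis $k \ge k_0/\delta$ gives $k_0 \le \delta k \le k/2$ (the last inequality holding once $\delta \le 1/2$, i.e. for $k$ large enough, which we may assume), we have $t = k - k_0 \ge k/2$, so the failure probability is at most $\frac{2}{\delta} e^{-\Theta(\delta^2 k)}$, matching the bound in the claim. From here on I would condition on this good event; everything that follows is deterministic.

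For the upper bound, let $I$ be any interval with $|I| \ge \delta$. Then $I$ meets at most $\lceil 2|I|/\delta \rceil + 1 \le 4|I|/\delta$ segments of the partition (using $2|I|/\delta \ge 2$). Every member of $S(k)$ lying in $I$ is either one of the $k_0$ initial members or was admitted during the $t$ steps and landed in one of these segments, so the number of such members is at most $k_0 + (4|I|/\delta)\cdot c_2 \delta t \le \delta k + 4 c_2 |I| k \le (1 + 4 c_2)|I| k$, where I used $k_0 \le \delta k \le |I| k$ and $t \le k$. Setting $c'_2 = 1 + 4 c_2 > c_2$ yields the claimed upper bound.

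For the lower bound, any interval $I$ with $|I| \ge \delta$ contains at least $|I|/(2\delta)$ full segments of the $\delta/2$-partition (an interval of length $\ge \delta$ always contains a full grid cell, and the count grows proportionally for longer intervals). Each such segment contributes at least $c_1 \frac{\delta^2}{8} t$ members admitted during the $t$ steps, and all of them lie in $I$; hence the number of members of $S(k)$ in $I$ is at least $\frac{|I|}{2\delta} \cdot c_1 \frac{\delta^2}{8} t = \frac{c_1}{16} |I| \delta t \ge \frac{c_1}{32} |I| \delta k$, using $t \ge k/2$. Setting $c'_1 = c_1/32 < c_1$ yields the claimed lower bound.

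The arithmetic here is routine; the one point that genuinely matters is the reduction from arbitrary intervals to the fixed $\delta/2$-partition. Once the single good event of Lemma~\ref{lem:delta0_segments} holds, the stated bounds hold for every interval $I$ of length $\ge \delta$ simultaneously and deterministically --- which is precisely why the failure probability in the claim carries only the factor $\frac{2}{\delta}$ (the number of segments) and not a union bound over intervals. The only minor bookkeeping is accounting for the $k_0$ initial members, and the hypothesis $k \ge k_0/\delta$ is exactly what makes their contribution negligible against both bounds (note that the lower bound is weaker by the factor $\delta(k)$, so even there the $k_0$ term is absorbed).
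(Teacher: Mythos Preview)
Your proposal is correct and follows essentially the same approach as the paper: apply Lemma~\ref{lem:delta0_segments} to the $t=k-k_0$ newly admitted members, then cover an arbitrary interval $I$ from inside (for the lower bound) and outside (for the upper bound) by unions of consecutive $\delta(k)/2$-segments, absorbing the $k_0$ initial members via the hypothesis $k\ge k_0/\delta(k)$. The constants you get differ slightly from the paper's, but that is immaterial since everything is absorbed into $c'_1,c'_2$; your explicit remark that the single good event of Lemma~\ref{lem:delta0_segments} suffices (so no union bound over intervals is needed) and your conversion of $e^{-\Theta(\delta^2 t)}$ to $e^{-\Theta(\delta^2 k)}$ via $t\ge k/2$ are points the paper leaves implicit.
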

\begin{proof}
In Lemma \ref{lem:delta0_segments} we partitioned the interval $[0,1]$ to disjoint segments of length $\frac{\is} {2}$ and proved bounds on the number of members in each small segment. To show that similar density bounds hold for \emph{any} interval of length at least $\is$, we observe that any interval of size $|I| \geq \is$ is contained in an interval $I^l$ of length at most $2|I|$ consisting of consecutive segments of our $\is/2$-partition and contains an interval $I^s$ of length at least $\max \{|I|-\is,\frac{\is}{2}\} \geq |I| /3$ of consecutive segments of the $\is/2$-partition. Thus, for the lower bound we get that each interval $I$ includes at least $|I|/(6 \is)$ segments of the $\is/2$-partition. Since each of these segments includes at least $c_1 \cdot \frac{\is^2}{8} \cdot (k-k_0)$ members and $k \geq \frac{k_0}{\is}$, we have that there exists a constant $c'_1$ such that the number of members in $I$ is at least $c'_1 \cdot |I| \cdot \delta (k)$. 

Similarly, for the upper bound this implies that each segment of length $\frac{\is}{2} $ contains at most $c_2 \cdot \is (k-k_0)$ members who joined the group in the admission process. Thus the interval $I^l$ includes at most $2 c_2 \cdot |I| \cdot (k-k_0)$ such members. Note that in the worst case all the $k_0$ initial members were located in the interval $I$. 
By the assumption that $k \geq \frac{k_0}{\is}$, we have that the number of members in this interval is at most $2 c_2 \cdot |I| \cdot (k-k_0) + k_0 \leq 2 c_2 \cdot |I| \cdot k + k \cdot \is $. Since $|I| \geq \is$ we have that there exists a constant $c'_2$ such that the number of members in interval $I$ is at most $c'_2 \cdot |I| \cdot k$.
\end{proof}

\subsection{Putting it all Together} \label{app-sec-combining}

The essence of Theorem \ref{thm-general-converge} is repeated application of Propositions \ref{prop-points-right} and \ref{prop-points-left} and Claim \ref{clm-density-in-delta-intervals}. To formalize this idea we define a sequence of group sizes and then argue how the $p$-quantile changes between them. Let $k_1=\max \{k_\eps,k_0^{10/9}\}$. Where, $k_\eps$ is chosen be such that $g_l(\eps/2), g_r(\eps/2)> 8 \cdot c'_2 \cdot \delta(k_\eps)$
and for every $\dm > \eps/2$, $g_r(\dm-\delta(k_{\eps})) > g_r(\dm)/2$ and $g_l(\dm-\delta(k_{\eps})) > g_l(\dm)/2$. Such $k_\eps$ exists by continuity. The exact reasoning behind this choice of $k_\eps$ will become clearer later. Next, for every $i\geq 1$ let $k_{i+1} = (1+c'_1 \cdot \delta(k_i)^2) \cdot k_i$, we show that the $p$-quantile cannot get too far from $\tau_p$ and under some conditions it gets closer to $\tau_p$: 

\begin{claim} \label{clm:induction-step}
Suppose that each of the intervals $[q_p(S(k_i))-\delta(k_i),q_p(S(k_i))]$ and $[q_p(S(k_i)),q_p(S(k_i))+\delta(k_i)]$ contains at least $c'_1 \cdot \delta(k_{i})^2 \cdot k_i$ members. Then, the following holds with probability at least $1-e^{-\Theta(k_i^{3/5})}$:
\begin{enumerate}
\item For every $k_i<k' \leq k_{i+1}$, $\dm(S(k')) < \dm(S(k_i)) + \delta(k_i)$.
\item Each of the intervals $[q_p(S(k_{i+1}))-\delta(k_{i+1}),q_p(S(k_{i+1}))]$ and $[q_p(S(k_{i+1})),q_p(S(k_{i+1}))+\delta(k_{i+1})]$ contains at least $c'_1 \cdot \delta(k_{i+1})^2 \cdot k_{i+1}$ members. More generally, every interval $I$ of length $|I| \geq \delta(k_{i+1})$ includes at least $c'_1 \cdot |I| \cdot \delta(k_{i+1}) \cdot k_{i+1}$ members and at most $c'_2 \cdot |I| \cdot k_{i+1}$ members. 
\item \label{enum_induc_r} If $q_p(S(k_i))<\tau_p$ and $g_r(\dm(k_i)-\delta(k_i)) > g_r(\dm(k_i))/2 > c_2 \cdot \delta(k_i)$, then, the number of members in the interval $[q_p(S(k_i)),q_p(S(k_{i+1}))]$ in the group $S(k_{i+1})$ is at least $\frac{g_r(\dm(k_i))}{4} \cdot c'_1 \cdot \delta(k_i)^2 \cdot k_i$ (in particular $q_p(S(k_i)) \leq q_p(S(k_{i+1})) < \tau_p $).
\item \label{enum_induc_l} If $q_p(S(k_i))>\tau_p$ and $g_l(\dm(k_i)-\delta(k_i)) > g_l(\dm(k_i))/2>c_2 \cdot \delta(k_i)$, then, the number of members in the interval $[q_p(S(k_{i+1})),q_p(S(k_i))]$ in the group $S(k_{i+1})$ is at least $\frac{g_l(\dm(k_i))}{4} \cdot c'_1 \cdot \delta(k_i)^2 \cdot k_i$ (in particular $q_p(S(k_i)) \geq q_p(S(k_{i+1})) > \tau_p $). 
\end{enumerate}
\end{claim}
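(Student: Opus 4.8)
The plan is to bootstrap from the two ``building blocks'' already assembled: the density estimates of Section \ref{app-sec-density} (Claim \ref{clm-density-in-delta-intervals}) and the ``quantile moves closer'' Propositions \ref{prop-points-right} and \ref{prop-points-left}, applied to the increment of $t := k_{i+1}-k_i = c'_1 \delta(k_i)^2 k_i$ members. First I would establish item (2): apply Claim \ref{clm-density-in-delta-intervals} to the group of size $k_{i+1}$, which gives, with probability $1 - \tfrac{2}{\delta(k_{i+1})} e^{-\Theta(\delta(k_{i+1})^2 k_{i+1})}$, that every interval $I$ of length at least $\delta(k_{i+1})$ has between $c'_1 |I| \delta(k_{i+1}) k_{i+1}$ and $c'_2 |I| k_{i+1}$ members; since $\delta(k_{i+1}) \le \delta(k_i)$ (as $\delta$ is decreasing in $k$), the intervals $[q_p(S(k_{i+1})) \pm \delta(k_{i+1})]$ in particular each contain at least $c'_1 \delta(k_{i+1})^2 k_{i+1}$ members, giving the first sentence of (2) for free. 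Because $\delta(k_i)^2 k_i = k_i^{4/5}$, the failure probability is $e^{-\Theta(k_i^{3/5})}$ up to the polynomial prefactor, matching the claimed bound.

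Next I would handle item (1), the ``doesn't drift too far'' statement, together with (3) and (4), by invoking Proposition \ref{prop-points-right} (resp.\ \ref{prop-points-left}) with $\sigma := \delta(k_i)$ and $t := k_{i+1}-k_i$. The hypotheses of that proposition are exactly what is assumed here: the intervals $[q_p(S(k_i)) \pm \delta(k_i)]$ contain at least $c'_1 \delta(k_i)^2 k_i = t$ members (hypothesis 2), and the monotonicity/smallness condition $g_r(\dm(k_i)-\delta(k_i)) > g_r(\dm(k_i))/2 > c_2 \delta(k_i)$ is precisely hypothesis 1 — this is where the choice of $k_\eps$ enters (it was picked so that $g_r(\dm - \delta(k_\eps)) > g_r(\dm)/2$ for all $\dm > \eps/2$, and $g_r(\eps/2) > 8 c'_2 \delta(k_\eps) > c_2 \delta(k_\eps)$, and $\delta$ is decreasing so this persists for all $k_i \ge k_\eps = k_1$). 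Proposition \ref{prop-points-right} then yields, with probability $1 - e^{-\Theta(g_r(\dm(k_i))^2 t)}$, that $\dm(S(k_{i+1})) \le \dm(S(k_i))$ and that $[q_p(S(k_i)), q_p(S(k_{i+1}))]$ contains at least $\tfrac{g_r(\dm(k_i))}{4} t = \tfrac{g_r(\dm(k_i))}{4} c'_1 \delta(k_i)^2 k_i$ members, which is exactly item (3) (and symmetrically (4)). For item (1): during all $t$ steps the $p$-quantile stays inside $[q_p(S(k_i)) - \delta(k_i), q_p(S(k_i)) + \delta(k_i)]$ (this is the content of the second hypothesis of the proposition, since each flanking interval holds at least $t$ members and the quantile can move past at most $t$ of them), so $\dm(S(k')) < \dm(S(k_i)) + \delta(k_i)$ for every intermediate $k'$; in the case $\dm(k_i)$ is so small or $q_p$ is so close to $\tau_p$ that the Proposition's hypothesis 1 fails, one instead argues directly that the quantile cannot escape the dense flanking intervals within $t$ steps, again giving (1). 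Finally I would union-bound the two bad events (density failure from step 2, Chernoff failure from the Propositions), noting $g_r(\dm(k_i))^2 t \ge (g_r(\eps/2))^2 c'_1 \delta(k_i)^2 k_i = \Omega(k_i^{4/5})$ when $\dm(k_i) > \eps/2$, so both are $e^{-\Theta(k_i^{3/5})}$, as claimed.

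The main obstacle I anticipate is bookkeeping rather than conceptual: one must be careful that the hypothesis ``each flanking interval of $q_p(S(k_i))$ contains at least $c'_1 \delta(k_i)^2 k_i$ members'' — which is \emph{assumed} at stage $i$ but must be \emph{re-derived} at stage $i+1$ for the induction to close — indeed follows from the generic density bound in item (2), and that the constants line up ($c'_1 \delta(k_{i+1})^2 k_{i+1}$ versus what Claim \ref{clm-density-in-delta-intervals} guarantees for an interval of length $\delta(k_{i+1})$). A second delicate point is the edge case where $q_p(S(k_i))$ is within $\delta(k_i)$ of $\tau_p$, so that one of the flanking intervals would straddle $\tau_p$ and the signed-progress statements (3)/(4) become vacuous or need the $g$-function on the ``wrong'' side; there one should fall back on the weaker conclusion (1) alone, which is all the later combining argument in Section \ref{app-sec-combining} needs once the quantile is already $\eps$-close. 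Everything else is a routine substitution of $t = c'_1 \delta(k_i)^2 k_i$ and $\sigma = \delta(k_i)$ into results already proved.
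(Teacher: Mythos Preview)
Your plan is essentially the paper's own proof: item~(1) follows deterministically from the flanking-interval density hypothesis (since $t=k_{i+1}-k_i=c'_1\delta(k_i)^2k_i$ members cannot move the quantile past an interval already containing $t$ members), item~(2) is Claim~\ref{clm-density-in-delta-intervals} applied at size $k_{i+1}$, and items~(3)--(4) are Propositions~\ref{prop-points-right}--\ref{prop-points-left} with $\sigma=\delta(k_i)$ and $t=c'_1\delta(k_i)^2k_i$, followed by a union bound. Two small clean-ups: the density failure probability is $e^{-\Theta(k_i^{4/5})}$, not $e^{-\Theta(k_i^{3/5})}$ (the $3/5$ comes from the Chernoff term in (3)/(4), where the paper bounds $g_r(\dm(k_i))^2 t \ge \Theta(\delta(k_i)^4 k_i)=\Theta(k_i^{3/5})$ using the hypothesis $g_r(\dm(k_i))>2c_2\delta(k_i)$ rather than your constant lower bound from $\dm>\eps/2$); and the $k_\eps$ reasoning you cite is not needed here, since in Claim~\ref{clm:induction-step} the conditions on $g_r$ are explicit hypotheses of items~(3)/(4), with the $k_\eps$ argument deferred to Proposition~\ref{prop-group-step}.
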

\begin{proof}
Observe that statement $(1)$ holds simply by the assumption that each of the intervals $[q_p(S(k_i))-\delta(k_i),q_p(S(k_i))]$ and $[q_p(S(k_i)),q_p(S(k_i))+\delta(k_i)]$ contains at least $c'_1 \cdot \delta(k_{i})^2 \cdot k_i = k_{i+1} - k_i$ members. Thus, when increasing the group by $c'_1 \cdot \delta(k_{i})^2 \cdot k_i$ members the $p$-quantile cannot move a distance greater than $\delta(k_i)$. 

Next, recall that $k_1 \geq \frac{k_0}{\delta(k_1)}$. Thus, we can apply Claim \ref{clm-density-in-delta-intervals} and get that statement $(2)$ holds with probability at least $1-\frac{2}{\delta(k_{i+1})} \cdot e^{-\Theta(\delta(k_{i+1})^2 \cdot k_{i+1} )}$.

For the last two statements we apply Proposition \ref{prop-points-right} and Proposition \ref{prop-points-left} using $\sigma = \delta(k_i)$ and $t=c'_1 \cdot \delta(k_i)^2 \cdot k_i$ we have that the two statements hold with probabilities at least $1-e^{-\Theta(g_r(\dm(k_i))^2 \cdot \delta(k_i)^2 \cdot k_i)}$ and $1-e^{-\Theta(g_l(\dm(k_i))^2\cdot \delta(k_i)^2 \cdot k_i)}$ respectively. By the assumption that $\delta(k_i) < g_r(\dm(k_i))/(2 \cdot c_2)$ for statement $(3)$ and that $\delta(k_i) < g_l(\dm(k_i))/(2 \cdot c_2)$ for statement $(4)$ we have that we can bound each of these probabilities by $1-e^{-\Theta(\delta(k_i)^4 \cdot k_i)}$.

Thus, by taking a union bound we have that the claim holds with probability at least $1-\frac{2}{\delta(k_{i+1})} \cdot e^{-\Theta(\delta(k_{i+1})^2 \cdot k_{i+1} )}-e^{-\Theta(\delta(k_i)^4 \cdot k_i)}$. By using the fact that $\delta(k_i) = k_i^{-1/10}$ and $k_{i+1}>k_i$ we can bound this probability by 
\begin{align*}
1-2 k_{i+1}^{1/10} \cdot e^{-\Theta(k_{i+1}^{4/5})}-e^{-\Theta(k_i^{3/5})} \geq 1-e^{-\Theta(k_i^{3/5})}.
\end{align*}
\end{proof}

We are now ready to show that the $p$-quantile indeed gets closer to $\tau_p$. To this end, larger increments of the group's size are required. Thus, we define the following series $a_j$ such that $2k_{a_j} \leq k_{a_{j+1}}$ and for every $i<a_{j+1}$, $k_i< 2k_{a_j}$, and consider the changes in the $p$-quantile from $S(k_{a_j})$ to $S(k_{a_{j+1}})$. Based on Claim \ref{clm:induction-step} we prove the following proposition:
\begin{proposition} \label{prop-group-step}
For every $j$, with probability of at least $1-(a_{j+1}-a_j)e^{-\Theta(k_{a_j}^{3/5})}$:
\begin{enumerate}
\item If $\dm(k_{a_j}) > \frac{3}{4}\eps$, then $\dm(k_{a_{j+1}})< \dm(k_{a_j}) - \frac{g_r(\frac{2}{3} \dm(k_{a_j}) )}{8 c'_2}$.
\item Else, $\dm(k_{a_{j+1}})< \frac{3}{4} \eps + \delta(k_{a_j})$.
\end{enumerate}
\end{proposition}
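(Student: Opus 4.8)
The plan is to iterate Claim \ref{clm:induction-step} over the indices $i$ running from $a_j$ to $a_{j+1}-1$, tracking how $\dm(k_i)$ evolves, and then convert the accumulated ``number of points'' gain into an actual distance gain using the density bounds. First I would verify that the hypotheses of Claim \ref{clm:induction-step} are maintained throughout this range: the choice $k_1 \geq k_\eps$ guarantees that for every $\dm > \eps/2$ we have $g_r(\dm - \delta(k_i)) > g_r(\dm)/2$ and likewise for $g_l$, and that $g_r(\eps/2), g_l(\eps/2) > 8 c'_2 \delta(k_\eps) \geq 8 c'_2 \delta(k_i)$; so as long as $\dm(k_i)$ stays above $\eps/2$, condition (3) (or (4), by symmetry) of Claim \ref{clm:induction-step} is satisfied and the density precondition propagates from step $i$ to step $i+1$ by part (2) of that claim. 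Since we are in case (1), $\dm(k_{a_j}) > \tfrac34\eps > \tfrac12\eps$, and part (1) of Claim \ref{clm:induction-step} only lets $\dm$ grow by $\delta(k_i)$ per step while part (3)/(4) forces it to be non-increasing in the ``points'' sense, so $\dm(k_i)$ stays comfortably above $\eps/2$ for all $i$ in the range (here I would use that $\delta(k_i)$ is summably small relative to $\eps$ over the $O(\log)$-many steps to $2k_{a_j}$, or more cleanly, that each step is non-increasing so $\dm(k_i) \leq \dm(k_{a_j})$ throughout).

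Next I would sum the per-step point gains. Assuming WLOG $q_p(S(k_{a_j})) < \tau_p$ (the other case uses Proposition \ref{prop-points-left} and $g_l$ identically), part (3) of Claim \ref{clm:induction-step} gives, for each $i$ in $\{a_j, \ldots, a_{j+1}-1\}$, at least $\tfrac{g_r(\dm(k_i))}{4} \cdot c'_1 \delta(k_i)^2 k_i = \tfrac{g_r(\dm(k_i))}{4}(k_{i+1}-k_i)$ new members in the interval $[q_p(S(k_i)), q_p(S(k_{i+1}))]$. Since $\dm(k_i) \leq \dm(k_{a_j})$ and, because $\dm(k_i) \geq \tfrac23 \dm(k_{a_j})$ throughout this block (the total upward wiggle $\sum \delta(k_i)$ over the block is at most $\tfrac13\dm(k_{a_j})$ for $k_{a_j}$ large — this is where $\dm(k_{a_j}) > \tfrac34\eps$ and the largeness of $k_1$ are used), monotonicity of $g_r$ gives $g_r(\dm(k_i)) \geq g_r(\tfrac23\dm(k_{a_j}))$. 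Telescoping, the total number of members that joined $[q_p(S(k_{a_j})), q_p(S(k_{a_{j+1}}))]$ is at least $\tfrac{g_r(\frac23\dm(k_{a_j}))}{4}(k_{a_{j+1}} - k_{a_j}) \geq \tfrac{g_r(\frac23\dm(k_{a_j}))}{8} k_{a_{j+1}}$, using the doubling property $k_{a_{j+1}} \geq 2k_{a_j}$.

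Then I would translate this point count into a distance. By part (2) of Claim \ref{clm:induction-step} applied at index $a_{j+1}$, every interval $I$ with $|I| \geq \delta(k_{a_{j+1}})$ contains at most $c'_2 |I| k_{a_{j+1}}$ members; so any interval containing at least $\tfrac{g_r(\frac23\dm(k_{a_j}))}{8} k_{a_{j+1}}$ members has length at least $\tfrac{g_r(\frac23\dm(k_{a_j}))}{8 c'_2}$ (the lower bound $g_r(\dm(k_{a_j})) \geq g_r(\eps/2) > 8 c'_2 \delta(k_\eps)$ ensures this length exceeds $\delta(k_{a_{j+1}})$, so the density bound is applicable). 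Hence $q_p$ moved towards $\tau_p$ by at least $\tfrac{g_r(\frac23\dm(k_{a_j}))}{8 c'_2}$, which (since it moved towards $\tau_p$, by part (3)) gives $\dm(k_{a_{j+1}}) \leq \dm(k_{a_j}) - \tfrac{g_r(\frac23\dm(k_{a_j}))}{8 c'_2}$, establishing conclusion (1). For conclusion (2), when $\dm(k_{a_j}) \leq \tfrac34\eps$, parts (1) and (3)/(4) of Claim \ref{clm:induction-step} give $\dm(k_i) \leq \dm(k_{a_j}) + \delta(k_{a_j}) \leq \tfrac34\eps + \delta(k_{a_j})$ for all $i$ up to $a_{j+1}$ — here one must be slightly careful that once $\dm$ dips below $\eps/2$ the preconditions of part (3) may fail, but part (1) still bounds the upward drift on any single step, and a short case analysis (either $\dm$ stays below $\tfrac34\eps$, or it re-crosses and part (3) reactivates) closes it. Finally, the failure probability is the union bound over the $a_{j+1}-a_j$ invocations of Claim \ref{clm:induction-step}, each failing with probability $e^{-\Theta(k_i^{3/5})} \leq e^{-\Theta(k_{a_j}^{3/5})}$, giving the claimed $1 - (a_{j+1}-a_j)e^{-\Theta(k_{a_j}^{3/5})}$.

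The main obstacle I anticipate is the bookkeeping needed to guarantee $\dm(k_i) \geq \tfrac23 \dm(k_{a_j})$ (equivalently, controlling the cumulative upward wiggle $\sum_{i=a_j}^{a_{j+1}-1}\delta(k_i)$ against $\tfrac13\dm(k_{a_j})$) so that the monotonicity argument for $g_r$ goes through with the clean constant $\tfrac23$; this requires that $k_{a_j}$ — hence $k_1$ — be taken large enough relative to $\eps$, and interacts with the fact that within one block the group size only doubles, so $\delta$ doesn't shrink much across the block. The rest is a routine telescoping-plus-Chernoff argument built on the two ``building block'' results already established.
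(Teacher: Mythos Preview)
Your overall strategy is the same as the paper's---iterate Claim~\ref{clm:induction-step} across $i=a_j,\ldots,a_{j+1}-1$, telescope the per-step point gains, and convert to distance via the upper density bound---and your treatment of statement~(2) and of the failure probability is essentially correct. However, there is a genuine gap in your handling of statement~(1).

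You assert that $\dm(k_i)\geq \tfrac{2}{3}\dm(k_{a_j})$ for all $i$ in the block, justifying this by bounding the ``total upward wiggle $\sum\delta(k_i)$''. But you have already (correctly) noted that part~(3) of Claim~\ref{clm:induction-step} makes $\dm$ non-increasing across the block, so the upward wiggle is zero; controlling it gives no \emph{lower} bound on $\dm$. Nothing prevents $\dm(k_i)$ from dropping well below $\tfrac{2}{3}\dm(k_{a_j})$---indeed below $\eps/2$---at which point the precondition $g_r(\dm(k_i))/2>c_2\delta(k_i)$ for part~(3) can fail and your telescoping sum breaks down. (Separately, the sum $\sum_{i=a_j}^{a_{j+1}-1}\delta(k_i)$ is of order $k_{a_j}^{1/10}$, not small, so even the quantitative version of your wiggle argument would fail.)

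The paper repairs this with a case split you are missing: either $\dm(k_i)\geq\tfrac{2}{3}\dm(k_{a_j})$ holds for \emph{every} $i$ in the block, in which case your telescoping argument goes through verbatim; or some $\dm(k_i)$ drops below $\tfrac{2}{3}\dm(k_{a_j})$, in which case the same one-step drift argument you use for statement~(2) (if $\dm(k_l)<\tfrac{2}{3}\dm(k_{a_j})$ then $\dm(k_{l+1})<\tfrac{2}{3}\dm(k_{a_j})+\delta(k_l)$; if $\dm(k_l)\geq\tfrac{2}{3}\dm(k_{a_j})$ then $\dm(k_{l+1})\leq\dm(k_l)$) gives $\dm(k_{a_{j+1}})<\tfrac{2}{3}\dm(k_{a_j})+\delta(k_{a_j})$, which is already better than what is claimed. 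You have all the ingredients---you just need to stop trying to \emph{prove} the lower bound $\dm(k_i)\geq\tfrac{2}{3}\dm(k_{a_j})$ and instead branch on whether it holds.
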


\begin{proof}
We present the proof for $q_p(S(k_{a_j}))<\tau_p$, as the proof for the symmetric case is identical. We begin by considering the case that for all steps $k_i$ such that $k_{a_j}\leq k_i< k_{a_{j+1}}$ we have that $\dm(k_i) \geq \frac{2}{3} \dm(k_{a_j})$. Recall that $k_1> k_{\eps}$ and $k_{\eps}$ was chosen such that:
\begin{itemize}
\item For every $\dm > \eps/2$, $g_r(\dm-\delta(k_{\eps})) > g_r(\dm)/2$. This implies that for every $a_j <i < a_{j+1}$, $g_r(\dm(k_i)-\delta(k_{i})) > g_r(\dm(k_{i}))/2$, since $g_r(\cdot)$ is a strictly increasing function and $\delta(k_{i}) < \delta(k_{\eps})$.
\item $g_r(\eps/2)> 8 \cdot c'_2 \cdot \delta(k_\eps)$. This implies that for every $a_j <i < a_{j+1}$, $g_r(\dm(k_{i}))/2 >  8 \cdot c'_2 \cdot \delta(k_i) > c_2 \cdot \delta(k_i)$ since $c'_2>c_2$ and $\delta(k_{i}) < \delta(k_\eps)$. 
\end{itemize}

Next, we apply Claim \ref{clm-density-in-delta-intervals} and get that with probability at least $1-\frac{2}{\delta(k_{a_j})} \cdot e^{-\Theta(\delta(k_{a_j})^2 \cdot k_{a_j})}$ both intervals 
$[q_p(S(k_{a_j})),q_p(S(k_{a_j}))+\delta(k_{a_j})]$ and $[q_p(S(k_{a_j}))-\delta(k_{a_j}), q_p(S(k_{a_j}))]$ contain at least $c'_1 \cdot \delta(k_{a_j})^2 \cdot k_{a_j}$ members. Since we established that $g_r(\dm(k_i)-\delta(k_i)) > g_r(\dm(k_i))/2 > c_2 \cdot \delta(k_i)$, we can now apply Claim \ref{clm:induction-step} repeatedly for $a_{j} \leq i < a_{j+1}$. We get that the number of members in the interval $[q_p(S(k_i)),q_p(S(k_{i+1}))]$ in the group $S(k_{i+1})$ is at least $\frac{g_r(\dm(k_i))}{4} \cdot c'_1 \cdot \delta(k_i)^2 \cdot k_i$ and in particular $q_p(S(k_i)) \leq q_p(S(k_{i+1})) < \tau_p $. 

Finally, we sum over all the indices $i$, $a_j \leq i < a_{j+1}$ to get that the number of members that joined the group in the interval $[q_p(S(k_{a_j})),q_p(S(k_{a_{j+1}}))]$ is at least:
\begin{align*}
\sum_{i = a_j}^{a_{j+1}-1} \frac{g_r(\dm(k_i))}{4} \cdot c'_1 \cdot \delta(k_i)^2 \cdot k_i \geq \frac{g_r(\frac{2}{3} \dm(k_{a_j}))}{4} \cdot \sum_{i = a_j}^{a_{j+1}-1}  c'_1 \cdot \delta(k_i)^2 \cdot k_i = \frac{g_r(\frac{2}{3} \dm(k_{a_j}))}{4} \cdot (k_{a_{j+1}}-k_{a_j})
\end{align*}
Note that by our construction of the series $a_j$ we have that $k_{a_j} \leq k_{a_{j+1}}/2$,
hence, the number of members that joined the interval $[q_p(S(k_{a_j})),q_p(S(k_{a_{j+1}}))]$ is at least $\frac{g_r(\frac{2}{3} \dm(k_{a_j}))}{8} \cdot k_{a_{j+1}} $. Finally observe that by applying Claim \ref{clm:induction-step} over $S(k_{a_{j+1}})$ we have that in the group $S(k_{a_{j+1}})$ every interval $|I|$ of length $|I| \geq \delta(k_{a_{j+1}})$ contains at most $c'_2 \cdot |I| \cdot k_{a_{j+1}}$ members. This means that if $\frac{g_r(\frac{2}{3} \dm(k_{a_j}))}{8 c'_2}> \delta(k_{a_{j+1}})$ (as we assumed), then, the length of the interval $[q_p(S(k_{a_j})),q_p(S(k_{a_{j+1}}))]$ is at least $\frac{g_r(\frac{2}{3} \dm(k_{a_j}))}{8 c'_2}$ as required.

The case in which there exists a step $k_{a_j}<k_i< k_{a_{j+1}}$ such that that $\dm(k_i) < \frac{2}{3} \dm(k_{a_j})$ is even simpler. In this case, by repeatedly applying Claim \ref{clm:induction-step} we have that: 
\begin{itemize}
\item If $\dm(k_l)<\frac{2}{3} \dm(k_{a_j})$, then $\dm(k_{l+1})<\frac{2}{3} \dm(k_{a_j})+\delta(k_l)$.
\item If $\dm(k_l)\geq \frac{2}{3} \dm(k_{a_j})$, then $\dm(k_{l+1})<\dm(k_{l})$.
\end{itemize}
Thus, by induction we have that for any $l>i$, $\dm(k_l) < \frac{2}{3} \dm(k_{a_j}) + \delta(k_{a_j})$ and hence $\dm(k_{a_{j+1}}) < \frac{2}{3} \dm(k_{a_j}) + \delta(k_{a_j})$.  Note, that for this case it is possible that for some $l$, $q_p(S(k_l)) > \tau_p$ however by our 
choice of $k_\eps$, it would still be the case that $\dm(k_l) < \frac{2}{3} \cdot \dm(k_{a_{j}}) + \delta(k_{a_{j}})$.

%

The proof of the second statement is identical to the second case of the first statement, for any $l>i$:
\begin{itemize}
\item If $\dm(k_l)<\frac{3}{4} \eps$, then $\dm(k_{l+1})<\frac{3}{4} \eps+\delta(k_l)$.
\item If $\dm(k_l)\geq \frac{3}{4} \eps$, then $\dm(k_{l+1})<\dm(k_{l})$.
\end{itemize}
Thus, by induction we have that $\dm(k_{a_{j+1}}) < \frac{3}{4} \eps + \delta(k_{a_j})$.

Lastly, observe that in the proof we basically applied  Claim \ref{clm:induction-step} $a_{j+1}-a_j$ times, hence by taking a union bound the assertion of the proposition holds with probability of at least $1-(a_{j+1}-a_j)e^{-\Theta(k_{a_j}^{3/5})}$.
\end{proof}

Finally we are ready to complete the proof of Theorem \ref{thm-general-converge}. To this end, 
we do an induction over the series $k_{a_j}$. As long as $\dm(k_{a_j})>\frac{3}{4} \eps$ we can apply Proposition \ref{prop-group-step} repeatedly and get that $\dm(k_{a_{j+1}})< \dm(k_{a_j}) - \frac{g_r(\dm(k_{a_j})/2)}{8 c'_2} $. In particular, as long as $\dm(k_{a_{j'}}) > \frac{3}{4} \dm(k_{a_j})$, the distance to $\tau_p$ is reduced by at least $\frac{g_r(\dm(k_{a_j})/2)}{8 c'_2}$ hence after at most $\frac{8 c'_2}{g_r(\dm(k_{a_j})/2)}$ iterations the $p$-quantile is closer to $\tau_p$ by a factor of at least $3/4$. We can continue doing so till we reach a distance of $\frac{3}{4}\eps$. Let $k'_\eps > k_1 \geq k_\eps$ be such that $\dm(k'_\eps) \leq \frac{3}{4}\eps$. For any $k_{a_{j+1}}>k'_\eps$, the second statement in Proposition \ref{prop-group-step} tells us that $\dm(k_{a_{j+1}})<\frac{3}{4}\eps+\delta(k_\eps) < \eps$. The proof is then completed by noticing that Claim \ref{clm:induction-step} which we used in the induction actually guarantees that for any $k'>k_{a_{j+1}}$ we have that $\dm(k')<\frac{3}{4}\eps+2\delta(k_\eps) < \eps$. To prove the theorem we repeatedly applied Proposition \ref{prop-group-step}, hence the theorem holds with probability $1- \sum_{k=k_{\eps}}^{\infty} e^{-\Theta(k^{3/5})}$.

\section{Proofs from Section \ref{sec-fixed}} \label{app-fixed}

\prevproof{Proposition}{prop:consensus}{
Recall that we want to show that for the consensus admission rule (i.e., $\ell=k$), if $n \geq 3$ and the initial configuration has diameter
$D=x_n-x_1$, then  every new element that will be added to the
committee  during the process is at least $x_1-D$ and at most
$x_n+D$.

It suffices to show we never get an element bigger
than $x_n+D$, as by symmetry the same argument implies we do not get
one below $x_1-D$.

We claim that during the process the quantity
$x_n+x_2-x_1$ does not increase. To prove it  let the configuration
before a step be $x_1<x_2 \ldots <x_n$ and the configuration after
a step in which $y$ has been added be $x_1'<x_2' < \ldots <x_n'$.
We have to show that $x_n'+x_2'-x_1' \leq x_n+x_2-x_1$.

Since $y$ cannot replace $x_i$ for $2\leq i \leq n-1$,
There are only two possible cases.

Case 1: The new element $y$ replaced $x_1$. In this case
$x_1 \leq y \leq 2x_2-x_1$. If
$y < x_n$ then if $y<x_2$ the claim is trivial and otherwise
$x'_n=x_n$, $x_1'= x_2 $ and
$x_2'\leq y \leq 2x_2-x_1$, implying
that
$$
x_n'+x_2'-x_1' \leq x_n + (2x_2-x_1)-x_2=x_n+x_2-x_1,
$$
as needed.

If $y>x_n$ then $x_1'=x_2, x_2'=x_3 \leq x_n$ and $x_n'=y
\leq 2x_2-x_1$ and hence
$$
x_n'+x_2'-x_1' \leq (2x_2-x_1)+x_n -x_2=x_n+x_2-x_1.
$$

Case 2: The new element $y$ replaced $x_n$. In this case
$2x_{n-1}-x_n \leq y \leq x_n$. If
$y >x_1$ then $x'_n \leq x_n$ and $x'_2-x_1' \leq x_2-x_1$ implying
the desired result. Otherwise
$x_1'=y \geq 2x_{n-1}-x_n$, $x_2'=x_1$ and $x_n'=x_{n-1}$. Hence
$$
 x_n'+x_2'-x_1' \leq x_{n-1}+x_1 -(2x_{n-1}-x_n)=
x_n-(x_{n-1}-x_1) \leq x_n \leq x_n+x_2-x_1.
$$

This completes the proof of the claim.

Note, now, that in the beginning $x_n+x_2-x_1 \leq x_n +D$.
Therefore, if
at some point during the process we have a configuration
$(x''_1, x''_2, \ldots x''_n)$, then $x''_n \leq x''_n +x''_2 -x''_1
\leq x_n+D$, by the claim.
This completes the proof.

\noindent Remark: It is easy to see that if $n=2$ this is not true and that the above
is tight, namely we can add elements as close as we wish to
$x_n+D$ or to $x_1-D$ with appropriate initial configurations.
}

\prevproof{Lemma}{lem:shift}{
Let the configuration before a step be $x = (x_1, x_2, \ldots, x_{2k+1})$, and the configuration after a step in which $y$ has been added and $x_i$ been dropped be $x' = (x_1', x'_2, \ldots, x'_{2k+1})$. Recall that we want to show that if the median moved to the right, then the sum of distances from the median has decreased by at least $2\sum_{j=k-\ell+2}^{k} d(x_j,x'_j) + d(x_{k+1},x'_{k+1})$.

For the conditions of the lemma to hold, it must be that $y > x_{k+1}$, $x_i < x_{k-\ell+2}$ and $x_{k-\ell+2}$ (weakly) prefers $y$ to $x_i$ (i.e., $d(x_i,x_{k-\ell+2}) \geq d(x_{k-\ell+2},y)$).
Let $S = \sum_{j=1}^{2k+1} d(x_j,x_{k+1})$ be the sum of distances from the median in configuration $x$.
We distinguish between two cases.

Case 1: $y$ is the new median.
Let $S'$ denote the sum of distances from the (new) median in $x'$.
Since the distance between $x_{k+1}$ and $y$ is added to $k$ elements and subtracted from $k$ elements, we have
$S' = S - d(x_i,x_{k+1})$.
It holds that $2\sum_{j=k-\ell+2}^{k} d(x'_j,x_j) + d(x'_{k+1},x_{k+1}) = 2 d(x_{k-\ell+2},x_{k+1}) + d(x_{k+1},y)$.
Therefore, to establish the assertion of the lemma we need to show that $d(x_i,x_{k+1}) \geq 2 d(x_{k-\ell+2},x_{k+1}) + d(x_{k+1},y)$, or equivalently (by substituting $d(x_i,x_{k+1}) = d(x_i,x_{k-\ell+2})+d(x_{k-\ell+2},x_{k+1})$) that $d(x_i,x_{k-\ell+2}) \geq d(x_{k-\ell+2},x_{k+1}) + d(x_{k+1},y)$.
But the right hand side is exactly $d(x_{k-\ell+2},y)$, which is at most $d(x_i,x_{k-\ell+2})$ by the fact that $x_{k-\ell+2}$ has chosen $y$ over $x_i$, as desired.

Case 2: $x_{k+2}$ is the new median (here, $y > x_{k+2}$).
Let $S'$ denote the sum of distances from the (new) median in $x'$.
Since the distance between $x_{k+1}$ and $x_{k+2}$ is added to $k$ elements and subtracted from $k$ elements, we have
$S' = S - d(x_i,x_{k+1}) + d(y,x_{k+2})$.
It holds that $2\sum_{j=k-\ell+2}^{k} d(x'_j,x_j) + d(x'_{k+1},x_{k+1}) = 2 d(x_{k-\ell+2},x_{k+1}) + d(x_{k+1},x_{k+2})$.
Therefore, to establish the assertion of the lemma we need to show that $d(x_i,x_{k+1}) - d(y,x_{k+2}) \geq 2 d(x_{k-\ell+2},x_{k+1}) + d(x_{k+1},x_{k+2})$, or equivalently (by substituting $d(x_i,x_{k+1}) = d(x_i,x_{k-\ell+2})+d(x_{k-\ell+2},x_{k+1})$ and rearranging) that $d(x_i,x_{k-\ell+2}) \geq d(x_{k-\ell+2},x_{k+1}) + d(x_{k+1},x_{k+2}) + d(x_{k+2},y)$.
But the right hand side is exactly $d(x_{k-\ell+2},y)$, which is at most $d(x_i,x_{k-\ell+2})$ by the fact that $x_{k-\ell+2}$ has chosen $y$ over $x_i$, as desired.

\begin{remark} \label{rem-even}
If $n=2k$ (i.e., $n$ is even), then there are two medians. It is easy to verify that the sum of distances of the points from the left median equals the sum of their distances from the right median (as the difference is that in the distance from the left median, the distance between the right and left medians is counted for all the points to the right of the left median, and in the distance from the right median, it is counted for all the points to the left of the right median. In both cases this distance is counted $k$ times). We leverage this observation to show, in a similar way to the odd case, that if the left median moves right, then the sum of distances from (either one of) the medians decreases by at least $2\sum_{j=k-\ell+1}^{k} d(x'_j,x_j)$.
\end{remark}
}

\prevproof{Proposition}{prop-no-immunity}{
Recall tat we want to show for $n=4k+3$ and a majority of at most $3k+2$ is required to replace an existing member, then for any initial configuration no element has immunity.

We describe the process that removes the $2k+2$ points up to, and including, the median.
An analogous process can be applied to the points to the right of the median, showing that all points can be removed.
We describe the process in stages, each stage $j=1, \ldots, 2k+2$ handles points $x_1, \ldots, x_j$.
For ease of presentation, when clear in the context we denote the $i$th point of a given configuration (not necessarily the initial one) by $x_i$.

For every $j=1, \ldots, 2k+2$, stage $j$ begins in a configuration where the smallest $j$ points form an arithmetic progression, and ends in a configuration where $x_j$ has been removed and the smallest $j+1$ points form an arithmetic progression.
This is done as follows.
Suppose the smallest $j$ points form an arithmetic progression.
We replace $x_1$ with $z_1 = x_j - \delta_j$, where $\delta_j$ divides the difference between $x_{j+1}$ and $x_j$ and is smaller than $\frac{x_{j+1}-x_j}{j+1}$. We then replace $x_2$ by $z_2=z_1-\delta_j$, and so on, until the smallest $j$ points form an arithmetic progression with difference $\delta_j$.
In every such replacement there are at least $3k+2$ members who prefer the new point to the old one --- these are the $2k+1$ members to the right of the median, the median itself, and at least $k$ out of the $2k+1$ members to the left of the median.
For simplicity, let $x_1, \ldots, x_j$ denote the new locations, i.e., $x_i=z_{j-i}$
Now we would like the $j$ point to progress to $x_{j+1}$.
Now, replace $x_1$ by $z_1=x_j + \delta_j$ (again, following a similar argument to the one above, at least $3k+2$ points prefer $x'_1$ to $x_1$). We again rename  $x_1, \ldots, x_j$ to denote the new locations.
We continue the process until we reach $x_{j+1}$ and
the $j+1$ left-most points form an arithmetic progression.
By the choice of $\delta_j$ all points including $x_j$ have been replaced, and are still all smaller than $x_{j+1}$.
Repeat this process until we reach the median and replace it.
Then we follow the same process from the right side.

Note that in the beginning we can make the process simpler, and
only as we get close to the median we have to be careful, but we
prefer to keep the description uniform, for ease of presentation.

\noindent Remark: Note that the process above establishes a stronger property.
That is, not only can we ensure to omit the element $x_i$ for every
fixed $i$, we can omit {\em all} elements of the committee together in one process.
}
}

%

\end{document}